\documentclass[preprint,11pt]{elsarticle}

\usepackage{amsmath,amssymb,amsthm}
\usepackage[margin=1.0in]{geometry}
\usepackage{tikz}
\usepackage{hyperref}
\usepackage[dvipsnames]{xcolor}
\usepackage{comment}
\usepackage{subcaption}
\usepackage{graphicx}
\usepackage[normalem]{ulem}
\usepackage[pagewise]{lineno}
\usepackage{mathtools}
\mathtoolsset{showonlyrefs}

\newtheorem{theorem}{Theorem}[section]
\newtheorem{lemma}[theorem]{Lemma}
\newtheorem{corollary}[theorem]{Corollary}
\newtheorem{definition}[theorem]{Definition}
\newtheorem{remark}[theorem]{Remark}

\newtheorem{proposition}[theorem]{Proposition}
\newtheorem{assumption}[theorem]{Assumption}

\numberwithin{equation}{section}

\journal{@@}

\begin{document}

\begin{frontmatter}

\title{Endogenous Reinsurance Pricing in Large Competitive Insurance Markets: Finite-Player and Mean Field Analysis}

\author[inst1]{Ruimeng Hu}
\ead{rhu@ucsb.edu}

\author[inst2]{Byungdoo Kong\corref{cor1}}
\ead{bkong@ucsb.edu}

\cortext[cor1]{Corresponding author}

\affiliation[inst1]{
  organization={Department of Mathematics, Department of Statistics and Applied Probability, University of California},
  city={Santa Barbara},
  country={USA}
}

\affiliation[inst2]{
  organization={Department of Statistics and Applied Probability, University of California},
  city={Santa Barbara},
  country={USA}
}

\begin{abstract}
We study endogenous reinsurance pricing in a competitive insurance market with one strategic reinsurer and many heterogeneous insurers. The reinsurer acts as a Stackelberg leader by choosing a common premium rate and an investment strategy, while insurers decide how much risk to retain and how to invest, taking into account their own performance, their performance relative to the insurer population, and common insurance-claim and financial-market noise. This creates a feedback loop absent from standard reinsurance models with exogenous premiums: a premium change affects insurers directly through the cost of reinsurance, and indirectly through the population's aggregate exposure to common insurance-claim risk.

For a fixed premium, we characterize the insurers' equilibrium retention through a scalar fixed point and establish its monotone premium response. This characterization reveals a spillover mechanism generated by relative performance concerns and leads to a threshold structure in which insurers move from full cession to partial retention and then to full retention as the premium increases. Using this structure, we reduce the reinsurer's premium problem to a one-dimensional optimization over a compact premium interval and characterize Stackelberg equilibria in both finite-player and mean field models. In the finite-player case, we develop an efficient threshold continuation procedure that determines equilibrium premiums without enumerating all retention configurations. We also prove convergence from finite-player equilibria to mean field equilibria without requiring the mean field equilibrium premium to be unique. Numerical illustrations show how relative performance concerns amplify spillover effects and can induce retention even when reinsurance remains actuarially favorable. They also demonstrate that Stackelberg equilibria need not be unique in either setting.
\end{abstract}

\begin{keyword}
Reinsurance and investment \sep Stackelberg--Nash game \sep Stackelberg mean field game \sep Relative performance \sep Common noise
\end{keyword}

\end{frontmatter}

\section{Introduction} 
\label{sec:intro}
Insurers face underwriting risk from insurance liabilities and financial market risk from their investments. Reinsurance allows an insurer to transfer part of its claim risk to a reinsurer, while investment allocates surplus in financial markets to support profitability and solvency. The resulting optimal reinsurance and investment problem is a well-studied and active topic in actuarial science. A key feature of reinsurance, however, is that it involves interaction between two parties: the insurer chooses how much risk to cede, while the reinsurer sets the premium charged for the ceded risk. When the reinsurer prices strategically, this premium becomes an endogenous market variable rather than an external input.

Much of the classical literature on optimal reinsurance and investment takes the perspective of a single insurer and treats the reinsurance premium as exogenous. A natural way to endogenize premium formation is to model the interaction between the insurer and the reinsurer as a Stackelberg game. In this direction, \cite{ChenShen2019} study a stochastic Stackelberg differential reinsurance game under a time-inconsistent mean--variance criterion, while \cite{LiYoung2022} analyze a mean--variance Stackelberg reinsurance game with a random horizon. These works establish the insurer–reinsurer Stackelberg paradigm in continuous time and show how equilibrium reinsurance decisions depend on the premium policy chosen by the reinsurer. Their focus, however, is mainly on a bilateral market with one insurer and one reinsurer, rather than on strategic interaction among many competing insurers.

A separate line of research studies competition among insurers under relative performance concerns. For large populations, the mean field game (MFG) framework provides a natural limiting formulation; the modern theory was initiated independently by \cite{LasryLions2006a,LasryLions2006b,LasryLions2007} and by \cite{HuangMalhameCaines2006,HuangCainesMalhame2007}. Finite-player and mean field models have been developed to describe large insurance markets in which insurers interact through terminal wealth or related aggregate quantities. For example, \cite{BoWangZhou2024} analyze competitive insurers under exponential utility with relative performance and establish both a finite $n$-insurer constant Nash equilibrium and a corresponding mean field equilibrium, together with a convergence result. Related finite-player and mean field studies include \cite{GuanHu2022} and \cite{HeEtAl2023Robust}, which incorporate reinsurance-investment decisions, common and idiosyncratic sources of risk, and large-population limits. These papers show that relative performance and mean field interactions can substantially affect insurers' equilibrium behavior. In these models, however, premium formation is typically exogenous, or the reinsurer's Stackelberg pricing problem is not the main focus.

More closely related to our paper are studies in which both insurers and reinsurers are strategic players in the reinsurance market. In particular, \cite{BaiEtAl2022Hybrid} study a hybrid stochastic differential reinsurance-investment game with one reinsurer and two competing insurers, in which the reinsurer acts as a Stackelberg leader and the insurers interact through a non-zero-sum game with relative performance. This paper is especially relevant because it combines the reinsurer's endogenous premium choice with strategic interaction among insurers. Still, its focus is on a small finite-player market with one reinsurer and two competing insurers, together with bounded memory effects. More recently, \cite{LiangXiaZou2024} and \cite{CaoEtAl2025} study richer reinsurance market games with multiple reinsurers, Stackelberg contracting, and competition or cooperation among reinsurers, broadening the game-theoretic analysis of reinsurance markets. A closely related recent working paper, \cite{ZhangEtAl2024ReinsuranceMFG}, also studies a Stackelberg mean field formulation with one reinsurer and competitive insurers. Our paper differs in its joint finite-player and mean field analysis, explicit premium response and threshold structure, and forward convergence from finite-player equilibria to mean field ones.

Against this background, we study a reinsurance-investment market with one strategic reinsurer and many heterogeneous insurers in both finite-player and mean field settings. The insurers have relative performance concerns and are exposed to common insurance-claim and financial-market noise. This setting bridges a gap between two strands of the literature: Stackelberg reinsurance models, where the premium is chosen strategically but insurer competition is limited, and competitive insurer models, where relative performance and large population effects are present but premium formation is often exogenous.

Despite this richer market structure, the equilibrium remains analytically tractable under constant strategies. The key reduction is a scalar fixed point for the average common-insurance exposure, which determines the insurers' equilibrium retention response and enters the reinsurer's reduced premium problem. This reduction also reveals a market spillover channel: premium changes affect insurers directly through the premium-loss spread and indirectly through the induced change in average common-insurance exposure.

More specifically, our contributions are threefold. First, we formulate and analyze a Stackelberg reinsurance-investment model with one reinsurer and heterogeneous insurers in both finite-player and mean field settings. For a fixed premium, we characterize the follower (insurers') equilibrium by a scalar fixed point for the average common-insurance exposure. This gives explicit equilibrium retention responses and reveals a spillover mechanism generated by relative performance concerns: insurers respond not only to the direct premium-loss spread, but also to the induced change in the population's average common-insurance exposure. This mechanism explains how some insurers may begin to retain part of their risk even when reinsurance remains actuarially favorable.

Second, we reduce the reinsurer's problem to a one-dimensional optimization over a compact interval, and establish the existence and explicit characterization of Stackelberg equilibria in both finite-player and mean field models. This reduction is possible because the follower response has a threshold structure: as the premium increases, insurers move from full cession to partial retention and then to full retention. In the finite-player case, we further develop a threshold continuation procedure that computes equilibrium premiums without the exponential-time brute-force enumeration of retention configurations, reducing the computational cost from exponential to quadratic in the number of insurers.

Third, we establish finite-player-to-mean-field convergence. Under convergence of the empirical type distributions, the finite-player follower responses, the reinsurer's premium objective, and the optimal premium sets converge to their mean field counterparts. Since the Stackelberg mean field equilibrium need not be unique, we formulate convergence at the level of optimal premium sets and subsequential empirical type-control distributions: every subsequential limit of finite-player equilibrium type-control distributions is induced by a Stackelberg mean field equilibrium. This provides a rigorous forward convergence justification for the mean field approximation in the present reinsurance model.

The remainder of the paper is organized as follows. Section~\ref{sec:finite_game} introduces the finite-player Stackelberg--Nash reinsurance-investment game, characterizes the insurers' equilibrium response for a given premium, and analyzes the reinsurer's induced optimization problem. Section~\ref{sec:smfg} develops the corresponding Stackelberg mean field game and characterizes the mean field equilibrium. Section~\ref{sec:convergence} studies the large-population limit and establishes the connection between finite-player equilibria and their mean field counterparts. Section~\ref{sec:numerics} gives numerical illustrations of the effects of relative performance concerns on equilibrium retention and premium decisions. We conclude in Section~\ref{sec:conclusion}.

\section{Finite-Player Stackelberg--Nash Game}
\label{sec:finite_game}
\subsection{Model Formulation and Equilibrium Concept}
\label{subsec:finite_formulation}
We consider a finite-player Stackelberg--Nash game with one reinsurer and $N$ heterogeneous insurers. The reinsurer acts as the Stackelberg leader and first chooses a common reinsurance premium rate for all insurers, along with its own investment control. The insurers then act as followers: given the reinsurer's choice, each insurer chooses its retention level and investment control, taking into account its performance relative to the other insurers. Both the insurance and financial markets are subject to common shocks.

Let $T>0$ and let $(\Omega,\mathcal{F},\mathbb{F}=(\mathcal{F}_{t})_{0\le t\le T},\mathbb{P})$ support mutually independent standard Brownian motions $W^{0},W^{1},\dots,W^{N},B^{0},B^{1},\dots,B^{N},B^{L}$. Here, $W^{0}$ represents the common insurance-claim noise affecting all insurers, while $W^{i}$ represents the idiosyncratic insurance-claim noise of insurer $i$. The process $B^{0}$ represents the common financial-market noise shared by the insurers and the reinsurer, while $B^{i}$ and $B^{L}$ represent the idiosyncratic financial-market noises faced by insurer $i$ and the reinsurer, respectively.

For each insurer $i=1,\dots,N$, we model insurer $i$'s aggregate claim process by a diffusion approximation; see, for example, \cite{Iglehart1969, Browne1995, AsmussenTaksar1997}. Specifically,
\begin{equation*}
dC_{t}^{i}=a^{i}\,dt-v^{i,0}\,dW_{t}^{0}-v^{i}\,dW_{t}^{i},
\end{equation*}
where $a^{i}>0$ is the expected claim rate, $v^{i,0}>0$ and $v^{i}>0$ are the common and idiosyncratic insurance-claim volatility coefficients, respectively. The common insurance-claim noise, $W^{0}$, captures market-wide insurance shocks that affect all insurers.

Let $\eta^{i}>0$ be insurer $i$'s safety loading, so the direct insurance premium rate is $(1+\eta^{i})a^{i}$ under the expected value premium principle. At time $t$, insurer $i$ chooses a proportional retention level $q_{t}^{i}\in[0,1]$, so that $1-q_{t}^{i}$ is the ceded proportion. The insurer's insurance surplus is therefore
\begin{equation*}
dY_{t}^{i}=(1+\eta^{i})a^{i}\,dt-dC_{t}^{i}-(1-q_{t}^{i})\big(p_{t}\,dt-dC_{t}^{i}\big)=\big(\eta^{i}a^{i}-(1-q_{t}^{i})(p_{t}-a^{i})\big)\,dt+q_{t}^{i}v^{i,0}\,dW_{t}^{0}+q_{t}^{i}v^{i}\,dW_{t}^{i}.
\end{equation*}
Here, $p_t\ge0$ is the common reinsurance premium rate chosen by the reinsurer and charged to all insurers at time $t$. 

\begin{remark}
Although reinsurance treaties in practice may involve cedent-specific negotiations, we abstract from such bilateral negotiations and model the reinsurer as setting a common unit reinsurance rate applied to ceded risk. This common pricing rule does not eliminate insurer heterogeneity. Even at the common premium rate, differences in insurer-specific parameters can lead to different retention levels, which, in turn, result in different ceded proportions and total reinsurance payments across insurers. We use this formulation to study how a reinsurance premium set by the reinsurer affects a heterogeneous population of insurers and how the equilibrium premium is determined endogenously through a Stackelberg interaction.
\end{remark}

We now turn to the reinsurer's insurance surplus, which is determined by the reinsurance premiums paid by the insurers and the claims ceded by them:
\begin{equation}
\label{eq:surplus_reinsurer}
dY_{t}^{L}=\frac{1}{N}\sum_{i=1}^{N}(1-q_{t}^{i})\big(p_{t}\,dt-dC_{t}^{i}\big)=\frac{1}{N}\sum_{i=1}^{N}(1-q_{t}^{i})\big((p_{t}-a^{i})\,dt+v^{i,0}\,dW_{t}^{0}+v^{i}\,dW_{t}^{i}\big).
\end{equation}
Here, we normalize the reinsurer's insurance surplus by $1/N$. This normalization should be understood as a market-size convention, rather than a restriction on the amount of reinsurance accepted. The model keeps the total insurance market represented by the population at a fixed scale as $N$ grows; increasing $N$ means that this market is divided among more insurers. In the equal-size case, each insurer therefore has relative weight $1/N$ in the reinsurer's book. Each insurer still cedes the proportion $1-q_t^i$ of its own claim risk, while the reinsurer's surplus is measured in normalized market units. This convention keeps the reinsurer's surplus of order one in the large-population limit.

\begin{remark}
The normalization in \eqref{eq:surplus_reinsurer} does not affect the definition or solvability of the finite-player problem. In particular, removing the factor $1/N$ changes only the scale of the reinsurer-side quantities, and after the corresponding rescaling, the finite-player arguments in Section~\ref{sec:finite_game} remain valid. The choice of $1/N$ corresponds to an equal-weight normalization of the ceded portfolio, in which each insurer has the same relative size in the reinsurer's normalized book. More generally, one could replace $1/N$ with insurer-specific weights to model insurers with different
relative market sizes. We use equal-weight normalization because it is the simplest formulation and leads to a well-defined large-population limit. 
\end{remark}

We next specify the financial market in which insurers and the reinsurer invest. Without loss of generality, we assume that the risk-free rate is zero, and model the risky asset prices $S_t^i$, available to insurer $i$, and $S_t^L$, available to the reinsurer, by
\begin{equation*}
\frac{dS_{t}^{i}}{S_{t}^{i}}=\mu^{i}\,dt+\sigma^{i,0}\,dB_{t}^{0}+\sigma^{i}\,dB_{t}^{i},\qquad\frac{dS_{t}^{L}}{S_{t}^{L}}=\mu^{L}\,dt+\sigma^{L,0}\,dB_{t}^{0}+\sigma^{L}\,dB_{t}^{L},
\end{equation*}
where $\mu^{i}$ and $\mu^{L}$ are the expected (excess) returns, while $\sigma^{i,0},\sigma^{i}>0$ and $\sigma^{L,0},\sigma^{L}>0$ are the common and idiosyncratic financial-market volatility coefficients. The common financial-market noise $B^{0}$ captures market-wide financial shocks affecting all insurers and the reinsurer.

Let $\pi_{t}^{i}\in\mathbb{R}$ and $\pi_{t}^{L}\in\mathbb{R}$ be the dollar amounts invested in the corresponding risky assets by insurer $i$ and the reinsurer at time $t$. Then their wealth processes satisfy
\begin{align}
\label{eq:dynamics_insurer}
dX_{t}^{i} & = dY_{t}^{i}+\pi_{t}^{i}\,\frac{dS_{t}^{i}}{S_{t}^{i}}
\nonumber\\
& = \big(\eta^{i}a^{i}-(1-q_{t}^{i})(p_{t}-a^{i})+\pi_{t}^{i}\mu^{i}\big)\,dt+q_{t}^{i}v^{i,0}\,dW_{t}^{0}+q_{t}^{i}v^{i}\,dW_{t}^{i}+\pi_{t}^{i}\sigma^{i,0}\,dB_{t}^{0}+\pi_{t}^{i}\sigma^{i}\,dB_{t}^{i}, 
\\
\label{eq:dynamics_reinsurer}
dX_{t}^{L} & = dY_{t}^{L}+\pi_{t}^{L}\,\frac{dS_{t}^{L}}{S_{t}^{L}}
\nonumber\\
& = \Big(\frac{1}{N}\sum_{i=1}^{N}(1-q_{t}^{i})(p_{t}-a^{i})+\pi_{t}^{L}\mu^{L}\Big)\,dt+\frac{1}{N}\sum_{i=1}^{N}(1-q_{t}^{i})v^{i,0}\,dW_{t}^{0}+\frac{1}{N}\sum_{i=1}^{N}(1-q_{t}^{i})v^{i}\,dW_{t}^{i}
\nonumber\\
& \quad + \pi_{t}^{L}\sigma^{L,0}\,dB_{t}^{0}+\pi_{t}^{L}\sigma^{L}\,dB_{t}^{L},
\end{align}
with initial conditions $X_{0}^{i}=x^{i}$ and $X_{0}^{L}=x^{L}$. 

We now define the players' objective functions. Let
\begin{equation*}
\overline{X}_{T}^{N}:=\frac{1}{N}\sum_{j=1}^{N} X_{T}^{j}
\end{equation*}
be the average terminal wealth of the insurer population. In line with relative performance criteria commonly used in competitive portfolio and insurance games, each insurer evaluates its terminal wealth relative to this population average; see, for example, \cite{EspinosaTouzi2015,LackerZariphopoulos2019,BoWangZhou2024}. Thus, insurer $i$'s objective depends not only on its own terminal wealth but also on the performance of the other insurers:
\begin{equation}
\label{eq:Ji}
J^{i}\big((q^{i},\pi^{i});(q^{-i},\pi^{-i}),(p,\pi^{L})\big)=\mathbb{E}\Big[-\exp\Big(-\frac{1}{\gamma^{i}}\big(X_{T}^{i}-\theta^{i}\overline{X}_{T}^{N}\big)\Big)\Big],
\end{equation}
where $(q^{-i},\pi^{-i})$ denotes the collection of insurer controls excluding insurer $i$, $\gamma^{i}>0$ is insurer $i$'s risk-tolerance parameter, and $\theta^{i}\in[0,1]$ is the relative performance parameter. A larger value of $\theta^{i}$ means that insurer $i$ places more weight on performance relative to the other insurers, while $\theta^{i}=0$ corresponds to no relative performance concern. We exclude the degenerate case $N=\theta^1=1$. The reinsurer also evaluates terminal wealth using exponential utility:
\begin{equation}
\label{eq:JL}
J^{L}\big((p,\pi^{L});(q^{1},\pi^{1}),\dots,(q^{N},\pi^{N})\big)=\mathbb{E}\Big[-\exp\Big(-\frac{1}{\gamma^{L}}X_{T}^{L}\Big)\Big],
\end{equation}
where $\gamma^{L}>0$ is the reinsurer's risk-tolerance parameter.

We now describe the information structure and order of decisions. All model parameters and the decision structure are assumed to be common knowledge. The decision-making is formulated as a Stackelberg game. The reinsurer acts as the Stackelberg leader and first chooses a premium-investment pair $(p,\pi^L)$ to maximize its objective~\eqref{eq:JL}. After observing the announced leader control, the insurers simultaneously choose their retention-investment pairs $(q^i,\pi^i)$, $i=1,\ldots,N$, as functions of $(p, \pi^L)$, and play an $N$-player Nash game~\eqref{eq:Ji}. The reinsurer anticipates this follower equilibrium response when solving its own optimization problem.

To obtain an explicit characterization of the equilibrium, we restrict attention to constant controls:
\begin{equation}
\label{def:constcontrol}
q_{t}^{i}\equiv q^{i}\in[0,1],\qquad\pi_{t}^{i}\equiv\pi^{i}\in\mathbb{R},\qquad p_{t}\equiv p\ge0,\qquad\pi_{t}^{L}\equiv\pi^{L}\in\mathbb{R},\qquad i=1,\dots,N.
\end{equation}
This restriction ensures the tractability of the equilibrium and allows for the closed-form characterization developed in the subsequent sections. We now define the corresponding notion of equilibrium.

\begin{definition}[Constant Stackelberg--Nash equilibrium]
\label{def:CSNE}
Fix a constant leader control $(p,\pi^{L})\in[0,\infty)\times\mathbb{R}$. A follower profile $(q^{1},\pi^{1}),\dots,(q^{N},\pi^{N})$ is called a \emph{constant follower Nash equilibrium (NE) response to $(p,\pi^{L})$} if, for each insurer $i=1,\dots,N$,
\begin{equation*}
J^{i}\big((q^{i},\pi^{i});(q^{-i},\pi^{-i}),(p,\pi^{L})\big)\ge J^{i}\big((\tilde{q}^{i},\tilde{\pi}^{i});(q^{-i},\pi^{-i}),(p,\pi^{L})\big)
\end{equation*}
for all constant controls $\tilde q^{i}\in[0,1]$ and $\tilde \pi^{i}\in\mathbb{R}$. 

A constant control profile $(p^{*},\pi^{L,*},q^{1,*},\pi^{1,*},\dots,q^{N,*},\pi^{N,*})$ is called a \emph{constant Stackelberg--Nash equilibrium} if $(q^{1,*},\pi^{1,*}),\dots,(q^{N,*},\pi^{N,*})$ is a constant follower Nash equilibrium response to $(p^{*},\pi^{L,*})$, and the leader control $(p^{*},\pi^{L,*})$ is optimal for the reinsurer's objective \eqref{eq:JL} given this follower equilibrium response.
\end{definition}

\begin{remark}
The restriction to a constant premium control $p$ and constant retention controls $q^{1},\dots,q^{N}$ can also be interpreted as a contract-period benchmark: the reinsurance premium rate and the ceding policies are fixed at treaty inception or renewal and are revised only periodically, rather than continuously over time. In this sense, the formulation can be viewed as a stylized benchmark for treaty-style reinsurance and renewal markets. The restriction to constant investment controls $\pi^{i}$ and $\pi^{L}$, on the other hand, is essentially without loss of generality: the optimal investment in a Merton-type problem with exponential utility and constant coefficients is well known to be time-independent.
\end{remark}

The analysis therefore proceeds in two steps. We first fix the leader's premium and characterize the induced follower Nash equilibrium. We then substitute this follower response into the reinsurer's objective and solve the leader's reduced premium problem.

\subsection{Follower Nash Equilibrium Response}
\label{subsec:finite_followers}
We characterize the constant follower Nash equilibrium response for a fixed leader control $(p,\pi^{L})$. Note that the leader's investment control $\pi^{L}$ does not enter the followers' wealth dynamics \eqref{eq:dynamics_insurer} or their objective \eqref{eq:Ji}, and hence the follower equilibrium response depends only on the premium level $p$. We therefore proceed in two steps. First, we characterize the follower equilibrium response for a fixed premium level $p$. We then study how the equilibrium response varies with $p$, which will later serve as a key input for the reinsurer's optimization problem.

Assuming constant controls \eqref{def:constcontrol}, the terminal wealth vector $(X_{T}^{1},\dots,X_{T}^{N})$ is jointly Gaussian. Hence, for each insurer $i$, the random variable $X_{T}^{i}-\theta^{i}\overline{X}_{T}^{N}$ is also Gaussian, and 
\begin{equation*}
\mathbb{E}\Big[-\exp\Big(-\frac{1}{\gamma^{i}}\big(X_{T}^{i}-\theta^{i}\overline{X}_{T}^{N}\big)\Big)\Big]=-\exp\Big(-\frac{1}{\gamma^{i}}\mathbb{E}[X_{T}^{i}-\theta^{i}\overline{X}_{T}^{N}]+\frac{1}{2(\gamma^{i})^{2}}\mathrm{Var}(X_{T}^{i}-\theta^{i}\overline{X}_{T}^{N})\Big).
\end{equation*}
Thus, for fixed $p\ge0$, maximizing the objective \eqref{eq:Ji} is equivalent to maximizing
\begin{equation}
\label{eq:Ji_tilde}
\widetilde{J}^{i}\big((q^{i},\pi^{i});(q^{-i},\pi^{-i})\big):=\mathbb{E}[X_{T}^{i}-\theta^{i}\overline{X}_{T}^{N}]-\frac{1}{2\gamma^{i}}\mathrm{Var}(X_{T}^{i}-\theta^{i}\overline{X}_{T}^{N}),
\end{equation}
and the follower-side game reduces to a static mean--variance game. This is the standard reduction for exponential utility under Gaussian terminal wealth; see, e.g., \cite{GerberPafumi1998Utility}. A direct computation using dynamics \eqref{eq:dynamics_insurer} shows that
\begin{align*}
\mathbb{E}[X_{T}^{i}] & =x^{i}+\big(\eta^{i}a^{i}-(1-q^{i})(p-a^{i})+\pi^{i}\mu^{i}\big)T, \\
\mathrm{Var}(X_{T}^{i})& =\big[(q^{i})^{2}\big((v^{i,0})^{2}+(v^{i})^{2}\big)+(\pi^{i})^{2}\big((\sigma^{i,0})^{2}+(\sigma^{i})^{2}\big)\big]T, \\
\mathrm{Cov}(X_{T}^{i},X_{T}^{j}) & =\big(q^{i}q^{j}v^{i,0}v^{j,0}+\pi^{i}\pi^{j}\sigma^{i,0}\sigma^{j,0}\big)T, \quad \forall i\neq j.
\end{align*}
Hence, the reduced criterion \eqref{eq:Ji_tilde} takes the form
\begin{align}
\label{eq:Jtilde_expanded}
\widetilde{J}^{i} = &\ \mathrm{Const.}+T\Big(1-\frac{\theta^{i}}{N}\Big)\,\Big[(p-a^{i})q^{i}+\frac{1}{\gamma^{i}}\theta^{i}v^{i,0}V_{-i}^{N}\,q^{i}-\frac{1}{2\gamma^{i}}\Big(1-\frac{\theta^{i}}{N}\Big)\big((v^{i,0})^{2}+(v^{i})^{2}\big)(q^{i})^{2}\Big]
\nonumber\\
& + T\Big(1-\frac{\theta^{i}}{N}\Big)\,\Big[\Big(\mu^{i}+\frac{1}{\gamma^{i}}\theta^{i}\sigma^{i,0}\Sigma_{-i}^{N}\Big)\,\pi^{i}-\frac{1}{2\gamma^{i}}\Big(1-\frac{\theta^{i}}{N}\Big)\big((\sigma^{i,0})^{2}+(\sigma^{i})^{2}\big)(\pi^{i})^{2}\Big],
\end{align}
where  ``Const.'' summarizes all terms independent of insurer $i$'s own control $(q^i,\pi^i)$, and 
\begin{equation*}
V_{-i}^{N}:=\frac{1}{N}\sum_{j\neq i}q^{j}v^{j,0},\qquad\Sigma_{-i}^{N}:=\frac{1}{N}\sum_{j\neq i}\pi^{j}\sigma^{j,0}.
\end{equation*}

For fixed controls of the other insurers, the reduced criterion \eqref{eq:Jtilde_expanded} separates into a retention term in $q^i$ and an investment term in $\pi^i$, and the premium level $p$ appears only in the retention term. Hence, insurer $i$'s investment response is independent of both the reinsurance premium and its retention decision. We now record the unique follower equilibrium response in the following proposition.

\begin{proposition}[Follower NE for fixed premium]
\label{prop:follower_response}
For fixed $p\ge0$, the followers' game admits a unique constant follower Nash equilibrium response $(\widehat{q}^{1}(p),\widehat\pi^{1}),\dots,(\widehat{q}^{N}(p),\widehat\pi^{N})$. Define the average common-insurance and common-financial exposures associated with the equilibrium by
\begin{equation}
\label{eq:Vhat_Sigmahat_def}
\widehat{V}^{N}(p):=\frac{1}{N}\sum_{j=1}^{N}\widehat{q}^{j}(p)\,v^{j,0},\qquad\widehat{\Sigma}^{N}:=\frac{1}{N}\sum_{j=1}^{N}\widehat{\pi}^{j}\sigma^{j,0},
\end{equation}
and define, for each insurer $i$,
\begin{equation}
\label{eq:retention_incentive}
I^{i}(p,V):=\frac{(p-a^{i})+\frac{1}{\gamma^{i}}\theta^{i}v^{i,0}V}{\frac{1}{\gamma^{i}}\big[(v^{i,0})^{2}+(v^{i})^{2}\big(1-\theta^{i}/N\big)\big]},\qquad p\ge0,\ V\in\mathbb{R}.
\end{equation}
Then, for each insurer $i=1,\dots,N$,
\begin{equation}
\label{eq:pihat}
\widehat{\pi}^{i}=\frac{\mu^{i}+\frac{1}{\gamma^{i}}\theta^{i}\sigma^{i,0}\widehat{\Sigma}^{N}}{\frac{1}{\gamma^{i}}\big[(\sigma^{i,0})^{2}+(\sigma^{i})^{2}(1-\theta^{i}/N)\big]},\qquad\widehat{\Sigma}^{N}=\frac{\frac{1}{N}\sum_{j=1}^{N}\frac{\gamma^{j}\sigma^{j,0}\mu^{j}}{(\sigma^{j,0})^{2}+(\sigma^{j})^{2}(1-\theta^{j}/N)}}{1-\frac{1}{N}\sum_{j=1}^{N}\frac{\theta^{j}(\sigma^{j,0})^{2}}{(\sigma^{j,0})^{2}+(\sigma^{j})^{2}(1-\theta^{j}/N)}},
\end{equation}
and
\begin{equation}
\label{eq:qhat_projection}
\widehat{q}^{i}(p)=\Pi_{[0,1]}\big(I^{i}(p,\,\widehat{V}^{N}(p))\big),
\end{equation}
where $\Pi_{[0,1]}:\mathbb{R}\to[0,1]$ denotes projection onto $[0,1]$. The average common-insurance exposure $\widehat V^{N}(p)$ is determined as the unique solution of the fixed-point equation for $V$:
\begin{equation}
\label{eq:fixed_point_V}
V=\frac{1}{N}\sum_{j=1}^{N}v^{j,0}\Pi_{[0,1]}\big(I^{j}(p,V)\big).
\end{equation}
\end{proposition}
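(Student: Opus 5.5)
The plan is to work directly with the reduced criterion \eqref{eq:Jtilde_expanded}. Because the factor $T(1-\theta^i/N)$ is positive (we have $\theta^i\le 1$ and we excluded $N=\theta^1=1$), each insurer's best response separates into two independent problems, one for $q^i$ and one for $\pi^i$. For fixed $V_{-i}^N$, the retention part is a strictly concave quadratic in $q^i$ restricted to $[0,1]$. The coefficient of $(q^i)^2$ is $-\frac{1}{2\gamma^i}(1-\theta^i/N)((v^{i,0})^2+(v^i)^2)<0$. Its maximizer over $[0,1]$ is therefore the projection of the unconstrained critical point. Solving the first-order condition, we split $(v^{i,0})^2(1-\theta^i/N)$ off the denominator and move $\frac{\theta^i}{N}q^iv^{i,0}\cdot v^{i,0}$ to the numerator. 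This turns $V_{-i}^N$ into $V_{-i}^N+\frac1N q^iv^{i,0}=\widehat V^N$. The best response $q^i=\Pi_{[0,1]}(\ldots)$ is thus equivalent to $q^i=\Pi_{[0,1]}(I^i(p,V))$ with $V=\frac1N\sum_j q^jv^{j,0}$, the full average including $i$.

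One small point needs care, since $I^i$ depends on $q^i$ through $V$. I will check that the map $q^i\mapsto q^i-\Pi_{[0,1]}(I^i(p,V_{-i}+q^iv^{i,0}/N))$ is strictly increasing. The slope of $I^i$ in $q^i$ is $\theta^i(v^{i,0})^2/\big(N[(v^{i,0})^2+(v^i)^2(1-\theta^i/N)]\big)<1$. Hence this self-referential projection equation has exactly the solution given by the projected first-order condition. Alternatively, it suffices to observe directly that the KKT conditions of the concave problem are equivalent to the projected form. The same algebra applied to $\pi^i$ on $\mathbb{R}$ yields \eqref{eq:pihat}. There we average $\pi^j\sigma^{j,0}$ over $j$ and solve the resulting linear equation for $\widehat\Sigma^N$. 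This requires the denominator $1-\frac1N\sum_j\theta^j(\sigma^{j,0})^2/[(\sigma^{j,0})^2+(\sigma^j)^2(1-\theta^j/N)]$ to be positive. Each summand is at most $1$, and strictly less than $1$ unless $\theta^j=1$ and $N=1$, a case we excluded. The investment equilibrium therefore exists and is unique.

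With this step, a constant follower NE is exactly a profile satisfying $q^i=\Pi_{[0,1]}(I^i(p,V))$ for all $i$, where $V$ solves \eqref{eq:fixed_point_V}. Conversely, any solution $V$ of \eqref{eq:fixed_point_V} defines $q^i$ through \eqref{eq:qhat_projection}, and this profile is consistent. Existence and uniqueness of the NE therefore reduce to existence and uniqueness of the scalar fixed point. I expect this to be the main step. Let $F(V)=\frac1N\sum_j v^{j,0}\Pi_{[0,1]}(I^j(p,V))$. It is continuous, nondecreasing, and takes values in $[0,\frac1N\sum_j v^{j,0}]$, so a solution exists by the intermediate value theorem applied to $V-F(V)$. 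For uniqueness, I will show that $F$ is Lipschitz with constant
\[
L=\frac1N\sum_j \frac{\theta^j(v^{j,0})^2}{(v^{j,0})^2+(v^j)^2(1-\theta^j/N)}<1 .
\]
This holds because $\Pi_{[0,1]}$ is $1$-Lipschitz and $\partial_V I^j$ equals the $j$-th summand times $N$. The strict inequality $L<1$ follows exactly as for the investment denominator, using $v^j>0$ and excluding $N=\theta^1=1$. So $F$ is a contraction, which gives uniqueness of $\widehat V^N(p)$. Uniqueness of the NE then follows, since any NE determines its $V$ and hence all $q^i$.
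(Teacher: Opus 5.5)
Your proposal is correct and follows essentially the paper's route: separate the strictly concave retention and investment quadratics, convert the projected first-order condition from $V_{-i}^N$ to the full average $\widehat V^N$ (the paper does this with the identity $x=\Pi_{[0,1]}((A-Bx)/C)\iff x=\Pi_{[0,1]}(A/(B+C))$, which is exactly your KKT/monotonicity check), solve the linear equation for $\widehat\Sigma^N$, and get $\widehat V^N(p)$ from the contraction with constant $\frac1N\sum_j\theta^j(v^{j,0})^2/[(v^{j,0})^2+(v^j)^2(1-\theta^j/N)]<1$. One small slip, which does not affect the argument: it is $v^{j,0}\,\partial_V I^j$, not $\partial_V I^j$, that equals $N$ times the $j$-th summand.
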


\begin{proof}
See \ref{app:2.1}.
\end{proof}

Proposition~\ref{prop:follower_response} gives a unique follower equilibrium response for each fixed premium level $p$, reducing this to the one-dimensional fixed-point equation \eqref{eq:fixed_point_V} for the average common-insurance exposure $\widehat V^N(p)$. Once $\widehat V^N(p)$ is determined, the equilibrium retention responses are given by \eqref{eq:qhat_projection}. The equilibrium investment responses are independent of $p$ and are given explicitly by \eqref{eq:pihat}.

For insurer $i$, the quantity
\begin{equation}
\label{eq:retention_incentive_equilibrium}
I^{i}(p,\widehat{V}^{N}(p))=\frac{(p-a^{i})+\frac{1}{\gamma^{i}}\theta^{i}v^{i,0}\widehat{V}^{N}(p)}{\frac{1}{\gamma^{i}}\big[(v^{i,0})^{2}+(v^{i})^{2}(1-\theta^{i}/N)\big]}
\end{equation}
determines the equilibrium retention response $\widehat{q}^{i}(p)$ through projection onto $[0,1]$, which we therefore interpret as \emph{insurer $i$'s equilibrium retention incentive}. A larger $I^i$ leads to a larger $\widehat q^i(p)$, with a nonpositive incentive giving full cession $(\widehat{q}^{i}(p)=0)$, an incentive between $0$ and $1$ giving partial retention $(0<\widehat{q}^{i}(p)<1)$, and an incentive at least $1$ giving full retention $(\widehat{q}^{i}(p)=1)$.

Equation \eqref{eq:retention_incentive_equilibrium} shows how insurer $i$'s equilibrium retention incentive depends on both its own characteristics and the aggregate behavior of the insurer population. The term  $p-a^{i}$  is the premium-loss spread: when it is positive, retaining risk becomes more attractive; when it is negative, full cession is actuarially favorable and lowers insurer $i$'s equilibrium retention incentive.

The insurance-claim volatility coefficients $v^{i}$ and $v^{i,0}$ measure the uncertainty in insurer $i$'s claim process. Since insurer $i$ is risk-averse (cf. \eqref{eq:Ji}), larger insurance-claim volatilities lower the incentive to retain risk, as reflected by the denominator in \eqref{eq:retention_incentive_equilibrium}. The risk sensitivity $1/\gamma^{i}$ scales these terms, so that a more risk-sensitive insurer reacts more strongly to the uncertainty in its claim process. The common insurance-claim volatility $v^{i,0}$, however, also enters the numerator through $\theta^{i}v^{i,0}\widehat{V}^{N}(p)/\gamma^{i}$, and can therefore increase the retention incentive.

The insurer population effect is captured by
\begin{equation*}
\frac{1}{\gamma^{i}}\theta^{i}v^{i,0}\widehat{V}^{N}(p)=\frac{1}{\gamma^{i}}\theta^{i}v^{i,0}\frac{1}{N}\sum_{j=1}^{N}\widehat{q}^{j}(p)v^{j,0}.
\end{equation*}
Thus, when insurer $i$ has relative performance concerns ($\theta^i>0$) and the insurer population retains common-insurance exposure ($\widehat V^N(p)>0$), this term is positive and raises insurer $i$'s equilibrium retention incentive. The effect is stronger when $1/\gamma^i$, $\theta^i$, $v^{i,0}$, or the aggregate common-insurance exposure $\widehat V^N(p)$ is larger. 

If $\theta^i=0$, the population-dependent term disappears, and insurer $i$'s retention response \eqref{eq:qhat_projection} reduces to
\begin{equation}
\label{eq:qhat_reduced}
\widehat{q}^{i}(p)=\Pi_{[0,1]}\Big(\frac{p-a^{i}}{\frac{1}{\gamma^{i}}\big[(v^{i,0})^{2}+(v^{i})^{2}\big]}\Big).
\end{equation}
When $\theta^i>0$ and $\widehat V^N(p)>0$, the response in \eqref{eq:qhat_projection} is no smaller than the response in \eqref{eq:qhat_reduced}. Thus, at the same premium level, relative performance concerns can increase insurer $i$'s equilibrium retention response.

So far, we have characterized the follower equilibrium response for each fixed premium level $p$. We now study how the equilibrium retention responses vary with $p$. This premium-response structure is needed for the leader's premium-optimization problem, since the reinsurer anticipates the follower equilibrium response when choosing the premium level. Because the follower investment responses $\widehat{\pi}^{i}$, $i=1,\ldots,N$, are independent of $p$, it remains only to characterize the premium-response structure of the equilibrium retention responses $\widehat{q}^{i}(p)$, $i=1,\ldots,N$.

\begin{proposition}[Premium response of equilibrium retentions]
\label{prop:premium_response}
For each insurer $i=1,\dots,N$, the map $p\mapsto \widehat{q}^{i}(p)$ is continuous and monotonically increasing on $[0,\infty)$.

Define the population-wide full-cession and full-retention thresholds by
\begin{equation*}
p^{\min}:=\max\{p\ge0:\widehat{q}^{i}(p)=0,\ i=1,\dots,N\},\qquad p^{\max}:=\min\{p\ge0:\widehat{q}^{i}(p)=1,\ i=1,\dots,N\}.
\end{equation*}
Then $p^{\min}<p^{\max}$, and for each insurer $i=1,\dots,N$,
\begin{equation*}
\widehat{q}^{i}(p)=0\quad\text{for }p\le p^{\min},\qquad\widehat{q}^{i}(p)=1\quad\text{for }p\ge p^{\max},\qquad i=1,\dots,N.
\end{equation*}
Define $\bar{V}^{N}=\frac{1}{N}\sum_{j=1}^{N}v^{j,0}$, then these thresholds are explicitly given by
\begin{equation}
\label{eq:pminmax}
p^{\min}=\min_{1\le j\le N}a^{j},\qquad p^{\max}=\max_{1\le j\le N}\Big\{ a^{j}+\frac{1}{\gamma^{j}}\big[(v^{j,0})^{2}+(v^{j})^{2}\bigl(1-\theta^{j}/N\bigr)]-\frac{1}{\gamma^{j}}\theta^{j}v^{j,0}\bar{V}^{N}\Big\}.
\end{equation}

For each insurer $i=1,\dots,N$, if we define the insurer-specific full-cession and full-retention thresholds by
\begin{equation}
\label{eq:premium_thresholds}
p^{i,\min}:=\max\{p\ge0:\widehat{q}^{i}(p)=0\},\qquad p^{i,\max}:=\min\{p\ge0:\widehat{q}^{i}(p)=1\},
\end{equation}
then $p^{i,\min},p^{i,\max}\in[p^{\min},p^{\max}]$ with $p^{i,\min}<p^{i,\max}$, and 
\begin{equation}
\label{eq:qhat_threshold_structure}
\widehat{q}^{i}(p)=0\quad\text{for }p\le p^{i,\min},\qquad0<\widehat{q}^{i}(p)<1\quad\text{for }p^{i,\min}<p<p^{i,\max},\qquad\widehat{q}^{i}(p)=1\quad\text{for }p\ge p^{i,\max}.
\end{equation}
\end{proposition}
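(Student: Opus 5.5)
The plan is to derive everything from the scalar fixed-point map
\[
F(p,V):=\frac{1}{N}\sum_{j=1}^{N}v^{j,0}\,\Pi_{[0,1]}\big(I^{j}(p,V)\big),
\]
whose unique fixed point is $\widehat V^{N}(p)$ by Proposition~\ref{prop:follower_response}. First I record the regularity of $F$. Since $\Pi_{[0,1]}$ is nondecreasing and $1$-Lipschitz, and $I^{j}$ is affine and nondecreasing in both $p$ and $V$ (the $V$-coefficient $\theta^{j}v^{j,0}/\gamma^{j}$ is nonnegative), $F$ is nondecreasing and Lipschitz in $p$ and nondecreasing in $V$. Its Lipschitz constant in $V$ is at most
\[
L:=\frac{1}{N}\sum_{j=1}^{N}\frac{\theta^{j}(v^{j,0})^{2}}{(v^{j,0})^{2}+(v^{j})^{2}(1-\theta^{j}/N)}.
\]
Each summand is strictly below $1/N$ because $v^{j}>0$ and $1-\theta^{j}/N>0$; the case $N=\theta^{1}=1$ is excluded, and otherwise $\theta^{j}\le 1<N$. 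Hence $L<1$.

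Consequently, $G(p,V):=V-F(p,V)$ is strictly increasing in $V$ with slope at least $1-L>0$, and nonincreasing in $p$. Take $p<p'$. Then $G(p',\widehat V^{N}(p))\le G(p,\widehat V^{N}(p))=0$, so $\widehat V^{N}(p')\ge \widehat V^{N}(p)$. The difference is bounded by $(1-L)^{-1}$ times the $p$-Lipschitz constant of $F$ times $|p'-p|$, which gives monotonicity and Lipschitz continuity of $\widehat V^{N}$. Composing, $p\mapsto I^{i}(p,\widehat V^{N}(p))$ is continuous and nondecreasing, and so is $\widehat q^{i}(p)=\Pi_{[0,1]}(I^{i}(p,\widehat V^{N}(p)))$. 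I expect this contraction-type estimate to be the main obstacle; it is exactly where the strict inequality $L<1$ is needed.

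Next I identify the population thresholds. For $p\le\min_j a^{j}$, the candidate $V=0$ gives $I^{j}(p,0)\le 0$ for all $j$, so $F(p,0)=0$, and by uniqueness $\widehat V^{N}(p)=0$ with all $\widehat q^{j}=0$. For $p>\min_j a^{j}$, note that $\widehat V^{N}(p)\ge 0$ always. If it were $0$, the insurer attaining $\min_j a^{j}$ would have $I^{j}>0$, hence $\widehat q^{j}>0$ and $\widehat V^{N}>0$, a contradiction. Hence $p^{\min}=\min_j a^{j}$.

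For the upper threshold, all $\widehat q^{j}=1$ forces $\widehat V^{N}=\bar V^{N}$. Conversely, $V=\bar V^{N}$ is a fixed point if and only if $I^{j}(p,\bar V^{N})\ge 1$ for every $j$, which rearranges to $p\ge$ the maximum in \eqref{eq:pminmax}. Uniqueness of the fixed point then gives the formula for $p^{\max}$. The inequality $p^{\min}<p^{\max}$ holds because at $p^{\min}$ every $\widehat q^{j}=0\neq1$ and $\widehat q^{j}$ is nondecreasing.

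Finally, I treat the individual thresholds. By continuity and monotonicity, $\{p\ge0:\widehat q^{i}(p)=0\}$ is a closed interval $[0,p^{i,\min}]$. It is nonempty because $0<a^{i}$ gives $\widehat q^{i}(0)=0$, and it is bounded because it is contained in $[0,p^{\max})$. Likewise $\{p\ge0:\widehat q^{i}(p)=1\}=[p^{i,\max},\infty)$, which is nonempty since it contains $[p^{\max},\infty)$. These two sets are disjoint, so $p^{i,\min}<p^{i,\max}$. For $p\in(p^{i,\min},p^{i,\max})$, we have $\widehat q^{i}(p)>0$; otherwise monotonicity would put $p$ into the zero set beyond its maximum. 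Similarly $\widehat q^{i}(p)<1$, which yields \eqref{eq:qhat_threshold_structure}. The inclusions $p^{i,\min}\ge p^{\min}$ and $p^{i,\max}\le p^{\max}$ follow from the population-threshold statements. Moreover $p^{i,\min}<p^{\max}$ and $p^{i,\max}>p^{\min}$, since $\widehat q^{i}(p^{\max})=1$ and $\widehat q^{i}(p^{\min})=0$. This places both individual thresholds in $[p^{\min},p^{\max}]$.
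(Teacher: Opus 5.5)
Your proof is correct and follows essentially the same route as the paper. A Lipschitz estimate for $F$ with $V$-constant $L<1$ gives Lipschitz continuity of $\widehat V^{N}$, and monotonicity of $F$ in $(p,V)$ gives monotonicity of $\widehat V^{N}$. The fixed points $V=0$ and $V=\bar V^{N}$ identify $p^{\min}$ and $p^{\max}$, and continuity plus monotonicity of $\widehat q^{i}$ yield the insurer-specific threshold structure. The only differences are cosmetic: you obtain $\widehat V^{N}(p)\le\widehat V^{N}(p')$ from strict monotonicity of $V\mapsto V-F(p',V)$ rather than from the paper's monotone Picard iteration, and in justifying $L<1$ you should write $\theta^{j}<N$ (not $1<N$) so that the nondegenerate case $N=1$, $\theta^{1}<1$ is also covered.
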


\begin{proof}
See \ref{app:2.2}.
\end{proof}

Proposition~\ref{prop:premium_response} gives the basic premium-retention relation in the follower equilibrium response. As the reinsurance premium increases, each insurer moves progressively from full cession to partial retention and eventually to full retention. This is consistent with the usual economic intuition that more expensive reinsurance induces insurers to retain more risk. 

This monotonicity further reveals an implicit channel through which the premium level affects insurers' retention incentives. By \eqref{eq:Vhat_Sigmahat_def}, the monotonicity of $p\mapsto\widehat q^i(p)$ implies that $p\mapsto\widehat V^N(p)$ is also monotonically increasing. Hence, for insurers with $\theta^i>0$, the population-dependent term $\theta^{i}v^{i,0}\widehat V^N(p)/\gamma^i$ in the equilibrium retention incentive \eqref{eq:retention_incentive_equilibrium} becomes stronger as the premium rises. We call this effect \emph{common-insurance exposure spillover}: as the premium level increases, larger retention responses raise the average common-insurance exposure $\widehat V^N(p)$, which in turn strengthens the retention incentives of insurers with relative performance concerns. This contrasts with the effect of the premium-loss spread $p-a^i$ in \eqref{eq:retention_incentive_equilibrium}, which directly raises every insurer's incentive regardless of relative performance concerns.

We next show how this common-insurance exposure spillover appears at the level of insurer-specific thresholds. Without relative performance concerns ($\theta^{1}=\cdots=\theta^{N}=0$), insurer $i$'s full-cession and full-retention thresholds are $a^i$, insurer $i$'s expected claim rate and the actuarially fair reinsurance premium rate, and $a^i+\frac1{\gamma^i}\big[(v^{i,0})^2+(v^i)^2\big]$, corresponding to the risk-adjusted full-retention threshold. The next corollary compares these thresholds with the thresholds under relative performance concerns.

\begin{corollary}
\label{cor:relative_performance_threshold_shift}
For each insurer $i=1,\ldots,N$, the insurer-specific thresholds under relative performance concerns satisfy
\begin{equation}
\label{eq:threshold_shift}
p^{i,\min}=a^{i}-\frac{1}{\gamma^{i}}\theta^{i}v^{i,0}\widehat{V}^{N}(p^{i,\min}),\qquad p^{i,\max}=a^{i}+\frac{1}{\gamma^{i}}\big[(v^{i,0})^{2}+(v^{i})^{2}(1-\theta^{i}/N)\big]-\frac{1}{\gamma^{i}}\theta^{i}v^{i,0}\widehat{V}^{N}(p^{i,\max}).
\end{equation}
If $\theta^{i}=0$, then insurer $i$'s thresholds coincide with the no-relative-performance thresholds:
\begin{equation}
\label{eq:equality_threshold_shift}
p^{i,\min}=a^{i},\qquad p^{i,\max}=a^{i}+\frac{1}{\gamma^{i}}\big[(v^{i,0})^{2}+(v^{i})^{2}\big].
\end{equation}
If $\theta^{i}>0$, then
\begin{equation}
\label{eq:strict_upper_threshold_shift}
p^{i,\max}<a^{i}+\frac{1}{\gamma^{i}}\big[(v^{i,0})^{2}+(v^{i})^{2}\big].
\end{equation}
Moreover, if $\theta^{i}>0$ and $a^{i}>p^{\min}$, then
\begin{equation}
\label{eq:strict_lower_threshold_shift}
p^{i,\min}<a^{i}.
\end{equation}
If $a^{i}=p^{\min}$, then $p^{i,\min}=a^{i}$.
\end{corollary}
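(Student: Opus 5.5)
The plan is to read the threshold identities \eqref{eq:threshold_shift} off the projection formula \eqref{eq:qhat_projection}, using the continuity and monotonicity from Proposition~\ref{prop:premium_response}. For each $p$, write $\widehat q^{i}(p)=\Pi_{[0,1]}(I^{i}(p,\widehat V^{N}(p)))$. The map $g_i(p):=I^{i}(p,\widehat V^{N}(p))$ is continuous, because $\widehat V^{N}$ is a finite average of the continuous maps $\widehat q^{j}$. It is strictly increasing, because $\widehat V^{N}$ is nondecreasing, $\theta^{i}v^{i,0}\ge0$, and the term $p-a^{i}$ is strictly increasing. Let $D_i:=\frac{1}{\gamma^{i}}[(v^{i,0})^{2}+(v^{i})^{2}(1-\theta^{i}/N)]>0$. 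Then $\widehat q^{i}(p)=0$ exactly when $g_i(p)\le 0$, and $\widehat q^{i}(p)=1$ exactly when $g_i(p)\ge1$. Proposition~\ref{prop:premium_response} gives $0<\widehat q^{i}<1$ on $(p^{i,\min},p^{i,\max})$, so $g_i(p)\in(0,1)$ there. At the endpoints, continuity forces $g_i(p^{i,\min})=0$ when $p^{i,\min}>0$, and $g_i(p^{i,\max})=1$.

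Writing out $g_i(p^{i,\min})=0$ and $g_i(p^{i,\max})=1$ and multiplying through by $D_i$ gives exactly \eqref{eq:threshold_shift}. The possible boundary case $p^{i,\min}=0$ will be handled after the discussion of $a^{i}$ below. In that case we would need $g_i(0)\le 0$ with equality. I expect the paper's assumptions, or the fact that $p^{i,\min}\ge p^{\min}=\min_j a^{j}>0$ from Proposition~\ref{prop:premium_response}, to settle it. Since every $a^{j}>0$, indeed $p^{i,\min}\ge p^{\min}>0$, so the equality holds in all cases.

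Setting $\theta^{i}=0$ in \eqref{eq:threshold_shift} gives \eqref{eq:equality_threshold_shift} immediately.

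For \eqref{eq:strict_upper_threshold_shift} with $\theta^{i}>0$, I will note that $\widehat V^{N}(p^{i,\max})\ge \widehat q^{i}(p^{i,\max})v^{i,0}/N=v^{i,0}/N>0$. Hence
\[
p^{i,\max}\le a^{i}+\tfrac{1}{\gamma^{i}}\big[(v^{i,0})^{2}+(v^{i})^{2}\big]-\tfrac{1}{\gamma^{i}}\tfrac{\theta^{i}}{N}(v^{i})^{2}-\tfrac{1}{\gamma^{i}}\tfrac{\theta^{i}}{N}(v^{i,0})^{2},
\]
and the two subtracted terms are strictly positive.

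For \eqref{eq:strict_lower_threshold_shift}, by the first identity it suffices to show $\widehat V^{N}(p^{i,\min})>0$, that is, that some insurer $j$ retains a positive amount at $p^{i,\min}$. This is the main obstacle. Take $j$ with $a^{j}=p^{\min}<a^{i}$. I will show that $p^{j,\min}\le a^{j}$; in fact $g_j(a^{j})=\theta^{j}v^{j,0}\widehat V^{N}(a^{j})/(\gamma^{j}D_j)\ge 0$. Next I will show $p^{i,\min}>a^{j}$, which follows from Proposition~\ref{prop:premium_response}, since $\widehat q^{i}(p)=0$ for $p\le p^{\min}$. I still need strict inequality, so I argue by contradiction. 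Suppose $\widehat V^{N}(p^{i,\min})=0$. Then the first identity gives $p^{i,\min}=a^{i}>a^{j}$. But $g_j(a^{i})\ge (a^{i}-a^{j})/D_j>0$, so $\widehat q^{j}(a^{i})>0$ and therefore $\widehat V^{N}(a^{i})>0$, a contradiction. Hence $\widehat V^{N}(p^{i,\min})>0$, and $p^{i,\min}<a^{i}$ follows.

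Finally, suppose $a^{i}=p^{\min}$. Then $p^{i,\min}\ge p^{\min}=a^{i}$. For the reverse inequality, every $p<p^{\min}$ has $\widehat V^{N}(p)=0$ because all retentions vanish, and by continuity $\widehat V^{N}(p^{\min})=0$. Then $g_i(p)=(p-a^{i})/D_i$ near $p^{\min}$. I will combine this with the first identity: if $p^{i,\min}>a^{i}$, the identity would give $p^{i,\min}\le a^{i}$, a contradiction. Therefore $p^{i,\min}=a^{i}$.
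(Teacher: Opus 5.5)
Your proposal is correct and follows essentially the same route as the paper. You read the threshold identities off the projection formula via continuity and monotonicity of $p\mapsto I^i(p,\widehat V^N(p))$, specialize to $\theta^i=0$, and get the strict shifts from positivity of $\widehat V^N$; your contradiction argument for $p^{i,\min}<a^i$ is just a rearrangement of the paper's direct observation that $\widehat V^N(a^i)>0$ forces $\widehat q^i(a^i)>0$. The caveat $p^{i,\min}>0$, the bound $p^{j,\min}\le a^j$, and the formula for $g_i$ ``near'' $p^{\min}$ (valid only for $p\le p^{\min}$) are never used in your final deductions and can be dropped.
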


\begin{proof}
By \eqref{eq:qhat_projection} and \eqref{eq:retention_incentive_equilibrium}, together with the continuity of $p\mapsto \widehat q^i(p)$ and the threshold structure in Proposition~\ref{prop:premium_response}, the thresholds $p^{i,\min}$ and $p^{i,\max}$ are determined by
\begin{equation*}
\gamma^{i}(p^{i,\min}-a^{i})+\theta^{i}v^{i,0}\widehat{V}^{N}(p^{i,\min})=0,\qquad\gamma^{i}(p^{i,\max}-a^{i})+\theta^{i}v^{i,0}\widehat{V}^{N}(p^{i,\max})=(v^{i,0})^{2}+(v^{i})^{2}(1-\theta^{i}/N).
\end{equation*}
These equations give \eqref{eq:threshold_shift}. If $\theta^i=0$, then \eqref{eq:threshold_shift} becomes \eqref{eq:equality_threshold_shift}. If $\theta^i>0$, then the second identity in \eqref{eq:threshold_shift} yields \eqref{eq:strict_upper_threshold_shift}.

If $a^i=p^{\min}$, then Proposition~\ref{prop:premium_response} gives $\widehat q^i(p)=0$ for all $p\le a^i$, so $p^{i,\min}\ge a^i$. 
On the other hand, the first identity in \eqref{eq:threshold_shift} gives $p^{i,\min}\le a^i$. 
Therefore $p^{i,\min}=a^i$. 

If $a^i>p^{\min}$, then by the definition of $p^{\min}$, at least one insurer has already begun retaining risk at $p=a^i$, so $\widehat V^N(a^i)>0$. If further $\theta^i>0$, then insurer $i$'s retention incentive at $p=a^i$ is
\begin{equation*}
I^{i}(a^{i},\widehat{V}^{N}(a^{i}))=\frac{\theta^{i}v^{i,0}\widehat{V}^{N}(a^{i})}{(v^{i,0})^{2}+(v^{i})^{2}(1-\theta^{i}/N)}>0.
\end{equation*}
Hence $\widehat q^i(a^i)>0$, and this implies $p^{i,\min}<a^{i}$. Thus, \eqref{eq:strict_lower_threshold_shift} holds.
\end{proof}

Corollary~\ref{cor:relative_performance_threshold_shift} shows that relative performance concerns can lower the insurer-specific thresholds relative to the no-relative-performance case. If $\theta^{i}=0$, insurer~$i$'s thresholds are unchanged. If $\theta^{i}>0$, its full-retention threshold lies strictly below the risk-adjusted full-retention threshold from the no-relative-performance case, $a^{i}+\frac{1}{\gamma^{i}}[(v^{i,0})^{2}+(v^{i})^{2}]$; and, unless $a^{i}=p^{\min}$, its full-cession threshold lies strictly below $a^{i}$. Thus, relative performance concerns can induce partial retention while full cession is still actuarially favorable, and can also lead to earlier full retention. Because the threshold identities in \eqref{eq:threshold_shift} involve $\theta^i$, $\gamma^i$, $v^{i,0}$, $v^i$, and the values of $\widehat V^N$ at insurer~$i$'s own thresholds, the shifts vary across insurers. Hence, the order in which insurers enter partial or full retention need not match the ordering in the no-relative-performance case.

Regardless of how relative performance concerns shift or reorder the insurer-specific thresholds, Proposition~\ref{prop:premium_response} implies that the \emph{follower retention configuration}---the partition of insurers into full-cession, partial-retention, and full-retention groups---can change only at the thresholds $p^{i,\min}$ and $p^{i,\max}$, $i=1,\ldots,N$. Once these thresholds are ordered, they partition $[p^{\min},p^{\max}]$ into finitely many intervals, on each of which the follower retention configuration is fixed.

\begin{corollary}
\label{cor:premium_partition}
Let $p_0,\ldots,p_K$ be the ordered distinct elements of the insurer-specific thresholds:
\begin{equation}
\label{eq:premium_partition_points}
p^{\min}=p_{0}<p_{1}<\cdots<p_{K}=p^{\max},\qquad\{p_{0},p_{1},\ldots,p_{K}\}=\bigcup_{i=1}^{N}\{p^{i,\min},p^{i,\max}\}.
\end{equation}
Then $[p^{\min},p^{\max}]$ is partitioned by $p_0,\ldots,p_K$, with $K\le 2N-1$.

On each open interval $(p_{k-1},p_k)$, $k=1,\ldots,K$, the follower retention configuration is fixed. More precisely, there exists a retention configuration $(\mathcal C_k,\mathcal P_k,\mathcal R_k)$, where $\mathcal C_k$, $\mathcal P_k$, and $\mathcal R_k$ denote the sets of full-cession, partial-retention, and full-retention insurers, respectively, such that for every $p\in(p_{k-1},p_k)$,
\begin{equation*}
\widehat{q}^{i}(p)=0\quad\text{for }i\in\mathcal{C}_{k},\qquad0<\widehat{q}^{i}(p)<1\quad\text{for }i\in\mathcal{P}_{k},\qquad\widehat{q}^{i}(p)=1\quad\text{for }i\in\mathcal{R}_{k}.
\end{equation*}
\end{corollary}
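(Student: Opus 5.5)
The plan is to derive everything from the threshold structure \eqref{eq:qhat_threshold_structure} in Proposition~\ref{prop:premium_response}; no new analysis of the fixed point \eqref{eq:fixed_point_V} is needed.

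\textbf{Endpoints.} First I will show that the set of thresholds has minimum $p^{\min}$ and maximum $p^{\max}$. Proposition~\ref{prop:premium_response} gives $p^{i,\min},p^{i,\max}\in[p^{\min},p^{\max}]$ for every $i$, so it suffices to show that both endpoints are attained.

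\emph{Lower endpoint.} Suppose every $p^{i,\min}$ were strictly larger than $p^{\min}$. Pick $p$ with $p^{\min}<p<\min_i p^{i,\min}$. Then $p\le p^{i,\min}$ for all $i$, so $\widehat q^i(p)=0$ for all $i$. This contradicts the maximality in the definition of $p^{\min}$. Note that $p^{\min}$ is a genuine maximum: the set of $p$ with all $\widehat q^i(p)=0$ is closed by continuity, and it is bounded because the set where all $\widehat q^i=1$ is nonempty.

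\emph{Upper endpoint.} Symmetrically, at $p^{\max}$ all insurers fully retain. If every $p^{i,\max}$ were strictly less than $p^{\max}$, then each $\widehat q^i$ would equal $1$ on $[\max_i p^{i,\max},\infty)$. That contradicts the minimality in the definition of $p^{\max}$.

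Since $p^{\min}<p^{\max}$, listing the distinct values in increasing order gives $p_0=p^{\min}$, $p_K=p^{\max}$ and $K\ge1$.

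\textbf{Bound on $K$.} There are at most $2N$ distinct threshold values, so $K+1\le 2N$, i.e.\ $K\le 2N-1$.

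\textbf{Fixed configuration on each interval.} Fix $k$ and an insurer $i$. The open interval $(p_{k-1},p_k)$ contains neither $p^{i,\min}$ nor $p^{i,\max}$, because these are among the consecutive partition points. Since $p^{i,\min}<p^{i,\max}$, the interval lies entirely in exactly one of three sets:
\[
(-\infty,p^{i,\min}],\qquad (p^{i,\min},p^{i,\max}),\qquad [p^{i,\max},\infty).
\]
To see this, connectedness of the interval together with the absence of these two points keeps it within one of the complementary open pieces
\[
(-\infty,p^{i,\min}),\qquad (p^{i,\min},p^{i,\max}),\qquad (p^{i,\max},\infty).
\]
By \eqref{eq:qhat_threshold_structure}, $\widehat q^i(p)$ is then, throughout the interval, identically $0$, strictly between $0$ and $1$, or identically $1$, respectively. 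Assigning $i$ to $\mathcal C_k$, $\mathcal P_k$ or $\mathcal R_k$ accordingly gives the configuration.

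\textbf{Main obstacle.} The only delicate point is the endpoint identification, in particular confirming that $p^{\min}$ and $p^{\max}$ are attained extrema. I would handle this using continuity and monotonicity of each $\widehat q^i$ from Proposition~\ref{prop:premium_response}, as indicated above.
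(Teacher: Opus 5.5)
Your proposal is correct and follows the same route as the paper. The paper also identifies $p^{\min}=\min_i p^{i,\min}$ and $p^{\max}=\max_i p^{i,\max}$, and then uses the threshold structure \eqref{eq:qhat_threshold_structure} to conclude that no insurer changes status between consecutive thresholds. You add detail that the paper's short proof leaves implicit: the contradiction arguments for attaining the endpoints, the count giving $K\le 2N-1$, and the connectedness step.
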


\begin{proof}
By \eqref{eq:premium_thresholds} and \eqref{eq:qhat_threshold_structure},
\begin{equation*}
p^{\min}=\min_{1\le i\le N}p^{i,\min},\qquad p^{\max}=\max_{1\le i\le N}p^{i,\max}.
\end{equation*}
Hence, $p^{\min}$ and $p^{\max}$ are included among the insurer-specific thresholds. Between two consecutive ordered thresholds, no insurer crosses either its full-cession threshold or its full-retention threshold. Therefore, each insurer's retention status, and hence the follower retention configuration, is fixed on that open interval.
\end{proof}

Corollary~\ref{cor:premium_partition} gives the interval structure needed for the leader's problem. Relative performance concerns can lower these thresholds and alter their ordering across insurers, but once the insurer-specific thresholds are sorted, the premium interval $[p^{\min},p^{\max}]$ is divided into finitely many regions. On each open region $(p_{k-1},p_k)$, the set of full-cession, partial-retention, and full-retention insurers is fixed, which allows the leader's reduced premium objective to be written explicitly interval by interval.

\subsection{Leader's Optimization Problem and Constant Stackelberg--Nash Equilibrium}
\label{subsec:finite_leader}

We now solve the leader's optimization problem using the constant follower Nash equilibrium response characterized in Section~\ref{subsec:finite_followers}. Under the constant-control formulation in \eqref{def:constcontrol}, substituting this follower equilibrium response into the reinsurer's wealth dynamics \eqref{eq:dynamics_reinsurer} reduces the leader's stochastic control problem \eqref{eq:JL} to an equivalent static optimization problem in $(p,\pi^L)$. We first simplify the reinsurer's objective \eqref{eq:JL} under this follower equilibrium response, solve the resulting leader problem, and then construct constant Stackelberg--Nash equilibria.

For each premium level $p\ge0$, let $(\widehat q^1(p),\widehat\pi^1),\dots,(\widehat q^N(p),\widehat\pi^N)$ be the constant follower Nash equilibrium response from Proposition~\ref{prop:follower_response}. Under this follower equilibrium response and a constant leader investment control $\pi^L$, the reinsurer's terminal wealth follows from \eqref{eq:dynamics_reinsurer} as
\begin{align*}
X_{T}^{L} = &\ x^{L}+T\Big(\frac{1}{N}\sum_{i=1}^{N}(1-\widehat{q}^{i}(p))(p-a^{i})+\pi^{L}\mu^{L}\Big)
\nonumber\\
& + \frac{1}{N}\sum_{i=1}^{N}(1-\widehat{q}^{i}(p))v^{i,0}W_{T}^{0}+\frac{1}{N}\sum_{i=1}^{N}(1-\widehat{q}^{i}(p))v^{i}W_{T}^{i} + \pi^{L}\sigma^{L,0}B_{T}^{0}+\pi^{L}\sigma^{L}B_{T}^{L}.
\end{align*}
Since $X_T^L$ is Gaussian, maximizing the reinsurer's objective \eqref{eq:JL} is equivalent to maximizing
\begin{equation*}
\widetilde{J}^{L}(p,\pi^{L}):=\mathbb{E}[X_{T}^{L}]-\frac{1}{2\gamma^{L}}\mathrm{Var}(X_{T}^{L}),
\end{equation*}
and a direct computation yields
\begin{align}
\label{eq:Jtilde_leader}
\widetilde{J}^{L}(p,\pi^{L}) = &\ x^{L} + T\Big[\mu^{L}\pi^{L}-\frac{1}{2\gamma^{L}}\big((\sigma^{L,0})^{2}+(\sigma^{L})^{2}\big)(\pi^{L})^{2}\Big] + \frac{T}{N}\sum_{i=1}^{N}(1-\widehat{q}^{i}(p))(p-a^{i})
\nonumber\\
& -\frac{T}{2\gamma^{L}}\Big[\Big(\frac{1}{N}\sum_{i=1}^{N}(1-\widehat{q}^{i}(p))v^{i,0}\Big)^{2}+\frac{1}{N^{2}}\sum_{i=1}^{N}(1-\widehat{q}^{i}(p))^{2}(v^{i})^{2}\Big].
\end{align}
The expression in \eqref{eq:Jtilde_leader} separates the reinsurer's objective into an investment part and a premium part. The investment part is independent of the follower equilibrium responses, whereas the premium part depends on the follower equilibrium retention responses $\widehat q^1(p),\ldots,\widehat q^N(p)$. We therefore define the reinsurer's reduced premium objective by
\begin{equation}
\label{eq:leader_premium_objective}
\mathcal{J}^{L}(p):=\frac{1}{N}\sum_{i=1}^{N}(1-\widehat{q}^{i}(p))(p-a^{i})-\frac{1}{2\gamma^{L}}\Big[\Big(\frac{1}{N}\sum_{i=1}^{N}(1-\widehat{q}^{i}(p))v^{i,0}\Big)^{2}+\frac{1}{N^{2}}\sum_{i=1}^{N}(1-\widehat{q}^{i}(p))^{2}(v^{i})^{2}\Big].
\end{equation}

The next proposition identifies the leader's optimal controls and constructs constant Stackelberg--Nash equilibria.

\begin{proposition}[Constant Stackelberg--Nash equilibria]
\label{prop:leader_equilibrium}
The leader's optimal investment control is unique and is given by
\begin{equation}\label{eq:pistar_finite}
\pi^{L,*}=\frac{\gamma^{L}\mu^{L}}{(\sigma^{L,0})^{2}+(\sigma^{L})^{2}}.
\end{equation}
There exists at least one optimal premium level $p^*$ satisfying
\begin{equation}
\label{eq:pstar_argmax}
p^{*}\in\underset{p\in[p^{\min},p^{\max}]}{\arg\max}\,\mathcal{J}^{L}(p),
\end{equation}
where $p^{\min}$ and $p^{\max}$ are given in \eqref{eq:pminmax}, and $\mathcal J^L$ is defined in \eqref{eq:leader_premium_objective}. For any such $p^*$, define
\begin{equation*}
q^{i,*}:=\widehat{q}^{i}(p^{*}),\qquad\pi^{i,*}:=\widehat{\pi}^{i},\qquad i=1,\ldots,N.
\end{equation*}
Then $(p^*,\pi^{L,*},q^{1,*},\pi^{1,*},\ldots,q^{N,*},\pi^{N,*})$ is a constant Stackelberg--Nash equilibrium. The optimal premium $p^*$ may not be unique, so the constant Stackelberg--Nash equilibrium may not be unique.
\end{proposition}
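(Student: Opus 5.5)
Using the separation in \eqref{eq:Jtilde_leader}, the reinsurer's mean--variance criterion splits as $\widetilde J^L(p,\pi^L)=x^L+T\,g(\pi^L)+T\,\mathcal J^L(p)$, where
\[
g(\pi^L)=\mu^L\pi^L-\tfrac{1}{2\gamma^L}\big((\sigma^{L,0})^2+(\sigma^L)^2\big)(\pi^L)^2 .
\]
Maximizing the leader's objective \eqref{eq:JL} is equivalent to maximizing $\widetilde J^L$, because the exponential-utility-to-mean--variance reduction is strictly monotone for Gaussian terminal wealth. The two pieces depend on disjoint variables, so the leader's problem decouples: $\sup_{p,\pi^L}\widetilde J^L = x^L+T\sup_{\pi^L} g+T\sup_{p\ge0}\mathcal J^L$, and $(p^*,\pi^{L,*})$ is optimal iff each component is optimal for its piece. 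Since $g$ is a strictly concave quadratic, the first-order condition gives the unique maximizer \eqref{eq:pistar_finite}.

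For the premium, I first reduce the search from $[0,\infty)$ to the compact interval $[p^{\min},p^{\max}]$ using Proposition~\ref{prop:premium_response}.
\begin{itemize}
\item For $p\le p^{\min}$, all $\widehat q^i(p)=0$, so $\mathcal J^L(p)=\frac1N\sum_i(p-a^i)-\text{const}$. This is strictly increasing in $p$, hence $\mathcal J^L(p)\le\mathcal J^L(p^{\min})$.
\item For $p\ge p^{\max}$, all $\widehat q^i(p)=1$, so $\mathcal J^L(p)=0=\mathcal J^L(p^{\max})$.
\end{itemize}
Therefore $\sup_{p\ge0}\mathcal J^L=\sup_{[p^{\min},p^{\max}]}\mathcal J^L$, and any maximizer over the compact interval is a global maximizer. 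Existence then follows from Weierstrass. Each $\widehat q^i$ is continuous by Proposition~\ref{prop:premium_response}, and $\mathcal J^L$ is a polynomial expression in $p$ and the $\widehat q^i(p)$, so it is continuous on $[p^{\min},p^{\max}]$, and the argmax in \eqref{eq:pstar_argmax} is nonempty. One detail needs checking: when $p^{\min}=0$ the first regime is just the single point $p=0$, so nothing more is needed there.

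Next I verify the equilibrium property against Definition~\ref{def:CSNE}. By Proposition~\ref{prop:follower_response}, $(\widehat q^i(p^*),\widehat\pi^i)_i$ is the unique constant follower NE response to $(p^*,\pi^{L,*})$. It does not depend on $\pi^L$, since $\pi^L$ does not enter the followers' dynamics or objectives. Hence the map from leader controls to follower responses is well defined, namely $p\mapsto(\widehat q^i(p),\widehat\pi^i)_i$. The leader's induced objective is exactly $x^L+Tg(\pi^L)+T\mathcal J^L(p)$, and $(p^*,\pi^{L,*})$ maximizes it. This gives optimality of the leader given the anticipated follower equilibrium response.

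Finally, the non-uniqueness claim is only a ``may'' statement. I would justify it by noting that $\mathcal J^L$ is merely continuous and piecewise defined on the intervals of Corollary~\ref{cor:premium_partition}, with no concavity guaranteed, so multiple maximizers cannot be excluded in general. For a concrete example I would point forward to the numerical section rather than construct one analytically.

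The only genuinely delicate step is the reduction to the compact interval, specifically the monotonicity below $p^{\min}$ and the constancy above $p^{\max}$. Both hinge on the exact threshold structure of Proposition~\ref{prop:premium_response}, which I take as given. Everything else is routine.
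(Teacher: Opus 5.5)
Your proposal is correct and follows essentially the same route as the paper's proof. It separates $\widetilde J^L$ into a strictly concave quadratic in $\pi^L$ plus $T\mathcal J^L(p)$, reduces to $[p^{\min},p^{\max}]$ via the threshold regimes of Proposition~\ref{prop:premium_response}, applies continuity and the extreme value theorem, verifies Definition~\ref{def:CSNE}, and leaves non-uniqueness to the numerical examples. One small remark: your caveat about $p^{\min}=0$ is vacuous, since $p^{\min}=\min_j a^j>0$.
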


\begin{proof}
See \ref{app:2.3}.
\end{proof}

Although the leader's premium is chosen from $[0,\infty)$, it is enough to maximize $\mathcal J^L(p)$ over $[p^{\min},p^{\max}]$. For $p\le p^{\min}$, all insurers choose full cession, so $\widehat q^i(p)=0$ for every $i$, and $\mathcal J^L(p)$ is increasing in $p$. Hence no premium below $p^{\min}$ can be optimal. For $p\ge p^{\max}$, all insurers choose full retention, so $\widehat q^i(p)=1$ for every $i$, and $\mathcal J^L(p)=0$. Thus, even if a premium above $p^{\max}$ also gives the maximal value, the same value is attained at $p^{\max}$. Therefore, an optimal premium can always be selected from the compact interval $[p^{\min},p^{\max}]$. This matches the economic interpretation of the two boundary regimes: premiums below $p^{\min}$ are too low for the reinsurer, while premiums above $p^{\max}$ are too high for insurers to cede risk.

It remains to make the maximization of $\mathcal J^L(p)$ explicit. The essential difficulty is that $\mathcal J^L(p)$ depends on $\widehat q^1(p),\ldots,\widehat q^N(p)$, and these equilibrium retention responses are defined implicitly through the fixed-point equation \eqref{eq:fixed_point_V} and the projection formula \eqref{eq:qhat_projection}. On each partition interval $(p_{k-1},p_k)$ from Corollary~\ref{cor:premium_partition}, the follower retention configuration $(\mathcal C_k,\mathcal P_k,\mathcal R_k)$ is fixed, and equilibrium retention responses and the reduced premium objective can be written explicitly. Hence
\begin{equation}
\label{eq:qhat_interval_configuration}
\widehat{q}^{i}(p)=0\quad\text{for }i\in\mathcal{C}_{k},\qquad\widehat{q}^{i}(p)=I^{i}(p,\widehat{V}^{N}(p))\quad\text{for }i\in\mathcal{P}_{k},\qquad\widehat{q}^{i}(p)=1\quad\text{for }i\in\mathcal{R}_{k}.
\end{equation}
Substituting \eqref{eq:qhat_interval_configuration} into the definition of $\widehat V^N(p)$ in \eqref{eq:Vhat_Sigmahat_def} gives
\begin{equation*}
\widehat{V}^{N}(p)=\frac{1}{N}\sum_{j\in\mathcal{P}_{k}}v^{j,0}I^{j}(p,\widehat{V}^{N}(p))+\frac{1}{N}\sum_{j\in\mathcal{R}_{k}}v^{j,0}.
\end{equation*}
Using the expression for $I^{j}(p,\widehat{V}^{N}(p))$ in \eqref{eq:retention_incentive_equilibrium}, this becomes a linear equation in $\widehat V^N(p)$. Solving this equation yields
\begin{equation}
\label{eq:Vhat_interval}
\widehat{V}^{N}(p)=\frac{\frac{1}{N}\sum_{j\in\mathcal{R}_{k}}v^{j,0}+\frac{1}{N}\sum_{j\in\mathcal{P}_{k}}\frac{\gamma^{j}v^{j,0}(p-a^{j})}{(v^{j,0})^{2}+(v^{j})^{2}(1-\theta^{j}/N)}}{1-\frac{1}{N}\sum_{j\in\mathcal{P}_{k}}\frac{\theta^{j}(v^{j,0})^{2}}{(v^{j,0})^{2}+(v^{j})^{2}(1-\theta^{j}/N)}},\qquad p\in(p_{k-1},p_k).
\end{equation}
Thus, the partial-retention responses $\widehat q^i(p)$, $i\in\mathcal P_k$, are affine in $p$ on this interval, and $\mathcal J^L(p)$ is an explicit quadratic function if $\mathcal P_k\ne\emptyset$, and linear with positive slope if $\mathcal P_k=\emptyset$.

The calculation above shows that $\mathcal J^L(p)$ can be optimized interval by interval once the thresholds $p_0,\ldots,p_K$ in \eqref{eq:premium_partition_points} and the associated follower retention configurations $(\mathcal C_k,\mathcal P_k,\mathcal R_k)$, $k=1,\ldots,K$, are determined. Although each follower retention configuration is fixed only on the open interval $(p_{k-1},p_k)$, Proposition~\ref{prop:premium_response} implies that the equilibrium retention responses, and hence $\mathcal J^L(p)$, are continuous in $p$. Therefore, maximizing the intervalwise expression for $\mathcal J^L(p)$ over $[p_{k-1},p_k]$ includes possible maximizers at the endpoints. We now present our threshold continuation procedure, which determines these thresholds and configurations sequentially and records one premium candidate from each closed interval $[p_{k-1},p_k]$ for \eqref{eq:pstar_argmax}.

\medskip
\noindent
\begin{enumerate}
\item \textbf{Initialize the first retention configuration.} Set $p_0=p^{\min}$ and $k=1$. Set
\begin{equation*}
\mathcal{C}_{1}:=\{i:a^{i}>p^{\min}\},\qquad\mathcal{P}_{1}:=\{i:a^{i}=p^{\min}\},\qquad\mathcal{R}_{1}:=\emptyset.
\end{equation*}
Thus, $(\mathcal C_1,\mathcal P_1,\mathcal R_1)$ is the follower retention configuration for premiums just above $p_0=p^{\min}$.

\item \textbf{Advance to the next threshold.} At the beginning of iteration $k$, the threshold $p_{k-1}$ and the follower retention configuration $(\mathcal C_k,\mathcal P_k,\mathcal R_k)$ have been determined.

\begin{enumerate}
\item \emph{Compute the intervalwise response.} Let $\widehat V_k^N(p)$ denote the expression in \eqref{eq:Vhat_interval} obtained from the configuration $(\mathcal C_k,\mathcal P_k,\mathcal R_k)$. Then \eqref{eq:qhat_interval_configuration} gives the corresponding intervalwise equilibrium retention responses. Moreover, $\widehat V_k^N(p)$ is affine and nondecreasing in $p$, so $p\mapsto I^i(p,\widehat V_k^N(p))$ is affine with strictly positive slope for each insurer $i$.

\item \emph{Locate the next threshold.} For each $i\in\mathcal C_k$, define the candidate full-cession threshold under the current configuration, denoted by $p_k^{i,\min}$, as the unique solution of
\begin{equation*}
I^i(p,\widehat V_k^N(p))=0.
\end{equation*}
For each $i\in\mathcal P_k$, define the candidate full-retention threshold under the current configuration, denoted by $p_k^{i,\max}$, as the unique solution of
\begin{equation*}
I^i(p,\widehat V_k^N(p))=1.
\end{equation*}
The next threshold is then obtained by
\begin{equation*}
p_{k}=\min\Big\{ p^{\max},\,\min_{i\in\mathcal{C}_{k}}p_{k}^{i,\min},\,\min_{i\in\mathcal{P}_{k}}p_{k}^{i,\max}\Big\},
\end{equation*}
with the convention that $\min\emptyset = +\infty$.

\item \emph{Optimize on the current interval.} Substituting the intervalwise equilibrium retention responses obtained in Step~2(a) into \eqref{eq:leader_premium_objective} gives the intervalwise expression for $\mathcal J^L(p)$. Choose
\begin{equation*}
p_k^*:=\underset{p\in[p_{k-1},p_k]}{\arg\max}\,\mathcal J^L(p),
\end{equation*}
and record it as the premium candidate associated with $[p_{k-1},p_k]$.

\item \emph{Update the retention configuration.} If $p_k=p^{\max}$, proceed to Step 3. Otherwise, define
\begin{equation*}
E_{k}^{\min}:=\{i\in\mathcal{C}_{k}:p_{k}^{i,\min}=p_{k}\},\qquad E_{k}^{\max}:=\{i\in\mathcal{P}_{k}:p_{k}^{i,\max}=p_{k}\}.
\end{equation*}
The sets $E_k^{\min}$ and $E_k^{\max}$ collect the insurers that, as the premium crosses $p_k$, move from full cession to partial retention and from partial retention to full retention, respectively. The next follower retention configuration is then obtained by
\begin{equation*}
\mathcal{C}_{k+1}=\mathcal{C}_{k}\setminus E_{k}^{\min},\qquad\mathcal{P}_{k+1}=(\mathcal{P}_{k}\cup E_{k}^{\min})\setminus E_{k}^{\max},\qquad\mathcal{R}_{k+1}=\mathcal{R}_{k}\cup E_{k}^{\max}.
\end{equation*}
Increase $k$ by one and return to Step~2(a).
\end{enumerate}

\item \textbf{Select the optimal premium.} Compare the values of $\mathcal J^L$ at the intervalwise premium candidates $p_1^*,\ldots,p_K^*$. Any candidate with the largest value is an optimal premium level $p^*$ in \eqref{eq:pstar_argmax}.
\end{enumerate}

The threshold continuation procedure makes the problem \eqref{eq:pstar_argmax} computationally tractable. A naive approach would have to identify the follower retention configuration at each $p\in[p^{\min},p^{\max}]$, potentially by checking up to $3^N$ assignments of insurers to full-cession, partial-retention, and full-retention groups, and then evaluate $\mathcal J^L(p)$ under the selected configuration. Since $p$ ranges over a continuum, this is not a tractable premium-optimization procedure. By contrast, the continuation procedure determines at most $2N-1$ partition intervals (see Corollary~\ref{cor:premium_partition}), updates the follower retention configuration only when a threshold is reached, and obtains one premium candidate on each closed interval $[p_{k-1},p_k]$ by maximizing a linear or strictly concave quadratic expression for $\mathcal J^L(p)$. Hence, neither enumeration of possible retention configurations nor continuous search over premium levels is required, and the total computational complexity is $O(N^2)$.

Once an optimal premium level $p^*$ (which need not be unique) is selected from the procedure, Proposition~\ref{prop:leader_equilibrium} gives the corresponding constant Stackelberg--Nash equilibrium by setting $q^{i,*}=\widehat q^i(p^*)$ and $\pi^{i,*}=\widehat\pi^i$, $i=1,\ldots,N$, together with the leader's optimal investment control $\pi^{L,*}$. This completes the characterization of finite-player equilibria and provides the benchmark for the mean field formulation in the next section.

\section{Stackelberg Mean Field Game}
\label{sec:smfg}
\subsection{Mean Field Formulation and Equilibrium Concept}
\label{subsec:smfg_formulation}

We now formulate a mean field counterpart of the finite-player Stackelberg--Nash game studied in Section~\ref{sec:finite_game}. The market consists of one reinsurer and a continuum of infinitesimal heterogeneous insurers. As in the finite-player game, the reinsurer acts as the Stackelberg leader and first announces a common reinsurance premium rate together with its own investment control. Given this announcement, the insurers choose their retention and investment controls through a follower mean
field game. The main difference from Section~\ref{sec:finite_game} is that finite averages over insurers are replaced by expectations with respect to the type distribution.

Heterogeneity across insurers is represented by a random type vector $\zeta$ 
\begin{equation*}
\zeta=(x,\eta,a,\gamma,\theta,v^{0},v,\mu,\sigma^{0},\sigma),
\end{equation*}
taking values in a compact set $\mathcal Z$:
\begin{equation*}
\mathcal Z \subset
\mathbb{R}\times(0,\infty)\times(0,\infty)\times(0,\infty)\times[0,1]\times(0,\infty)\times(0,\infty)\times\mathbb{R}\times(0,\infty)\times(0,\infty).
\end{equation*}
Here $x$ is the initial wealth, $\eta$ is the safety loading, $a$ is the expected claim rate, $\gamma$ is the risk-tolerance parameter, $\theta$ is the relative performance parameter, $v^{0}$ and $v$ are the common and idiosyncratic insurance-claim volatility coefficients, and $\mu$, $\sigma^{0}$, and $\sigma$ are the financial-market coefficients. The law of $\zeta$ describes the distribution of insurer types in the population.

Let $(\Omega,\mathcal F,\mathbb P)$ support the random type vector $\zeta$ and mutually independent standard Brownian motions $W^{0}$, $W$, $B^{0}$, $B$, and $B^{L}$. We assume that $\zeta$ is independent of these Brownian motions. Here, $W^{0}$ represents the common insurance-claim noise affecting the insurer population, while $W$ represents the idiosyncratic insurance-claim noise faced by the representative insurer. The process $B^{0}$ represents the common financial-market noise shared by the insurer population and the reinsurer, while $B$ and $B^{L}$ represent the idiosyncratic financial-market noises faced by the representative insurer and the reinsurer, respectively. We write $\mathbb F^{0}=(\mathcal F_t^0)_{0\le t\le T}$ for the filtration generated by the common noises $(W^{0},B^{0})$, and $\mathbb F^{\mathrm{MF}}=(\mathcal F_t^{\mathrm{MF}})_{0\le t\le T}$ for the smallest filtration satisfying the usual conditions such that $\zeta$ is $\mathcal F_0^{\mathrm{MF}}$-measurable and $W^0$, $W$, $B^0$, $B$, and $B^L$ are adapted.

As in the finite-player analysis, we focus on the constant-control formulation. The reinsurer's controls consist of a common reinsurance premium rate and its investment control:
\begin{equation*}
p_{t}\equiv p\ge0,\qquad\pi_{t}^{L}\equiv\pi^{L}\in\mathbb{R},
\end{equation*}
whereas the representative insurer's controls are its retention level and investment control:
\begin{equation*}
q_{t}^{\zeta}\equiv q^{\zeta}\in[0,1],\qquad\pi_{t}^{\zeta}\equiv\pi^{\zeta}\in\mathbb{R}.
\end{equation*}
The reinsurer's controls $p$ and $\pi^L$ are deterministic constants. The representative insurer's controls $q^\zeta$ and $\pi^\zeta$ are constant in time, but may depend on the random type vector $\zeta$. We define the admissible set for the representative insurer's constant controls by
\begin{equation*}
\mathcal{A}^{\mathrm{MF}}:=\big\{ (q^{\zeta},\pi^{\zeta}):q^{\zeta},\pi^{\zeta}\text{ are }\sigma(\zeta)\text{-measurable},\ q^{\zeta}\in[0,1]\ \text{a.s.},\ \mathbb{E}[(\pi^{\zeta})^{2}]<\infty\big\}.
\end{equation*}

As in the finite-player formulation, insurers manage their wealth by ceding part of their insurance risk through proportional reinsurance and by investing in a risky asset. The reinsurer's insurance surplus is generated by the ceded insurance business of the insurer population, and the reinsurer also invests in its own risky asset. 

In the followers' mean field formulation, we describe the insurers' retention-investment problem through a representative insurer with random type $\zeta$. With control $(q^\zeta,\pi^\zeta)\in\mathcal A^{\mathrm{MF}}$, we model its wealth process by
\begin{equation}
\label{eq:dynamics_insurer_MF}
dX_{t}^{\zeta}=\big(\eta a-(1-q^{\zeta})(p-a)+\pi^{\zeta}\mu\big)\,dt+q^{\zeta}v^{0}\,dW_{t}^{0}+q^{\zeta}v\,dW_{t}+\pi^{\zeta}\sigma^{0}\,dB_{t}^{0}+\pi^{\zeta}\sigma\,dB_{t},\quad X_{0}^{\zeta}=x,
\end{equation}
where $(p,\pi^{L})\in[0,\infty)\times\mathbb{R}$ is the reinsurer's control announced to the insurer population. The reinsurer's wealth process is then modeled by
\begin{equation}
\label{eq:dynamics_reinsurer_MF}
dX_{t}^{L,\mathrm{MF}}=\big(\mathbb{E}[(1-q^{\zeta})(p-a)]+\pi^{L}\mu^{L}\big)\,dt+\mathbb{E}[(1-q^{\zeta})v^{0}]\,dW_{t}^{0}+\pi^{L}\sigma^{L,0}\,dB_{t}^{0}+\pi^{L}\sigma^{L}\,dB_{t}^{L},\quad X_{0}^{L,\mathrm{MF}}=x^{L}.
\end{equation}

\begin{remark}
The reinsurer's mean field wealth dynamics \eqref{eq:dynamics_reinsurer_MF} is the continuum population analogue of \eqref{eq:dynamics_reinsurer}. The expectations $\mathbb{E}[(1-q^{\zeta})(p-a)]$ and $\mathbb{E}[(1-q^{\zeta})v^{0}]$ correspond to the $1/N$-scaled finite-player aggregates in \eqref{eq:dynamics_reinsurer}, while the terms involving the idiosyncratic insurance-claim noises $W^{i}$ diversify away in the continuum limit. Since $q^{\zeta}$ is $\sigma(\zeta)$-measurable and $\zeta$ is independent of the common noises, the conditional expectations given $\mathcal{F}_{T}^{0}$, which usually appear in the standard framework of MFG with common noise, reduce to unconditional ones here. 
\end{remark}

We now define the players' objective functions. As in the finite-player formulation, insurers evaluate terminal wealth relative to the insurer population average. In the mean field formulation, each insurer is infinitesimal relative to the continuum population, so this population average is treated as given in the representative insurer's problem. We represent this given candidate population average terminal wealth by an $\mathcal F_T^0$-measurable random variable $\overline X_T$. Accordingly, given $\overline X_T$, the representative insurer's problem is to maximize
\begin{equation}
\label{eq:J_insurer_MF}
J^{\mathrm{MF}}(q^{\zeta},\pi^{\zeta};\overline{X}_{T},p, \pi^{L}):=\mathbb{E}\Big[-\exp\Big(-\frac{1}{\gamma}\big(X_{T}^{\zeta}-\theta\overline{X}_{T}\big)\Big)\Big]
\end{equation}
over $(q^\zeta,\pi^\zeta)\in\mathcal A^{\mathrm{MF}}$, where $X^\zeta$ solves \eqref{eq:dynamics_insurer_MF} with the control pair $(q^\zeta,\pi^\zeta)$. The reinsurer's problem is to maximize
\begin{equation}
\label{eq:J_reinsurer_MF}
J^{L,\mathrm{MF}}(p,\pi^{L};q^{\zeta},\pi^{\zeta}):=\mathbb{E}\Big[-\exp\Big(-\frac{1}{\gamma^{L}}X_{T}^{L,\mathrm{MF}}\Big)\Big]
\end{equation}
over $(p,\pi^L)\in[0,\infty)\times\mathbb R$, where $X^{L,\mathrm{MF}}$ solves \eqref{eq:dynamics_reinsurer_MF}.

\begin{remark}
\label{rem:typewise_MF_objective}
The representative insurer's objective \eqref{eq:J_insurer_MF} is formulated as an unconditional expectation over the random type $\zeta$. As in \cite[Section 2.2.2]{LackerZariphopoulos2019}, this admits an equivalent type-by-type interpretation: for a fixed $\overline X_T$, maximizing \eqref{eq:J_insurer_MF} over $\sigma(\zeta)$-measurable controls is equivalent to maximizing, for almost every type $\zeta$, the typewise conditional objective
\begin{equation}
\label{eq:J_insurer_conditional_MF}
J^{\zeta}(q^{\zeta},\pi^{\zeta};\overline{X}_{T},p,\pi^{L}):=\mathbb{E}\Big[-\exp\Big(-\frac{1}{\gamma}(X_{T}^{\zeta}-\theta\overline{X}_{T})\Big)\,\Big|\,\zeta\Big].
\end{equation}
\end{remark}

The order of decisions is the same as in Section~\ref{sec:finite_game}. The reinsurer first announces $(p,\pi^L)$. Given this leader control, the insurer population plays a follower mean field game: for a candidate population average $\overline X_T$, the representative insurer solves \eqref{eq:J_insurer_MF}, and the candidate average must coincide with the average generated by the resulting control. The reinsurer anticipates this response when solving its own problem~\eqref{eq:J_reinsurer_MF}.

We now formalize the follower-side mean field equilibrium response and the corresponding Stackelberg mean field equilibrium.

\begin{definition}[Constant Stackelberg mean field equilibrium]
\label{def:CSMFE}
Fix a constant leader control $(p,\pi^{L})\in[0,\infty)\times\mathbb{R}$. A pair $(q^{\zeta},\pi^{\zeta})\in\mathcal{A}^{\mathrm{MF}}$ is called a \emph{constant follower mean field equilibrium (MFE) response to $(p,\pi^{L})$} if there exists an $\mathcal{F}_{T}^{0}$-measurable random variable $\overline{X}_{T}$ such that, given $\overline{X}_{T}$, the control pair $(q^{\zeta},\pi^{\zeta})$ maximizes the representative insurer's objective \eqref{eq:J_insurer_MF} over $\mathcal{A}^{\mathrm{MF}}$, and $\overline{X}_{T}$ coincides with the population average generated by $(q^{\zeta},\pi^{\zeta})$:
\begin{equation}
\label{eq:MF_consistency_condition}
\overline{X}_{T}=\mathbb{E}[X_{T}^{\zeta}\mid\mathcal{F}_{T}^{0}],
\end{equation}
where $X^{\zeta}$ solves \eqref{eq:dynamics_insurer_MF} with the control pair $(q^{\zeta},\pi^{\zeta})$.

A constant control profile $(p^{\mathrm{MF},*},\pi^{L,\mathrm{MF},*},q^{\zeta,*},\pi^{\zeta,*})$ is called a \emph{constant Stackelberg mean field equilibrium} if $(q^{\zeta,*},\pi^{\zeta,*})$ is a constant follower mean field equilibrium response to $(p^{\mathrm{MF},*},\pi^{L,\mathrm{MF},*})$, and the leader control $(p^{\mathrm{MF},*},\pi^{L,\mathrm{MF},*})$ is optimal for the reinsurer's objective \eqref{eq:J_reinsurer_MF} given this follower equilibrium response.
\end{definition}

\subsection{Follower Mean Field Equilibrium Response}
\label{subsec:smfg_followers}

We next characterize the follower mean field equilibrium for a fixed premium $p$. The leader's investment $\pi^L$ does not enter the representative insurer's wealth dynamics~\eqref{eq:dynamics_insurer_MF} or objective \eqref{eq:J_insurer_MF}, so the follower response depends only on $p$.

For a constant follower control $(q^\zeta,\pi^\zeta)\in\mathcal A^{\mathrm{MF}}$, the independence of $\zeta$, $W$, and $B$ from the common noises, together with the $\sigma(\zeta)$-measurability of $(q^\zeta,\pi^\zeta)$, implies that the population average mandated by the consistency condition \eqref{eq:MF_consistency_condition} takes the affine common-noise form
\begin{equation}
\label{eq:MF_average_affine_form}    
\overline{X}_{T}=\mathbb{E}[X_{T}^{\zeta}\mid\mathcal{F}_{T}^{0}]=\mathbb{E}[x]+T\big(\mathbb{E}[\eta a]-\mathbb{E}[(1-q^{\zeta})(p-a)]+\mathbb{E}[\pi^{\zeta}\mu]\big)+\mathbb{E}[q^{\zeta}v^{0}]W_{T}^{0}+\mathbb{E}[\pi^{\zeta}\sigma^{0}]B_{T}^{0}.
\end{equation}
Hence, although Definition~\ref{def:CSMFE} requires $\overline X_T$ to be $\mathcal F_T^0$-measurable, the consistency condition restricts attention to candidates of the affine form
\begin{equation*}
\overline{X}_{T}=C+VW_{T}^{0}+\Sigma B_{T}^{0},\qquad C,\,V,\,\Sigma\in\mathbb{R},
\end{equation*}
whose coefficients $C, V, \Sigma$ are determined by the controls $(q^\zeta,\pi^\zeta)$ through the expectations in \eqref{eq:MF_average_affine_form}.

By Remark~\ref{rem:typewise_MF_objective}, the maximization of \eqref{eq:J_insurer_MF} reduces to a type-by-type maximization through the conditional objective \eqref{eq:J_insurer_conditional_MF}. For a candidate $\overline X_T$ of the affine form above, conditional on $\zeta$, the random variable $X_T^\zeta-\theta\overline X_T$ is Gaussian, so this in turn is equivalent, for almost every type $\zeta$, to maximizing the mean--variance criterion
\begin{equation*}
\widetilde{J}^{\zeta}(q^{\zeta},\pi^{\zeta};\overline X_T):=\mathbb{E}[X_{T}^{\zeta}-\theta\overline{X}_{T}\mid\zeta]-\frac{1}{2\gamma}\mathrm{Var}(X_{T}^{\zeta}-\theta\overline{X}_{T}\mid\zeta).
\end{equation*}
A direct computation using \eqref{eq:dynamics_insurer_MF} and the candidate affine form shows that
\begin{align}
\label{eq:Jtilde_MF_expanded}
\widetilde{J}^{\zeta}(q^{\zeta},\pi^{\zeta};\overline X_T) = & \ \mathrm{Const.}+T\Big[(p-a)q^{\zeta}+\frac{1}{\gamma}\theta v^{0}V q^{\zeta}-\frac{1}{2\gamma}\big((v^{0})^{2}+v^{2}\big)(q^{\zeta})^{2}\Big]
\nonumber\\
& + T\Big[\Big(\mu+\frac{1}{\gamma}\theta\sigma^{0}\Sigma\Big)\pi^{\zeta}-\frac{1}{2\gamma}\big((\sigma^{0})^{2}+\sigma^{2}\big)(\pi^{\zeta})^{2}\Big],
\end{align}
where ``Const.'' denotes terms independent of the control values $(q^\zeta,\pi^\zeta)$.

The reduced criterion \eqref{eq:Jtilde_MF_expanded} separates into a retention term in $q^\zeta$ and an investment term in $\pi^\zeta$, with the premium level $p$ appearing only in the retention term. The two scalars $V$ and $\Sigma$ are treated as fixed in the typewise maximization and are then determined by the mean field consistency conditions. 

\begin{proposition}[Follower MFE for a fixed premium]
\label{prop:CMFER}
Fix $p\ge0$. Then the follower-side game admits a unique constant follower mean field equilibrium response $(\widehat q^\zeta(p),\widehat\pi^\zeta)$. Define the average common-insurance and common-financial exposures associated with this equilibrium response by
\begin{equation}
\label{eq:Vhat_Sigmahat_MF}
\widehat{V}(p):=\mathbb{E}[\widehat{q}^{\zeta}(p)v^{0}],\qquad\widehat{\Sigma}:=\mathbb{E}[\widehat{\pi}^{\zeta}\sigma^{0}],
\end{equation}
and define
\begin{equation}
\label{eq:retention_incentive_MF}
I^{\zeta}(p,V):=\frac{(p-a)+\frac{1}{\gamma}\theta v^{0}V}{\frac{1}{\gamma}\big((v^{0})^{2}+v^{2}\big)},\qquad p\ge0,\ V\in\mathbb{R}.
\end{equation}
Then
\begin{equation}
\label{eq:pihat_MF}
\widehat{\pi}^{\zeta}=\frac{\mu+\frac{1}{\gamma}\theta\sigma^{0}\widehat{\Sigma}}{\frac{1}{\gamma}\big((\sigma^{0})^{2}+\sigma^{2}\big)},\qquad\widehat{\Sigma}=\frac{\mathbb{E}\big[\gamma\mu\sigma^{0}/((\sigma^{0})^{2}+\sigma^{2})\big]}{1-\mathbb{E}\big[\theta(\sigma^{0})^{2}/((\sigma^{0})^{2}+\sigma^{2})\big]},
\end{equation}
and
\begin{equation}
\label{eq:qhat_MF}
\widehat{q}^{\zeta}(p)=\Pi_{[0,1]}\big(I^{\zeta}(p,\widehat{V}(p))\big).
\end{equation}
The average common-insurance exposure $\widehat V(p)$ is determined as the unique solution of the fixed-point equation for $V$:
\begin{equation}
\label{eq:fixed_point_V_MF}
V=\mathbb{E}\big[v^{0}\,\Pi_{[0,1]}\big(I^{\zeta}(p,V)\big)\big].
\end{equation}
\end{proposition}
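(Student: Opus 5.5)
The argument follows the same two-step structure as Proposition~\ref{prop:follower_response}: first solve the typewise best response for a fixed candidate average, then impose the consistency condition \eqref{eq:MF_consistency_condition}.

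\textbf{Best response to a fixed candidate.} Take any candidate $\overline X_T$. By \eqref{eq:MF_average_affine_form}, consistency forces $\overline X_T=C+VW^0_T+\Sigma B^0_T$, so it suffices to work with candidates of this affine form. For fixed $(C,V,\Sigma)$, Remark~\ref{rem:typewise_MF_objective} reduces the problem to maximizing \eqref{eq:Jtilde_MF_expanded} type by type. The $q$-part is a strictly concave quadratic on $[0,1]$, because $(v^0)^2+v^2>0$. Its unique maximizer is therefore the projection $\Pi_{[0,1]}(I^\zeta(p,V))$. The $\pi$-part is a strictly concave quadratic on $\mathbb R$, with unique maximizer $\gamma(\mu+\theta\sigma^0\Sigma/\gamma)/((\sigma^0)^2+\sigma^2)$. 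Both maximizers are Borel functions of $\zeta$. Since $\mathcal Z$ is compact, $\gamma$ is bounded and $(\sigma^0)^2+\sigma^2$ is bounded away from zero, so the investment control is bounded and hence square integrable. The pair is thus admissible and is the unique optimizer a.s.

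\textbf{Consistency for the investment control.} By \eqref{eq:MF_average_affine_form}, consistency requires $V=\mathbb E[q^\zeta v^0]$, $\Sigma=\mathbb E[\pi^\zeta\sigma^0]$, and that $C$ equal the mean term. The $C$ condition is automatic: $C$ does not enter the best response, so it is determined once $q$ and $\pi$ are. The $\Sigma$ condition is linear in $\Sigma$,
\[
\Sigma=\mathbb E\big[\gamma\mu\sigma^0/((\sigma^0)^2+\sigma^2)\big]+\Sigma\,\mathbb E\big[\theta(\sigma^0)^2/((\sigma^0)^2+\sigma^2)\big].
\]
Its coefficient satisfies $\mathbb E[\theta(\sigma^0)^2/((\sigma^0)^2+\sigma^2)]<1$, because $\theta\le1$ and $\sigma>0$ on the compact set $\mathcal Z$ make the ratio at most $\max_{\mathcal Z}(\sigma^0)^2/((\sigma^0)^2+\sigma^2)<1$. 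This gives a unique $\widehat\Sigma$, which is \eqref{eq:pihat_MF}.

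\textbf{Consistency for the retention control (main step).} The $V$ condition is exactly the fixed-point equation \eqref{eq:fixed_point_V_MF}, $V=F(V):=\mathbb E[v^0\Pi_{[0,1]}(I^\zeta(p,V))]$. The plan is to show $F$ is a contraction on $\mathbb R$. Projection is $1$-Lipschitz, and $I^\zeta$ is affine in $V$ with slope $\theta v^0/((v^0)^2+v^2)$, so
\[
|F(V)-F(V')|\le \mathbb E\Big[\frac{\theta (v^0)^2}{(v^0)^2+v^2}\Big]\,|V-V'|=:\kappa\,|V-V'|.
\]
By the same compactness argument as for $\Sigma$ (now using $v>0$ and $\theta\le1$ on $\mathcal Z$), $\kappa<1$. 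Banach's fixed point theorem then gives a unique $\widehat V(p)$. As a backup, uniqueness alone also follows from monotonicity: $V-F(V)$ is strictly increasing with slope at least $1-\kappa>0$, and existence follows because $F$ maps into the bounded interval $[0,\mathbb E v^0]$.

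\textbf{Conclusion.} Uniqueness of the follower MFE comes from combining the two steps. Any equilibrium must be the typewise best response to an affine candidate whose $(V,\Sigma)$ solve these two scalar equations, and both solutions are unique. The resulting $(\widehat q^\zeta,\widehat\pi^\zeta)$ is therefore unique a.s. and is given by \eqref{eq:qhat_MF} and \eqref{eq:pihat_MF}. The step needing the most care is the strict bound $\kappa<1$, together with the a.s. and measurability bookkeeping in the typewise reduction. Both rest on compactness of $\mathcal Z$ and the positivity of $v$ and $\sigma$.
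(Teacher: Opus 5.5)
Your proposal is correct and follows essentially the same route as the paper's proof: typewise projected first-order conditions against an affine candidate average, a linear consistency equation for $\Sigma$, and a Banach contraction for the $V$ fixed point with Lipschitz constant $\mathbb{E}\big[\theta(v^{0})^{2}/((v^{0})^{2}+v^{2})\big]<1$. The differences are cosmetic. You run the contraction on all of $\mathbb{R}$ rather than on $[0,\bar V]$, and you obtain the strict bound from compactness of $\mathcal Z$ rather than from a.s.\ positivity of $v$ and $\sigma$.
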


\begin{proof}
See \ref{app:3.1}.
\end{proof}

Proposition~\ref{prop:CMFER} reduces the follower MFG to the scalar fixed point~\eqref{eq:fixed_point_V_MF} for $\widehat V(p)$.  Once $\widehat V(p)$ is determined, the equilibrium retention response is given typewise by the projection formula \eqref{eq:qhat_MF}, while the equilibrium investment is explicit and independent of $p$.

For each type $\zeta$, the equilibrium retention incentive is 
\begin{equation}
\label{eq:retention_incentive_equilibrium_MF}
I^\zeta(p,\widehat V(p))=\frac{(p-a)+\frac{1}{\gamma}\theta v^0\widehat V(p)}{\frac{1}{\gamma}\big((v^0)^2+v^2\big)}.
\end{equation}
As in the finite-player expression \eqref{eq:retention_incentive_equilibrium}: a nonpositive value gives full cession, a value between $0$ and $1$ gives partial retention, and a value at least $1$ gives full retention (cf. \eqref{eq:qhat_MF}). The term $p-a$ is the premium-loss spread, while $\theta v^0\widehat V(p)/\gamma$ captures the effect of relative performance concerns through the population's average common-insurance exposure.   If $\theta=0$, this population term disappears and the response reduces to
\begin{equation}
\label{eq:qhat_MF_no_relative_performance}
\widehat q^\zeta(p)=\Pi_{[0,1]}\Big(\frac{p-a}{\frac{1}{\gamma}\big((v^0)^2+v^2\big)}\Big).
\end{equation}
For types with $\theta>0$ and $\widehat V(p)>0$, the response in \eqref{eq:qhat_MF} is no smaller than the response in \eqref{eq:qhat_MF_no_relative_performance}, so relative performance concerns can increase the equilibrium retention response through the term $\theta v^0\widehat V(p)/\gamma$.

We now study how the follower-side mean field equilibrium response varies with the premium level. Since the equilibrium investment response $\widehat\pi^\zeta$ is independent of $p$, it remains to characterize the premium-response structure of $\widehat q^\zeta(p)$ and $\widehat V(p)$.

\begin{proposition}[Premium response of mean field retentions]
\label{prop:premium_response_CMFER}
For almost every type $\zeta$, the map $p\mapsto\widehat q^\zeta(p)$ is continuous and monotonically increasing on $[0,\infty)$.

Define the population-wide full-cession and full-retention thresholds by
\begin{equation*}
p^{\mathrm{MF},\min}:=\sup\{p\ge0:\widehat{q}^{\zeta}(p)=0\text{ for a.e. }\zeta\},\qquad p^{\mathrm{MF},\max}:=\inf\{p\ge0:\widehat{q}^{\zeta}(p)=1\text{ for a.e. }\zeta\}.
\end{equation*}
Then $p^{\mathrm{MF},\min}<p^{\mathrm{MF},\max}$, and for every $p\le p^{\mathrm{MF},\min}$ and almost every type $\zeta$, $\widehat q^\zeta(p)=0$, and for every $p\ge p^{\mathrm{MF},\max}$ and almost every type $\zeta$, $\widehat q^\zeta(p)=1$. These thresholds are explicitly given by
\begin{equation}
\label{eq:pminmax_MF}
p^{\mathrm{MF},\min}=\mathrm{ess\,inf}\,a,\qquad p^{\mathrm{MF},\max}=\mathrm{ess\,sup}\Big(a+\frac{1}{\gamma}((v^{0})^{2}+v^{2})-\frac{1}{\gamma}\theta v^{0}\bar{V}\Big),
\end{equation}
where $\bar{V}:=\mathbb E[v^0]$.

Consequently, for almost every type $\zeta$, if we define the type-dependent full-cession and full-retention thresholds by
\begin{equation*}
p^{\zeta,\min}:=\sup\{p\ge0:\widehat q^\zeta(p)=0\}, \qquad p^{\zeta,\max}:=\inf\{p\ge0:\widehat q^\zeta(p)=1\},
\end{equation*}
then these are $\sigma(\zeta)$-measurable random variables in $[p^{\mathrm{MF},\min},p^{\mathrm{MF},\max}]$ with $p^{\zeta,\min}< p^{\zeta,\max}$ a.s., and for almost every type $\zeta$, 
\begin{equation*}
\widehat q^\zeta(p)=0\quad\text{for }p\le p^{\zeta,\min}, \quad 0<\widehat q^\zeta(p)<1\quad\text{for }p^{\zeta,\min}<p<p^{\zeta,\max}, \quad \widehat q^\zeta(p)=1\quad\text{for }p\ge p^{\zeta,\max}.
\end{equation*}
\end{proposition}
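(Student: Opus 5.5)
The argument runs through the scalar fixed point \eqref{eq:fixed_point_V_MF}. Write $F(p,V):=\mathbb E[v^0\Pi_{[0,1]}(I^\zeta(p,V))]$. By Proposition~\ref{prop:CMFER}, $\widehat V(p)$ is the unique solution of $V=F(p,V)$. The map $F$ is nondecreasing in $p$ and in $V$, since each $I^\zeta$ is increasing in $p$ and nondecreasing in $V$ (because $\theta\ge0$) and the projection is monotone.

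\textbf{Step 1: monotonicity and continuity of $\widehat V$ in $p$.} From compactness of $\mathcal Z$ we get a uniform bound
\[
\frac{\theta (v^0)^2}{(v^0)^2+v^2}\le \kappa<1,
\]
except in the degenerate case $\theta=1$ with $v\to0$, which compactness excludes since $v>0$ on $\mathcal Z$. Hence the map $V\mapsto F(p,V)$ is Lipschitz with constant $\mathbb E[\theta (v^0)^2/((v^0)^2+v^2)]\le\kappa<1$, i.e. it is a contraction. It is also Lipschitz in $p$ with constant $L=\mathbb E[\gamma v^0/((v^0)^2+v^2)]$. The standard contraction estimate then gives
\[
|\widehat V(p)-\widehat V(p')|\le \frac{L}{1-\kappa}\,|p-p'|,
\]
so $\widehat V$ is continuous. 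For monotonicity, take $p<p'$. Starting the Picard iteration for $p'$ at $\widehat V(p)$, we have $F(p',\widehat V(p))\ge F(p,\widehat V(p))=\widehat V(p)$. Since $F(p',\cdot)$ is monotone, the iterates increase and converge to $\widehat V(p')$, so $\widehat V(p')\ge\widehat V(p)$.

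\textbf{Step 2: properties of $\widehat q^\zeta$.} By \eqref{eq:qhat_MF}, $\widehat q^\zeta(p)$ is a composition of continuous monotone maps. It is therefore continuous and nondecreasing for every type, in particular for a.e. type.

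\textbf{Step 3: the thresholds \eqref{eq:pminmax_MF}.} For $p\le \underline a:=\mathrm{ess\,inf}\,a$, the value $V=0$ solves the fixed point, because $I^\zeta(p,0)=\gamma(p-a)/((v^0)^2+v^2)\le0$ a.s. By uniqueness $\widehat V(p)=0$, and hence $\widehat q^\zeta(p)=0$ a.s. For $p>\underline a$, the set $\{a<p\}$ has positive probability. Suppose $\widehat V(p)=0$. Then $I^\zeta(p,0)>0$ on that set, which forces $\widehat V(p)=F(p,0)>0$, a contradiction. So $\widehat V(p)>0$ and not all types cede fully, which gives $p^{\mathrm{MF},\min}=\underline a$.

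Symmetrically, write $\overline p$ for the ess sup in \eqref{eq:pminmax_MF}. For $p\ge\overline p$, the value $V=\bar V$ solves the fixed point, since $I^\zeta(p,\bar V)\ge1$ a.s. Hence $\widehat V(p)=\bar V$ and $\widehat q^\zeta(p)=1$ a.s. For $p<\overline p$, suppose $\widehat q^\zeta(p)=1$ a.s. Then $\widehat V(p)=\bar V$, but $I^\zeta(p,\bar V)<1$ on a set of positive probability, a contradiction.

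It remains to show $\underline a<\overline p$. At $p=\underline a$ we have $\widehat V=0$, so $\widehat q^\zeta=0$ a.s., whereas at $\overline p$ we have $\widehat q^\zeta=1$ a.s.; if $\overline p\le\underline a$, both would hold at the same $p$, which is impossible. Equivalently, in terms of parameters: $(v^0)^2+v^2-\theta v^0\bar V>0$ would need $\mathrm{ess\,sup}$ of $a+\cdots$ to exceed $\mathrm{ess\,inf}\,a$, and this follows from the same argument.

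\textbf{Step 4: type-wise thresholds.} This is the main technical point. The quantity $p\mapsto I^\zeta(p,\widehat V(p))$ is strictly increasing, because the $p$ term has slope $\gamma/((v^0)^2+v^2)>0$ and the $V$ term is nondecreasing. It is also continuous. Therefore the level sets $\{I\le0\}$ and $\{I\ge1\}$ are closed half-lines $(-\infty,p^{\zeta,\min}]$ and $[p^{\zeta,\max},\infty)$, with $0<I<1$ strictly in between, and $p^{\zeta,\min}<p^{\zeta,\max}$ by strict monotonicity. This gives the trichotomy.

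The population-wide results from Step 3 place both thresholds in $[p^{\mathrm{MF},\min},p^{\mathrm{MF},\max}]$ a.s.

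For measurability, I would use
\[
\{p^{\zeta,\min}\ge r\}=\{I^\zeta(r,\widehat V(r))\le 0\},
\]
and similarly for $p^{\zeta,\max}$. Each such set is a measurable function of $\zeta$ for fixed $r$, so both thresholds are $\sigma(\zeta)$-measurable.
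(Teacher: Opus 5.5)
Your proof is correct and follows essentially the same route as the paper: the joint Lipschitz/contraction estimate gives continuity of $\widehat V$, Picard iteration started at $\widehat V(p)$ gives monotonicity, composition gives the properties of $\widehat q^\zeta$, and the population thresholds are identified through the fixed points $V=0$ and $V=\bar V$. Where the paper only asserts, you supply details: the uniqueness argument for $p^{\mathrm{MF},\min}<p^{\mathrm{MF},\max}$ (otherwise both $0$ and $\bar V$ would be fixed points at the same premium), the strict monotonicity of $p\mapsto I^\zeta(p,\widehat V(p))$, and the level-set proof of measurability. The only blemish is the parenthetical ``in terms of parameters'' remark, which is garbled ($(v^0)^2+v^2-\theta v^0\bar V$ need not be positive typewise) but is not needed for the argument.
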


\begin{proof}
See \ref{app:3.2}.
\end{proof}

Since insurer heterogeneity is represented by the random type $\zeta$, the population-wide thresholds $p^{\mathrm{MF},\min}$ and $p^{\mathrm{MF},\max}$ are deterministic but expressed through the essential infimum and supremum, while the type-dependent thresholds $p^{\zeta,\min}$ and $p^{\zeta,\max}$ are $\sigma(\zeta)$-measurable random variables.

Proposition~\ref{prop:premium_response_CMFER} establishes the monotonicity of $p\mapsto\widehat q^\zeta(p)$, and this implies that $p\mapsto\widehat V(p)=\mathbb E[\widehat q^\zeta(p)v^0]$ is also monotonically increasing. Hence, for types with $\theta>0$, the population-dependent term $\theta v^0\widehat V(p)/\gamma$ in the equilibrium retention incentive \eqref{eq:retention_incentive_equilibrium_MF} becomes stronger as the premium rises, providing the common-insurance exposure spillover through $\widehat V(p)$. This reinforcement is absent for types with $\theta=0$. As in the finite-player setting (Corollary~\ref{cor:relative_performance_threshold_shift}), this spillover lowers the type-dependent thresholds; the next corollary records the shift.

\begin{corollary}
\label{cor:MF_threshold_shift}
For almost every type $\zeta$, the type-dependent thresholds satisfy
\begin{equation}
\label{eq:MF_threshold_shift}
p^{\zeta,\min}=a-\frac{1}{\gamma}\theta v^{0}\widehat{V}(p^{\zeta,\min}),\qquad p^{\zeta,\max}=a+\frac{1}{\gamma}((v^{0})^{2}+v^{2})-\frac{1}{\gamma}\theta v^{0}\widehat{V}(p^{\zeta,\max}).
\end{equation}
For types with $\theta=0$,
\begin{equation}
\label{eq:MF_threshold_no_relative_performance}
p^{\zeta,\min}=a,\qquad p^{\zeta,\max}=a+\frac{1}{\gamma}((v^{0})^{2}+v^{2}).
\end{equation}
\end{corollary}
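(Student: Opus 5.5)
The plan mirrors the proof of Corollary~\ref{cor:relative_performance_threshold_shift}, now carried out typewise. Fix a type $\zeta$ outside the null set on which Proposition~\ref{prop:premium_response_CMFER} may fail. By \eqref{eq:qhat_MF}, $\widehat q^\zeta(p)=\Pi_{[0,1]}(I^\zeta(p,\widehat V(p)))$. Define the scalar function
\begin{equation*}
g^\zeta(p):=I^\zeta(p,\widehat V(p))=\frac{\gamma(p-a)+\theta v^0\widehat V(p)}{(v^0)^2+v^2}.
\end{equation*}
This is continuous in $p$. To see it, I would note that $\widehat V(p)=\mathbb E[\widehat q^\zeta(p)v^0]$ is continuous by dominated convergence: $\widehat q^\zeta(p)\in[0,1]$ is continuous in $p$ for a.e.\ $\zeta$, and $v^0$ is bounded on the compact set $\mathcal Z$. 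Moreover, $g^\zeta$ is strictly increasing. Indeed, $\gamma(p-a)$ is strictly increasing, and $\widehat V$ is nondecreasing by the monotonicity in Proposition~\ref{prop:premium_response_CMFER}, while $\theta v^0\ge0$.

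\emph{Lower threshold.} Proposition~\ref{prop:premium_response_CMFER} gives $\widehat q^\zeta(p)=0$ for $p\le p^{\zeta,\min}$ and $\widehat q^\zeta(p)>0$ for $p$ slightly above $p^{\zeta,\min}$. Hence $g^\zeta\le0$ on $[0,p^{\zeta,\min}]$ and $g^\zeta>0$ just to the right. By continuity, $g^\zeta(p^{\zeta,\min})=0$.

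One subtlety must be handled: $p^{\zeta,\min}$ could be $0$ with $g^\zeta(0)<0$, since the premium domain starts at $0$. This is excluded because $p^{\zeta,\min}\ge p^{\mathrm{MF},\min}=\mathrm{ess\,inf}\,a>0$. On $[0,p^{\mathrm{MF},\min}]$ the quantity $\widehat V$ vanishes, so $g^\zeta(p)=\gamma(p-a)/((v^0)^2+v^2)$. This is $\le0$ there, which is consistent, and the transition point lies in the interior of the domain. Alternatively, it suffices to argue that $p^{\zeta,\min}>0$, so continuity applies from both sides.

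Setting $g^\zeta(p^{\zeta,\min})=0$ and rearranging $\gamma(p-a)+\theta v^0\widehat V(p)=0$ gives the first identity in \eqref{eq:MF_threshold_shift}.

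\emph{Upper threshold.} The argument is the same. We have $\widehat q^\zeta(p)=1$ for $p\ge p^{\zeta,\max}$ and $\widehat q^\zeta(p)<1$ for $p<p^{\zeta,\max}$, which is a nonempty left neighbourhood because $p^{\zeta,\max}>p^{\zeta,\min}\ge0$. Continuity then forces $g^\zeta(p^{\zeta,\max})=1$. Rearranging $\gamma(p-a)+\theta v^0\widehat V(p)=(v^0)^2+v^2$ gives the second identity in \eqref{eq:MF_threshold_shift}.

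\emph{Case $\theta=0$.} For types with $\theta=0$, the $\widehat V$ terms drop out and \eqref{eq:MF_threshold_no_relative_performance} follows immediately.

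The main obstacle is the continuity of $\widehat V$, which is what justifies evaluating $g^\zeta$ exactly at the thresholds. The dominated-convergence argument above settles it, and the remaining steps are algebraic.
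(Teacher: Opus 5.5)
Your proposal is correct and follows essentially the paper's route: the paper likewise reads off $I^{\zeta}(p^{\zeta,\min},\widehat V(p^{\zeta,\min}))=0$ and $I^{\zeta}(p^{\zeta,\max},\widehat V(p^{\zeta,\max}))=1$ from the projection formula, continuity, and the threshold structure of Proposition~\ref{prop:premium_response_CMFER}, and then rearranges; you add the one-sided limit details, and the continuity of $\widehat V$ can also be cited directly from the Lipschitz estimate in that proposition's proof. Your separate treatment of the case $p^{\zeta,\min}=0$ is unnecessary: $g^\zeta\le 0$ at $p^{\zeta,\min}$ and $g^\zeta>0$ immediately to its right already force $g^\zeta(p^{\zeta,\min})=0$ by right-continuity.
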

\begin{proof}
By \eqref{eq:qhat_MF}, \eqref{eq:retention_incentive_equilibrium_MF}, and the threshold structure in Proposition~\ref{prop:premium_response_CMFER}, the lower and upper thresholds are determined by
\begin{equation*}
I^{\zeta}(p^{\zeta,\min},\widehat{V}(p^{\zeta,\min}))=0,\qquad I^{\zeta}(p^{\zeta,\max},\widehat{V}(p^{\zeta,\max}))=1.
\end{equation*}
Substituting \eqref{eq:retention_incentive_MF} gives \eqref{eq:MF_threshold_shift}. For types with $\theta=0$, \eqref{eq:MF_threshold_shift} reduces to \eqref{eq:MF_threshold_no_relative_performance}.
\end{proof}

In the no-relative-performance case \eqref{eq:MF_threshold_no_relative_performance}, the lower threshold is the actuarially fair level $a$ and the upper threshold is the risk-adjusted level $a+((v^0)^2+v^2)/\gamma$. Under relative performance concerns, both thresholds are reduced by $\theta v^0\widehat V(\cdot)/\gamma$, so types with $\theta>0$ can begin partial retention and reach full retention at premiums below these no-relative-performance thresholds.

\subsection{Leader's Optimization Problem and Constant Stackelberg Mean Field Equilibrium}
\label{subsec:smfg_leader}

We now turn to the leader's problem. Substituting the follower MFE from Proposition~\ref{prop:CMFER} into the reinsurer's wealth dynamics \eqref{eq:dynamics_reinsurer_MF} reduces the leader's problem \eqref{eq:J_reinsurer_MF} to a static mean--variance optimization in $(p,\pi^L)$. The investment part is quadratic and can be solved explicitly, while the premium part
depends on the follower retention response $\widehat q^\zeta(p)$.

Using Proposition~\ref{prop:CMFER} with a constant leader investment control $\pi^L$, the reinsurer's terminal wealth in \eqref{eq:dynamics_reinsurer_MF} becomes
\begin{equation*}
X_{T}^{L,\mathrm{MF}}=x^{L}+T\big(\mathbb{E}[(1-\widehat{q}^{\zeta}(p))(p-a)]+\pi^{L}\mu^{L}\big)+\mathbb{E}[(1-\widehat{q}^{\zeta}(p))v^{0}]\,W_{T}^{0}+\pi^{L}\sigma^{L,0}B_{T}^{0}+\pi^{L}\sigma^{L}B_{T}^{L}.
\end{equation*}
Since $X_T^{L,\mathrm{MF}}$ is Gaussian, the mean--variance reduction gives that maximizing \eqref{eq:J_reinsurer_MF} is equivalent to maximizing
\begin{equation*}
\widetilde J^{L,\mathrm{MF}}(p,\pi^{L}):=\mathbb{E}[X_{T}^{L,\mathrm{MF}}]-\frac{1}{2\gamma^{L}}\mathrm{Var}(X_{T}^{L,\mathrm{MF}}),
\end{equation*}
and a direct computation yields
\begin{align}
\label{eq:Jtilde_leader_MF}
\widetilde{J}^{L,\mathrm{MF}}(p,\pi^{L}) = & \ x^{L}+T\Big[\mu^{L}\pi^{L}-\frac{1}{2\gamma^{L}}\big((\sigma^{L,0})^{2}+(\sigma^{L})^{2}\big)(\pi^{L})^{2}\Big]
\nonumber\\
& + T\Big[\mathbb{E}[(1-\widehat{q}^{\zeta}(p))(p-a)]-\frac{1}{2\gamma^{L}}\big(\mathbb{E}[(1-\widehat{q}^{\zeta}(p))v^{0}]\big)^{2}\Big].
\end{align}
The expression in \eqref{eq:Jtilde_leader_MF} separates the reinsurer's objective into an investment part and a premium part, with the premium part depending on the follower retention response $\widehat q^\zeta(p)$. We accordingly define the reinsurer's reduced premium objective by
\begin{equation}
\label{eq:leader_premium_objective_MF}
\mathcal{J}^{L,\mathrm{MF}}(p):=\mathbb{E}[(1-\widehat{q}^{\zeta}(p))(p-a)]-\frac{1}{2\gamma^{L}}\big(\mathbb{E}[(1-\widehat{q}^{\zeta}(p))v^{0}]\big)^{2}.
\end{equation}

\begin{proposition}[Constant Stackelberg mean field equilibria]
\label{prop:CSMFE_existence}
The leader's optimal investment control is unique and is given by
\begin{equation}\label{eq:pistar_mf}
\pi^{L,\mathrm{MF},*}=\frac{\gamma^{L}\mu^{L}}{(\sigma^{L,0})^{2}+(\sigma^{L})^{2}}.
\end{equation}
There exists at least one optimal premium level $p^{\mathrm{MF},*}$ satisfying
\begin{equation}
\label{eq:pstar_argmax_MF}
p^{\mathrm{MF},*}\in\underset{p\in[p^{\mathrm{MF},\min},\,p^{\mathrm{MF},\max}]}{\arg\max}\,\mathcal{J}^{L,\mathrm{MF}}(p),
\end{equation}
where $p^{\mathrm{MF},\min}$ and $p^{\mathrm{MF},\max}$ are given in \eqref{eq:pminmax_MF}, and $\mathcal J^{L,\mathrm{MF}}$ is defined in \eqref{eq:leader_premium_objective_MF}. For any such $p^{\mathrm{MF},*}$, define
\begin{equation*}
q^{\zeta,*}:=\widehat q^\zeta(p^{\mathrm{MF},*}),\qquad\pi^{\zeta,*}:=\widehat\pi^\zeta.
\end{equation*}
Then $(p^{\mathrm{MF},*},\pi^{L,\mathrm{MF},*},q^{\zeta,*},\pi^{\zeta,*})$ is a constant Stackelberg mean field equilibrium. The optimal premium $p^{\mathrm{MF},*}$ may not be unique, and therefore the constant Stackelberg mean field equilibrium may not be unique.
\end{proposition}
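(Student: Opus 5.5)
The argument mirrors the finite-player proof of Proposition~\ref{prop:leader_equilibrium}, with finite averages replaced by expectations over the type $\zeta$. First, by \eqref{eq:Jtilde_leader_MF} the reinsurer's mean--variance objective splits additively into $T$ times a strictly concave quadratic in $\pi^L$ plus $T\,\mathcal J^{L,\mathrm{MF}}(p)$. The two parts involve disjoint variables. Maximizing the quadratic $\mu^L\pi^L-\frac{1}{2\gamma^L}((\sigma^{L,0})^2+(\sigma^L)^2)(\pi^L)^2$ gives the unique maximizer \eqref{eq:pistar_mf}. This maximizer is independent of $p$ and of the follower response. Maximizing \eqref{eq:J_reinsurer_MF} over $(p,\pi^L)$ is therefore equivalent to taking $\pi^L=\pi^{L,\mathrm{MF},*}$ and maximizing $\mathcal J^{L,\mathrm{MF}}$ over $p\ge 0$.

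Second, I establish existence of a maximizer of $\mathcal J^{L,\mathrm{MF}}$ on the compact interval $[p^{\mathrm{MF},\min},p^{\mathrm{MF},\max}]$. By Proposition~\ref{prop:premium_response_CMFER}, for a.e.\ $\zeta$ the map $p\mapsto\widehat q^\zeta(p)$ is continuous and valued in $[0,1]$. Since $\mathcal Z$ is compact, $a$ and $v^0$ are bounded. Dominated convergence then gives continuity of $p\mapsto \mathbb E[(1-\widehat q^\zeta(p))(p-a)]$ and of $p\mapsto\mathbb E[(1-\widehat q^\zeta(p))v^0]$, hence of $\mathcal J^{L,\mathrm{MF}}$. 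Proposition~\ref{prop:premium_response_CMFER} also guarantees $p^{\mathrm{MF},\min}<p^{\mathrm{MF},\max}<\infty$, the latter finite by boundedness of types. Weierstrass then yields a maximizer on this interval.

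Third, I show that this maximizer is optimal over all of $[0,\infty)$. For $p\le p^{\mathrm{MF},\min}$ we have $\widehat q^\zeta(p)=0$ a.s., so $\mathcal J^{L,\mathrm{MF}}(p)=p-\mathbb E[a]-\frac{1}{2\gamma^L}(\mathbb E[v^0])^2$. This is strictly increasing in $p$, so it is dominated by its value at $p^{\mathrm{MF},\min}$. For $p\ge p^{\mathrm{MF},\max}$ we have $\widehat q^\zeta(p)=1$ a.s., so $\mathcal J^{L,\mathrm{MF}}(p)=0=\mathcal J^{L,\mathrm{MF}}(p^{\mathrm{MF},\max})$ by continuity. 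Hence $\sup_{p\ge0}\mathcal J^{L,\mathrm{MF}}=\max_{[p^{\mathrm{MF},\min},p^{\mathrm{MF},\max}]}\mathcal J^{L,\mathrm{MF}}$, and any $p^{\mathrm{MF},*}$ in \eqref{eq:pstar_argmax_MF} is optimal for the leader.

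Finally, I verify the equilibrium property. By Proposition~\ref{prop:CMFER}, $(\widehat q^\zeta(p^{\mathrm{MF},*}),\widehat\pi^\zeta)$ is the unique constant follower MFE response to $(p^{\mathrm{MF},*},\pi^{L,\mathrm{MF},*})$; recall that $\pi^L$ does not enter the follower problem. Because the follower response is unique for every $p$, the reinsurer's anticipated objective is exactly the reduced one above. Optimality of the leader control given the follower response therefore follows from the two steps above, and Definition~\ref{def:CSMFE} is satisfied. For non-uniqueness I would only point out that $\mathcal J^{L,\mathrm{MF}}$ is a generally non-concave continuous function, piecewise defined through the retention configuration, so its argmax can contain several points; a numerical example in Section~\ref{sec:numerics} exhibits this. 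The main technical point is the continuity of $\mathcal J^{L,\mathrm{MF}}$, which rests on the a.e.\ continuity from Proposition~\ref{prop:premium_response_CMFER} together with dominated convergence. I expect no real obstacle beyond carefully handling the null sets.
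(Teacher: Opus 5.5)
Your proposal is correct and follows essentially the same route as the paper's proof: you split the mean--variance objective into the strictly concave quadratic in $\pi^L$ and $T\,\mathcal J^{L,\mathrm{MF}}(p)$, get continuity of $\mathcal J^{L,\mathrm{MF}}$ from the a.e.\ continuity of $\widehat q^\zeta$ plus bounded convergence, use the full-cession and full-retention regimes to reduce from $[0,\infty)$ to $[p^{\mathrm{MF},\min},p^{\mathrm{MF},\max}]$, and then apply the extreme value theorem. As in the paper, non-uniqueness is only asserted and illustrated numerically (Figure~\ref{fig:5}), not proved.
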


\begin{proof}
See \ref{app:3.3}.
\end{proof}

Proposition~\ref{prop:CSMFE_existence} gives an explicit unique leader's optimal investment $\pi^{L,\mathrm{MF},*}$, and reduces the optimal premium $p^{\mathrm{MF},*}$ to a one-dimensional optimization over the compact interval  $[p^{\mathrm{MF},\min},p^{\mathrm{MF},\max}]$, determined by the follower threshold structure in  Proposition~\ref{prop:premium_response_CMFER}. For $p\le p^{\mathrm{MF},\min}$, almost every type chooses full cession, and $\mathcal J^{L,\mathrm{MF}}(p)$ is increasing in $p$; for $p\ge p^{\mathrm{MF},\max}$, almost every type chooses full retention, and $\mathcal J^{L,\mathrm{MF}}(p)=0$. Hence an optimal premium can always be selected from $[p^{\mathrm{MF},\min},p^{\mathrm{MF},\max}]$.

In contrast to the finite-player case, a continuum-type distribution generally does not produce finitely many retention regimes. Therefore, the threshold continuation procedure developed in Section~\ref{subsec:finite_leader} cannot be applied, and the optimal premium $p^{\mathrm{MF},*}$ must be obtained directly from the one-dimensional optimization \eqref{eq:pstar_argmax_MF}, for example by grid search or other standard numerical methods. The exception is when $\zeta$ has a discrete distribution supported on finitely many types, in which the threshold continuation procedure in Section~\ref{subsec:finite_leader}  extends directly, with type-weighted sums replacing finite-player averages. Once an optimal premium $p^{\mathrm{MF},*}$ is selected, Proposition~\ref{prop:CSMFE_existence} gives the corresponding constant Stackelberg mean field equilibrium.

In summary, the Stackelberg mean field game reduces to two scalar objects: the follower fixed point $\widehat V(p)$ and the leader's reduced premium objective $\mathcal J^{L,\mathrm{MF}}(p)$. This explicit reduction is the basis for the finite-player-to-mean-field convergence analysis in Section~\ref{sec:convergence}.

\section{Convergence from the Finite-Player Game to the Mean Field Game}
\label{sec:convergence}
\subsection{Type-Form Setup}
\label{subsec:convergence_setup}
In this section, we justify the Stackelberg mean field game as the large-population limit of the finite-player Stackelberg--Nash game in Section~\ref{sec:finite_game}. Under weak convergence of the empirical distribution of the insurers' types, we establish the convergence of the finite-player follower equilibrium, the reinsurer's reduced premium objective, the finite-player optimal premiums, and the empirical distributions of finite-player Stackelberg--Nash equilibria to their mean field counterparts.

In the finite-player game of Section~\ref{sec:finite_game}, we collect each insurer $i$'s parameters into the type vector
\begin{equation*}
\zeta^{i,N}:=(x^{i,N},\eta^{i,N},a^{i,N},\gamma^{i,N},\theta^{i,N},v^{i,N,0},v^{i,N},\mu^{i,N},\sigma^{i,N,0},\sigma^{i,N}),\qquad i=1,\ldots,N.
\end{equation*}
The empirical distribution of the insurer types is defined as
\begin{equation*}
\mathfrak{m}^{N}:=\frac{1}{N}\sum_{i=1}^{N}\delta_{\zeta^{i,N}},
\end{equation*}
and we write $\mathfrak{m}:=\mathcal{L}(\zeta)$ for the type distribution in the mean field formulation of Section~\ref{sec:smfg}. 

For a generic type $z\in\mathcal Z$ (defined in Section~\ref{subsec:smfg_formulation}), we write
\begin{equation*}
z=(x_z,\eta_z,a_z,\gamma_z,\theta_z,v_z^0,v_z,\mu_z,\sigma_z^0,\sigma_z).
\end{equation*}

\begin{assumption}\label{assump}
Throughout this section, we impose the following: 
\begin{enumerate}
\item $N\ge2$, and the space $\mathcal Z$ is compact.

\item Each type $\zeta^{i,N}\in\mathcal Z$ for every $i=1,\ldots,N$.

\item There exist positive constants $\underline{a},\bar{a},\underline{\gamma},\bar{\gamma},\underline{v}^{0},\bar{v^{0}},\underline{v},\bar{v},\underline{\sigma}^{0},\bar{\sigma^{0}},\underline{\sigma},\bar{\sigma}$, and a finite constant $\bar\mu$ such that, for every $z\in\mathcal Z$,
\begin{align*}
0<\underline{a}\le a_{z}\le\bar{a},\qquad0<\underline{\gamma}\le\gamma_{z}\le\bar{\gamma},\qquad0<\underline{v}^{0}\le v_{z}^{0}\le\bar{v^{0}},\qquad0<\underline{v}\le v_{z}\le\bar{v},\\
0<\underline{\sigma}^{0}\le\sigma_{z}^{0}\le\bar{\sigma^{0}},\qquad0<\underline{\sigma}\le\sigma_{z}\le\bar{\sigma},\qquad|\mu_{z}|\le\bar{\mu},\qquad\theta_{z}\in[0,1].
\end{align*}

\item The empirical type distributions converge weakly:
\begin{equation}
\label{eq:empirical_type_convergence}
\mathfrak m^N\Rightarrow\mathfrak m\qquad\text{as }N\to\infty.
\end{equation}
\end{enumerate}
\end{assumption}

We set
\begin{equation*}
\underline p:=\underline a,\qquad \bar p:=\bar a+\frac{1}{\underline\gamma}\big((\bar{v^{0}})^2+\bar v^2\big).
\end{equation*}
By Propositions~\ref{prop:premium_response} and~\ref{prop:premium_response_CMFER}, the compact interval $[\underline{p},\bar{p}]$ contains the finite-player population-wide thresholds $p^{\min}$, $p^{\max}$ for all $N$ and the mean field counterparts $p^{\mathrm{MF},\min}$, $p^{\mathrm{MF},\max}$. By Propositions~\ref{prop:leader_equilibrium} and~\ref{prop:CSMFE_existence}, $[\underline{p},\bar{p}]$ then serves as a uniform search interval for the finite-player and mean field optimal premiums.

We rewrite the finite-player follower equilibrium response in Section~\ref{subsec:finite_followers} and the mean field counterpart in Section~\ref{subsec:smfg_followers} on a common type-space notation, with the generic type $z\in\mathcal{Z}$ replacing the insurer-specific subscripts and the random type. For $p\in[\underline{p},\bar{p}]$ and $V\in[0,\bar{v^{0}}]$, the type-form retention incentives are
\begin{equation}
\label{eq:type_retention_incentive}
I^{N}(z;p,V):=\frac{(p-a_{z})+\frac{1}{\gamma_{z}}\theta_{z}v_{z}^{0}V}{\frac{1}{\gamma_{z}}\big((v_{z}^{0})^{2}+(v_{z})^{2}(1-\theta_{z}/N)\big)},\qquad I(z;p,V):=\frac{(p-a_{z})+\frac{1}{\gamma_{z}}\theta_{z}v_{z}^{0}V}{\frac{1}{\gamma_{z}}\big((v_{z}^{0})^{2}+(v_{z})^{2}\big)},
\end{equation}
which are the retention incentives $I^{i}(p,V)$ from \eqref{eq:retention_incentive} and $I^{\zeta}(p,V)$ in \eqref{eq:retention_incentive_MF} viewed as functions of a generic type $z$ so that $I^{N}(\zeta^{i,N};p,V)=I^{i}(p,V)$ and $I(\zeta;p,V)=I^{\zeta}(p,V)$. The fixed-point equations \eqref{eq:fixed_point_V} and \eqref{eq:fixed_point_V_MF} then can be written as $V=F^{N}(p,V)$ and $V=F(p,V)$, where
\begin{equation*}
F^{N}(p,V):=\int_{\mathcal{Z}}v_{z}^{0}\,\Pi_{[0,1]}\big(I^{N}(z;p,V)\big)\,\mathfrak{m}^{N}(dz),\qquad F(p,V):=\int_{\mathcal{Z}}v_{z}^{0}\,\Pi_{[0,1]}\big(I(z;p,V)\big)\,\mathfrak{m}(dz),
\end{equation*}
and their unique fixed points are the average common-insurance exposures $\widehat{V}^{N}(p)$ in \eqref{eq:Vhat_Sigmahat_def} and $\widehat{V}(p)$ in \eqref{eq:Vhat_Sigmahat_MF}. We accordingly write the equilibrium retention responses \eqref{eq:qhat_projection} and \eqref{eq:qhat_MF} in type-form as
\begin{equation*}
\widehat{q}^{N}(z;p):=\Pi_{[0,1]}\big(I^{N}(z;p,\widehat{V}^{N}(p))\big),\qquad\widehat{q}(z;p):=\Pi_{[0,1]}\big(I(z;p,\widehat{V}(p))\big),
\end{equation*}
so that $\widehat{q}^{N}(\zeta^{i,N};p)=\widehat{q}^{i}(p)$ and $\widehat{q}(\zeta;p)=\widehat{q}^{\zeta}(p)$.

Analogously, the average common-financial exposures $\widehat{\Sigma}^{N}$ in \eqref{eq:pihat} and $\widehat{\Sigma}$ in \eqref{eq:Vhat_Sigmahat_MF}, viewed as integrals against $\mathfrak{m}^{N}$ and $\mathfrak{m}$, are
\begin{equation}
\label{eq:Sigmahat_type_form}
\widehat{\Sigma}^{N}=\frac{\int_{\mathcal{Z}}\frac{\gamma_{z}\sigma_{z}^{0}\mu_{z}}{(\sigma_{z}^{0})^{2}+(\sigma_{z})^{2}(1-\theta_{z}/N)}\,\mathfrak{m}^{N}(dz)}{1-\int_{\mathcal{Z}}\frac{\theta_{z}(\sigma_{z}^{0})^{2}}{(\sigma_{z}^{0})^{2}+(\sigma_{z})^{2}(1-\theta_{z}/N)}\,\mathfrak{m}^{N}(dz)},\qquad\widehat{\Sigma}=\frac{\int_{\mathcal{Z}}\frac{\gamma_{z}\sigma_{z}^{0}\mu_{z}}{(\sigma_{z}^{0})^{2}+(\sigma_{z})^{2}}\,\mathfrak{m}(dz)}{1-\int_{\mathcal{Z}}\frac{\theta_{z}(\sigma_{z}^{0})^{2}}{(\sigma_{z}^{0})^{2}+(\sigma_{z})^{2}}\,\mathfrak{m}(dz)},
\end{equation}
and we write the equilibrium investment responses $\widehat{\pi}^{i}$ in \eqref{eq:pihat} and $\widehat{\pi}^{\zeta}$ in \eqref{eq:pihat_MF} in type-form as
\begin{equation}
\label{eq:pihat_type_form}
\widehat{\pi}^{N}(z):=\frac{\mu_{z}+\frac{1}{\gamma_{z}}\theta_{z}\sigma_{z}^{0}\widehat{\Sigma}^{N}}{\frac{1}{\gamma_{z}}\big((\sigma_{z}^{0})^{2}+(\sigma_{z})^{2}(1-\theta_{z}/N)\big)},\qquad\widehat{\pi}(z):=\frac{\mu_{z}+\frac{1}{\gamma_{z}}\theta_{z}\sigma_{z}^{0}\widehat{\Sigma}}{\frac{1}{\gamma_{z}}\big((\sigma_{z}^{0})^{2}+(\sigma_{z})^{2}\big)},
\end{equation}
so that $\widehat{\pi}^{N}(\zeta^{i,N})=\widehat{\pi}^{i}$ and $\widehat{\pi}(\zeta)=\widehat{\pi}^{\zeta}$.

\subsection{Convergence of the Follower Equilibrium Response}
\label{subsec:convergence_follower}
Building on the type-form notation of Section~\ref{subsec:convergence_setup}, we now establish the convergence of the finite-player follower equilibrium response to its mean field counterpart, treating the retention and investment responses separately before combining them at the level of type-control empirical distributions.

\begin{proposition}
\label{prop:qhat_uniform_convergence}
We have
\begin{equation}
\label{eq:VhatN_to_Vhat_uniform}
\sup_{p\in[\underline{p},\bar{p}]}\big|\widehat{V}^{N}(p)-\widehat{V}(p)\big|\longrightarrow0,
\end{equation}
and
\begin{equation}
\label{eq:qhat_uniform_convergence}
\sup_{(z,p)\in\mathcal{Z}\times[\underline{p},\bar{p}]}\big|\widehat{q}^{N}(z;p)-\widehat{q}(z;p)\big|\longrightarrow0.
\end{equation}
\end{proposition}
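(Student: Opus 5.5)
Our plan is to treat $\widehat V^N(p)$ and $\widehat V(p)$ as fixed points of the monotone maps $V\mapsto F^N(p,V)$ and $V\mapsto F(p,V)$, and to convert uniform closeness of these maps into uniform closeness of their fixed points. The convergence \eqref{eq:qhat_uniform_convergence} will then follow from \eqref{eq:VhatN_to_Vhat_uniform} by Lipschitz estimates on the incentives.

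\textbf{Step 1 (uniform convergence of the maps).} We first show that
\[
\sup_{(p,V)\in[\underline p,\bar p]\times[0,\bar{v^{0}}]}|F^N(p,V)-F(p,V)|\to0 .
\]
We split
\[
F^N-F=\int v_z^0\big[\Pi_{[0,1]}(I^N)-\Pi_{[0,1]}(I)\big]\,\mathfrak m^N(dz)+\Big(\int g_{p,V}\,d\mathfrak m^N-\int g_{p,V}\,d\mathfrak m\Big),
\]
where $g_{p,V}(z):=v_z^0\Pi_{[0,1]}(I(z;p,V))$. The first term is $O(1/N)$ uniformly, because of Assumption~\ref{assump}. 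The denominators of $I^N$ and $I$ are bounded below by $(\underline v^0)^2/\bar\gamma$, they differ by at most $\bar v^2/(N\underline\gamma)$, and the numerators are bounded. For the second term, the family $\{g_{p,V}\}$ is uniformly bounded and equicontinuous on the compact set $\mathcal Z$, uniformly in $(p,V)$ ranging over a compact set, and it depends continuously on $(p,V)$. Weak convergence \eqref{eq:empirical_type_convergence} therefore gives pointwise convergence. Uniformity then follows from a finite $\varepsilon$-net in $(p,V)$ together with the Lipschitz dependence of $g_{p,V}$ on $(p,V)$, with a constant independent of $z$. Equivalently, one can invoke the fact that weak convergence is uniform over equicontinuous bounded families.

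\textbf{Step 2 (stability of the fixed point).} This is the main obstacle, since $F$ is not assumed to be a contraction. Our first approach uses the structure behind the uniqueness in Proposition~\ref{prop:CMFER}, namely that $G(p,V):=V-F(p,V)$ is continuous and strictly increasing in $V$ with a unique zero at $\widehat V(p)$. Continuity of $\widehat V$ on $[\underline p,\bar p]$ comes from Proposition~\ref{prop:premium_response_CMFER}, or directly from uniqueness plus compactness.

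We then argue by contradiction. Suppose there are $\varepsilon>0$ and $p_N$ with $|\widehat V^N(p_N)-\widehat V(p_N)|\ge\varepsilon$ along a subsequence. By compactness we may assume $p_N\to p$ and $\widehat V^N(p_N)\to V^\ast\in[0,\bar{v^{0}}]$. Step 1 and the joint continuity of $F$ give $V^\ast=F(p,V^\ast)$, so $V^\ast=\widehat V(p)$ by uniqueness. Continuity of $\widehat V$ gives $\widehat V(p_N)\to\widehat V(p)$, which is a contradiction. This argument only needs uniqueness of the mean field fixed point and joint continuity of $F$, and both are available. A quantitative alternative is also possible: since $\partial_V I\le \theta_z v_z^0\gamma_z^{-1}/[\gamma_z^{-1}((v^0_z)^2+v_z^2)]$, the slope of $F$ in $V$ is at most $\mathbb E[\theta (v^0)^2/((v^0)^2+v^2)]<1$, which would make $V\mapsto F(p,V)$ a contraction and yield a rate. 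We would mention this but rely on the compactness argument.

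\textbf{Step 3 (retentions).} Write
\[
|\widehat q^N(z;p)-\widehat q(z;p)|\le |I^N(z;p,\widehat V^N(p))-I(z;p,\widehat V^N(p))|+|I(z;p,\widehat V^N(p))-I(z;p,\widehat V(p))|,
\]
using that $\Pi_{[0,1]}$ is $1$-Lipschitz. The first term is $O(1/N)$ uniformly in $(z,p)$, by the same denominator estimate as in Step 1. The second term is bounded by $\theta_z v_z^0/((v_z^0)^2+v_z^2)\,|\widehat V^N(p)-\widehat V(p)|\le C\sup_p|\widehat V^N-\widehat V|$. By Step 2 this tends to $0$, which gives \eqref{eq:qhat_uniform_convergence}.
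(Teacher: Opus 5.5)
Your proof is correct. Steps~1 and~3 follow the paper's own route: the same splitting of $F^N-F$, with a finite-net argument for the empirical-measure term, and the same triangle inequality for the retentions.

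The genuine difference is Step~2. The paper does not argue by compactness. It notes that $N\ge2$ forces $1-\theta_z/N\ge\tfrac12$, so the finite-player maps are contractions uniformly in $N$: $|F^N(p,V)-F^N(p,V')|\le\rho_v|V-V'|$ with $\rho_v=(\bar{v^{0}})^2/\big((\bar{v^{0}})^2+\tfrac12\underline v^2\big)<1$. The fixed-point identities then give, in one line,
\[
\sup_{p\in[\underline p,\bar p]}\big|\widehat V^N(p)-\widehat V(p)\big|\le\frac{1}{1-\rho_v}\sup_{(p,V)\in[\underline p,\bar p]\times[0,\bar{v^{0}}]}\big|F^N(p,V)-F(p,V)\big|.
\]
So your remark that $F$ ``is not assumed to be a contraction'' undersells what is available. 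Contractivity of $V\mapsto F(p,V)$ is exactly how uniqueness of $\widehat V(p)$ was proved in Proposition~\ref{prop:CMFER}. Your own ``quantitative alternative'', which uses the contraction of $F$ instead of the uniform contraction of $F^N$, works just as well.

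The paper's route gives a rate. The error in $\widehat V^N$, and hence in $\widehat q^N$, is controlled linearly by the sup-distance between the maps. That distance is $O(1/N)$ plus an empirical-measure discrepancy over a family that is uniformly bounded and uniformly Lipschitz in $z$. Hence any quantitative rate for $\mathfrak m^N\Rightarrow\mathfrak m$ (for example in bounded-Lipschitz or Wasserstein distance) transfers directly.

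Your compactness argument gives robustness instead. It uses only uniform convergence $F^N\to F$, joint continuity of $F$, continuity of $\widehat V$, and uniqueness of the mean field fixed point. It also applies to any selection of finite-player fixed points. It would therefore survive in variants where no contraction constant is available but the limiting fixed point is still unique.
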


\begin{proof}
See \ref{app:4.1}.
\end{proof}

Proposition~\ref{prop:qhat_uniform_convergence} gives the uniform convergence of the finite-player retention response to its mean field counterpart. The uniformity in $(z,p)$ will be used in Section~\ref{subsec:convergence_leader} to transfer this convergence to the leader's reduced premium objective, which involves both optimization over $p\in[\underline{p},\bar{p}]$ and integration over $z\in\mathcal{Z}$.

We next turn to the follower investment response. Since $\widehat{\pi}^{N}(z)$ in \eqref{eq:pihat_type_form} depends on the insurer population only through the scalar average common-financial exposure $\widehat{\Sigma}^{N}$, its uniform convergence on $\mathcal{Z}$ follows from the scalar convergence $\widehat{\Sigma}^{N}\to\widehat{\Sigma}$.

\begin{proposition}
\label{prop:pihat_uniform_convergence}
We have
\begin{equation}
\label{eq:SigmahatN_to_Sigmahat}
\widehat{\Sigma}^{N}\longrightarrow\widehat{\Sigma},
\end{equation}
and
\begin{equation}
\label{eq:pihat_uniform_convergence}
\sup_{z\in\mathcal{Z}}\big|\widehat{\pi}^{N}(z)-\widehat{\pi}(z)\big|\longrightarrow0.
\end{equation}
\end{proposition}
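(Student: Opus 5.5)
The proof works directly from the explicit formulas \eqref{eq:Sigmahat_type_form} and \eqref{eq:pihat_type_form}. We first prove the scalar convergence $\widehat\Sigma^N\to\widehat\Sigma$ by handling the numerator and the denominator separately. Then we pass to the uniform convergence of $\widehat\pi^N(z)$ on $\mathcal Z$.

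\textbf{Step 1: numerator and denominator integrals.} Define, on $\mathcal Z$,
\begin{equation*}
g^N(z):=\frac{\gamma_z\sigma_z^0\mu_z}{(\sigma_z^0)^2+(\sigma_z)^2(1-\theta_z/N)},\qquad g(z):=\frac{\gamma_z\sigma_z^0\mu_z}{(\sigma_z^0)^2+(\sigma_z)^2},
\end{equation*}
and define $h^N,h$ analogously with numerator $\theta_z(\sigma_z^0)^2$.

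By Assumption~\ref{assump}, for $N\ge2$ the denominators are bounded below by $(\underline\sigma^0)^2>0$. Hence $g,h$ are continuous and bounded on the compact set $\mathcal Z$. The differences obey
\begin{equation*}
\sup_{z}|g^N-g|\le C/N,\qquad \sup_z|h^N-h|\le C/N.
\end{equation*}
This holds because the denominators differ by $(\sigma_z)^2\theta_z/N\le\bar\sigma^2/N$, and the numerators are uniformly bounded.

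We then split
\begin{equation*}
\int g^N\,d\mathfrak m^N-\int g\,d\mathfrak m=\int(g^N-g)\,d\mathfrak m^N+\Big(\int g\,d\mathfrak m^N-\int g\,d\mathfrak m\Big).
\end{equation*}
The first term is $O(1/N)$. The second term tends to zero by the weak convergence \eqref{eq:empirical_type_convergence}, since $g$ is bounded and continuous. The same argument applies to $h$.

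\textbf{Step 2: the denominators stay away from zero.} This is the main obstacle, because $\theta_z$ may equal $1$. It suffices that
\begin{equation*}
1-\int h\,d\mathfrak m>0.
\end{equation*}
This holds because, for every $z$,
\begin{equation*}
h(z)\le\frac{(\sigma_z^0)^2}{(\sigma_z^0)^2+(\sigma_z)^2}\le\frac{(\bar{\sigma^0})^2}{(\bar{\sigma^0})^2+\underline\sigma^2}=:\kappa<1.
\end{equation*}
The middle inequality uses that $x\mapsto x/(x+\sigma_z^2)$ is increasing, together with $\sigma_z\ge\underline\sigma$. Hence $1-\int h\,d\mathfrak m\ge1-\kappa$.

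Similarly, $(\sigma_z)^2(1-\theta_z/N)\ge\underline\sigma^2/2$ for $N\ge2$. This gives $h^N\le\kappa'<1$ uniformly, with $\kappa'$ defined as $\kappa$ but with $\underline\sigma^2/2$ in place of $\underline\sigma^2$. Therefore both denominators are bounded below by a positive constant, uniformly in $N$.

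Continuity of the quotient $(x,y)\mapsto x/(1-y)$ away from $y=1$ then yields \eqref{eq:SigmahatN_to_Sigmahat}. As a byproduct, $\sup_N|\widehat\Sigma^N|<\infty$.

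\textbf{Step 3: uniform convergence of $\widehat\pi^N$.} Write
\begin{equation*}
\widehat\pi^N(z)-\widehat\pi(z)=\gamma_z\big(\mu_z+\tfrac1{\gamma_z}\theta_z\sigma_z^0\widehat\Sigma^N\big)\Big(\frac1{D^N_z}-\frac1{D_z}\Big)+\frac{\theta_z\sigma_z^0(\widehat\Sigma^N-\widehat\Sigma)}{D_z},
\end{equation*}
where $D_z^N,D_z$ are the respective denominators $(\sigma_z^0)^2+\sigma_z^2(1-\theta_z/N)$ and $(\sigma_z^0)^2+\sigma_z^2$.

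Both denominators are at least $(\underline\sigma^0)^2$, and $|D^N_z-D_z|\le\bar\sigma^2/N$. The bounds on $\gamma_z,\mu_z,\sigma_z^0$ are uniform in $z$, and $\widehat\Sigma^N$ is bounded by Step 2. Consequently the first term is $O(1/N)$ uniformly in $z$, and the second term is at most $C|\widehat\Sigma^N-\widehat\Sigma|$, which tends to zero by Step 2. This proves \eqref{eq:pihat_uniform_convergence}.
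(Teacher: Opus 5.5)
Your proof is correct and follows essentially the same route as the paper: bound the denominator of $\widehat\Sigma^N$ away from zero via $1-\theta_z/N\ge 1/2$, combine the uniform $O(1/N)$ convergence of the numerator and denominator integrands with $\mathfrak m^N\Rightarrow\mathfrak m$ to get $\widehat\Sigma^N\to\widehat\Sigma$, and then read off the uniform convergence of $\widehat\pi^N$ from its explicit formula. The only difference is that you spell out the explicit rates and the two-term decomposition of $\widehat\pi^N(z)-\widehat\pi(z)$, which the paper states only in words.
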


\begin{proof}
See \ref{app:4.2}.
\end{proof}

Combining the retention and investment convergence (Propositions~\ref{prop:qhat_uniform_convergence} and \ref{prop:pihat_uniform_convergence}), we next obtain the weak convergence of the type-control empirical distributions induced by the follower equilibrium response. Thus, for any converging premium sequence, the joint distribution of insurer types and their equilibrium responses across the population converges to the mean field type-control distribution.

\begin{proposition}[Convergence of follower type-control distributions]
\label{prop:profile_distribution_convergence}
Let $(p^{N})_{N\ge1}$ be a sequence in $[\underline{p},\bar{p}]$ such that $p^{N}\to p^{\infty}$. Then 
\begin{equation}
\label{eq:profile_distribution_convergence}
\frac{1}{N}\sum_{i=1}^{N}\delta_{(\zeta^{i,N},\,\widehat{q}^{N}(\zeta^{i,N};p^{N}),\,\widehat{\pi}^{N}(\zeta^{i,N}))}\Rightarrow\mathcal{L}(\zeta,\widehat{q}(\zeta;p^{\infty}),\widehat{\pi}(\zeta)).
\end{equation}
\end{proposition}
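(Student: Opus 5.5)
We will test the claimed weak convergence against an arbitrary bounded Lipschitz function $\varphi$ on $\mathcal Z\times[0,1]\times\mathbb R$. By the Portmanteau theorem, bounded Lipschitz test functions suffice for weak convergence of probability measures on a metric space. Write $\nu^N$ for the left-hand side of \eqref{eq:profile_distribution_convergence}. Then
\begin{equation*}
\int\varphi\,d\nu^N=\int_{\mathcal Z}\varphi\big(z,\widehat q^N(z;p^N),\widehat\pi^N(z)\big)\,\mathfrak m^N(dz).
\end{equation*}
We compare this with $\int_{\mathcal Z}\varphi\big(z,\widehat q(z;p^\infty),\widehat\pi(z)\big)\,\mathfrak m(dz)$, which is exactly the integral of $\varphi$ against $\mathcal L(\zeta,\widehat q(\zeta;p^\infty),\widehat\pi(\zeta))$, since $\zeta\sim\mathfrak m$. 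We split the difference into three terms.

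\emph{First term.} It is
\begin{equation*}
\int\big[\varphi(z,\widehat q^N(z;p^N),\widehat\pi^N(z))-\varphi(z,\widehat q(z;p^N),\widehat\pi(z))\big]\,\mathfrak m^N(dz).
\end{equation*}
By the Lipschitz property of $\varphi$, it is bounded by
\begin{equation*}
L_\varphi\Big(\sup_{(z,p)}|\widehat q^N(z;p)-\widehat q(z;p)|+\sup_z|\widehat\pi^N(z)-\widehat\pi(z)|\Big).
\end{equation*}
This tends to zero by \eqref{eq:qhat_uniform_convergence} and \eqref{eq:pihat_uniform_convergence}. Here we use $p^N\in[\underline p,\bar p]$, so the supremum in Proposition~\ref{prop:qhat_uniform_convergence} applies.

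\emph{Second term.} It is
\begin{equation*}
\int\big[\varphi(z,\widehat q(z;p^N),\widehat\pi(z))-\varphi(z,\widehat q(z;p^\infty),\widehat\pi(z))\big]\,\mathfrak m^N(dz),
\end{equation*}
which is bounded by $L_\varphi\sup_z|\widehat q(z;p^N)-\widehat q(z;p^\infty)|$. To control it, we show that $p\mapsto\widehat V(p)$ is continuous. One route is that the fixed point map $F$ is jointly continuous and the fixed point is unique; the unique fixed point lies in the compact set $[0,\bar{v^{0}}]$, so any limit point of $\widehat V(p^N)$ is a fixed point of $F(p^\infty,\cdot)$ and hence equals $\widehat V(p^\infty)$. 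Alternatively, continuity follows from the continuity of $\widehat q^\zeta$ in Proposition~\ref{prop:premium_response_CMFER} together with dominated convergence.

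Given continuity of $\widehat V$, we use that $\Pi_{[0,1]}$ is $1$-Lipschitz. The partial derivatives of $I(z;p,V)$ in $p$ and $V$ are $\gamma_z/((v_z^0)^2+v_z^2)$ and $\theta_z v_z^0/((v_z^0)^2+v_z^2)$. By Assumption~\ref{assump}, both are bounded uniformly in $z$ by some constant $C$. Hence
\begin{equation*}
\sup_z|\widehat q(z;p^N)-\widehat q(z;p^\infty)|\le C\big(|p^N-p^\infty|+|\widehat V(p^N)-\widehat V(p^\infty)|\big)\to0.
\end{equation*}

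\emph{Third term.} It is $\int g\,d\mathfrak m^N-\int g\,d\mathfrak m$, where $g(z):=\varphi(z,\widehat q(z;p^\infty),\widehat\pi(z))$. Since $\varphi$ is bounded, $g$ is bounded. It is also continuous on $\mathcal Z$: for fixed $p^\infty$ and $\widehat V(p^\infty)$, the map $z\mapsto I(z;p^\infty,\widehat V(p^\infty))$ is continuous, because its denominator is bounded below by Assumption~\ref{assump}; likewise $z\mapsto\widehat\pi(z)$ is continuous with $\widehat\Sigma$ a fixed constant. The third term therefore vanishes by \eqref{eq:empirical_type_convergence}.

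The main point requiring care is the second term, namely the continuity of $\widehat V$ and the resulting uniform-in-$z$ continuity of $\widehat q(\cdot;p)$ in $p$. The plan is to obtain this from uniqueness of the fixed point in Proposition~\ref{prop:CMFER} by the compactness argument above. The remaining steps are direct applications of Propositions~\ref{prop:qhat_uniform_convergence} and~\ref{prop:pihat_uniform_convergence}.
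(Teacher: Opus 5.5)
Your proposal is correct and follows essentially the same argument as the paper: uniform convergence of $\widehat q^N$ and $\widehat\pi^N$ from Propositions~\ref{prop:qhat_uniform_convergence} and~\ref{prop:pihat_uniform_convergence}, continuity of $p\mapsto\widehat V(p)$ to pass from $p^N$ to $p^\infty$ uniformly in $z$, and $\mathfrak m^N\Rightarrow\mathfrak m$ for the remaining term with a fixed continuous integrand. The only difference is technical: testing against bounded Lipschitz functions lets you skip two steps of the paper, namely the reduction to the compact set $\mathcal Z\times[0,1]\times[-M,M]$ (via a uniform bound on $\widehat\pi^N$) and the ensuing uniform-continuity argument for a general bounded continuous test function.
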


\begin{proof}
See \ref{app:4.3}.
\end{proof}

\subsection{Convergence of the Leader Problem and Stackelberg Equilibria}
\label{subsec:convergence_leader}

Building on the follower-side convergence of Section~\ref{subsec:convergence_follower}, we now establish the convergence of the leader's reduced premium objective, the optimal premium sets, and the constant Stackelberg--Nash equilibria, all from the finite-player setting to their mean field counterparts.

We rewrite the finite-player and mean field reduced premium objectives \eqref{eq:leader_premium_objective} and \eqref{eq:leader_premium_objective_MF} on the type-form introduced in Section~\ref{subsec:convergence_setup}. The finite-player reduced premium objective $\mathcal{J}^{L}(p)$ in \eqref{eq:leader_premium_objective}, which we denote by $\mathcal{J}^{L,N}(p)$ in this section to mark its $N$-dependence, reads
\begin{align}
\label{eq:leader_premium_objective_typeform}
\mathcal{J}^{L,N}(p) = & \int_{\mathcal{Z}}(1-\widehat{q}^{N}(z;p))(p-a_{z})\,\mathfrak{m}^{N}(dz)\nonumber\\
& -\frac{1}{2\gamma^{L}}\Big[\Big(\int_{\mathcal{Z}}(1-\widehat{q}^{N}(z;p))v_{z}^{0}\,\mathfrak{m}^{N}(dz)\Big)^{2}+\frac{1}{N}\int_{\mathcal{Z}}(1-\widehat{q}^{N}(z;p))^{2}(v_{z})^{2}\,\mathfrak{m}^{N}(dz)\Big],
\end{align}
and the mean field reduced premium objective \eqref{eq:leader_premium_objective_MF} reads
\begin{equation}
\label{eq:leader_premium_objective_MF_typeform}
\mathcal{J}^{L,\mathrm{MF}}(p)=\int_{\mathcal{Z}}(1-\widehat{q}(z;p))(p-a_{z})\,\mathfrak{m}(dz)-\frac{1}{2\gamma^{L}}\Big(\int_{\mathcal{Z}}(1-\widehat{q}(z;p))v_{z}^{0}\,\mathfrak{m}(dz)\Big)^{2}.
\end{equation}
The only structural difference between \eqref{eq:leader_premium_objective_typeform} and \eqref{eq:leader_premium_objective_MF_typeform} is the additional last term in $\mathcal{J}^{L,N}$, which arises from the insurer-specific idiosyncratic insurance-claim noises $W^{i}$ at finite $N$.

\begin{proposition}
\label{prop:J_uniform_convergence}
We have
\begin{equation}
\label{eq:J_uniform_convergence}
\sup_{p\in[\underline{p},\bar{p}]}\big|\mathcal{J}^{L,N}(p)-\mathcal{J}^{L,\mathrm{MF}}(p)\big|\longrightarrow0.
\end{equation}
\end{proposition}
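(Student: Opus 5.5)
We compare the type-form expressions \eqref{eq:leader_premium_objective_typeform} and \eqref{eq:leader_premium_objective_MF_typeform} term by term. Write $\mathcal J^{L,N}(p)=A^N(p)-\frac{1}{2\gamma^L}\big[(B^N(p))^2+\frac1N D^N(p)\big]$ and $\mathcal J^{L,\mathrm{MF}}(p)=A(p)-\frac{1}{2\gamma^L}(B(p))^2$. Here $A^N,B^N,D^N$ are the integrals against $\mathfrak m^N$ of $(1-\widehat q^N(z;p))(p-a_z)$, $(1-\widehat q^N(z;p))v_z^0$ and $(1-\widehat q^N(z;p))^2v_z^2$, and $A,B$ are the corresponding integrals of $\widehat q(z;p)$ against $\mathfrak m$.

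The idiosyncratic term is handled directly. By Assumption~\ref{assump}, $0\le D^N(p)\le \bar v^2$, so $\frac1N D^N(p)\le \bar v^2/N\to0$ uniformly in $p$. Similarly, $|B^N|,|B|\le \bar{v^0}$, so $|(B^N)^2-B^2|\le 2\bar{v^0}|B^N-B|$. It therefore suffices to show $\sup_p|A^N-A|\to0$ and $\sup_p|B^N-B|\to0$.

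For each of $A$ and $B$, we split into two pieces:
\begin{equation*}
A^N(p)-A(p)=\int_{\mathcal Z}\big(\widehat q(z;p)-\widehat q^N(z;p)\big)(p-a_z)\,\mathfrak m^N(dz)+\Big(\int_{\mathcal Z} g_p\,d\mathfrak m^N-\int_{\mathcal Z} g_p\,d\mathfrak m\Big),
\end{equation*}
where $g_p(z)=(1-\widehat q(z;p))(p-a_z)$. The first piece is bounded by $(\bar p+\bar a)\sup_{z,p}|\widehat q^N-\widehat q|$, which tends to zero by \eqref{eq:qhat_uniform_convergence} in Proposition~\ref{prop:qhat_uniform_convergence}. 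The second piece requires weak convergence $\mathfrak m^N\Rightarrow\mathfrak m$ to hold uniformly over the family $\{g_p:p\in[\underline p,\bar p]\}$. The $B$ term is treated identically, with $h_p(z)=(1-\widehat q(z;p))v_z^0$ in place of $g_p$.

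The main obstacle is this uniform weak-convergence step. We plan to show that $(z,p)\mapsto g_p(z)$ and $(z,p)\mapsto h_p(z)$ are jointly continuous on the compact set $\mathcal Z\times[\underline p,\bar p]$. Given this, two routes are available.

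\textbf{Route via equicontinuity.} Joint continuity on a compact set makes the map $p\mapsto g_p$ continuous into $(C(\mathcal Z),\|\cdot\|_\infty)$. Take a finite $\varepsilon$-net $p_1,\dots,p_m$ of $[\underline p,\bar p]$. Weak convergence at each $g_{p_k}$, together with $\|g_p-g_{p_k}\|_\infty\le\varepsilon$ for nearby $p$, gives the uniform bound.

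\textbf{Route via a sequential argument.} Argue by contradiction with a sequence $p^N\to p^\infty$, using $\|g_{p^N}-g_{p^\infty}\|_\infty\to0$.

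Joint continuity of $\widehat q(z;p)=\Pi_{[0,1]}(I(z;p,\widehat V(p)))$ follows from three facts. First, $I(z;p,V)$ is continuous in $(z,p,V)$, since the denominators are bounded below by $(\underline v^0)^2/\bar\gamma>0$. Second, $\Pi_{[0,1]}$ is Lipschitz. Third, $p\mapsto\widehat V(p)$ is continuous. We obtain the continuity of $\widehat V$ from its characterization as the unique fixed point of $F(p,\cdot)$, with $F$ continuous in $(p,V)$ by dominated convergence. Any limit point of $\widehat V(p_n)$ as $p_n\to p$ is then a fixed point of $F(p,\cdot)$, hence equals $\widehat V(p)$. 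Alternatively, continuity of $\widehat V$ follows as a uniform limit of the continuous functions $\widehat V^N$ via \eqref{eq:VhatN_to_Vhat_uniform}, each of which is continuous by Proposition~\ref{prop:premium_response}. The factors $(p-a_z)$ and $v_z^0$ are trivially continuous, which completes the joint continuity and hence the proof of \eqref{eq:J_uniform_convergence}.
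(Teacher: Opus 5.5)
Your proposal is correct and follows essentially the same route as the paper: the same three-term split with the $\bar v^2/N$ bound and the $2\bar{v^0}$ bound on the squared exposure, and the same reduction of each remaining term to a $\sup_{z,p}|\widehat q^N-\widehat q|$ piece plus a uniform-in-$p$ weak-convergence piece, which the paper handles with exactly your finite-net argument (Lemma~\ref{lem:uniform_empirical_integral}). The only differences are cosmetic: the paper applies this with $b(z)\in\{1,a_z,v_z^0\}$ rather than directly to $g_p$, and you spell out the joint continuity of $\widehat q$, which the paper's proof of this proposition takes for granted.
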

\begin{proof}
See \ref{app:4.4}.
\end{proof}

Proposition~\ref{prop:J_uniform_convergence} transfers the follower-side convergence of Proposition~\ref{prop:qhat_uniform_convergence} to the leader's reduced premium objective. The proof reveals that the last term in \eqref{eq:leader_premium_objective_typeform} vanishes in the large-population limit, reflecting the diversification of insurer-specific idiosyncratic insurance-claim noise across a large insurer population. This is consistent with the mean field reinsurer dynamics \eqref{eq:dynamics_reinsurer_MF}, in which the idiosyncratic insurance-claim noises $W^{i}$ are absent.

We next consider the finite-player and mean field optimal premium sets,
\begin{equation*}
\mathcal{P}^{N,*}:=\underset{p\in[\underline{p},\bar{p}]}{\arg\max}\,\mathcal{J}^{L,N}(p),\qquad\mathcal{P}^{\mathrm{MF},*}:=\underset{p\in[\underline{p},\bar{p}]}{\arg\max}\,\mathcal{J}^{L,\mathrm{MF}}(p),
\end{equation*}
which are nonempty compact subsets of $[\underline{p},\bar{p}]$, since $\mathcal{J}^{L,N}$ and $\mathcal{J}^{L,\mathrm{MF}}$ are continuous in $p$. We optimize both finite-player and mean field premium problems over the common compact interval $[\underline{p},\bar{p}]$. This fixed domain contains all effective premium intervals and is convenient for the convergence analysis. It may include full-retention regions: $\mathcal J^{L,N}(p)=0$ for $p\in[p^{\max},\bar p]$ and $\mathcal J^{L,\mathrm{MF}}(p)=0$ for $p\in[p^{\mathrm{MF},\max},\bar p]$. Hence, if this zero value is optimal, the argmax may contain an interval of economically equivalent premiums. One may remove this artificial nonuniqueness by selecting the smallest maximizer.

\begin{proposition}[Convergence of finite-player optimal premiums]
\label{prop:optimal_premium_convergence}
Every limit point of a sequence of finite-player optimal premiums is a mean field optimal premium.

More precisely, if $(N_{k})_{k\ge1}$ is a strictly increasing sequence of positive integers, $p^{N_{k},*}\in\mathcal{P}^{N_{k},*}$ for each $k$, and $p^{N_{k},*}\to p^{\infty,*}$, then $p^{\infty,*}\in\mathcal{P}^{\mathrm{MF},*}$. Equivalently,
\begin{equation}
\label{eq:optimal_premium_convergence}
\sup_{p\in\mathcal{P}^{N,*}}\mathrm{dist}\big(p,\mathcal{P}^{\mathrm{MF},*}\big)\longrightarrow0,
\end{equation}
where $\mathrm{dist}(p,A):=\inf_{q\in A}|p-q|$ denotes the Euclidean point-to-set distance.
\end{proposition}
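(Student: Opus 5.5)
The plan is to prove this by the standard stability argument for maximizers under uniform convergence of the objectives, using Proposition~\ref{prop:J_uniform_convergence} together with continuity of $\mathcal{J}^{L,\mathrm{MF}}$ on the compact interval $[\underline{p},\bar{p}]$. Continuity of $\mathcal{J}^{L,\mathrm{MF}}$ follows from Proposition~\ref{prop:premium_response_CMFER}: each $p\mapsto\widehat q(z;p)$ is continuous and bounded in $[0,1]$, and the integrands in \eqref{eq:leader_premium_objective_MF_typeform} are bounded on $\mathcal Z\times[\underline p,\bar p]$ by Assumption~\ref{assump}, so dominated convergence applies. Alternatively, continuity of $\widehat V$ together with the Lipschitz dependence of $I(z;p,V)$ on $(p,V)$, uniformly in $z$, gives continuity directly.

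For the subsequential statement, let $p^{N_k,*}\to p^{\infty,*}$ with $p^{N_k,*}\in\mathcal P^{N_k,*}$. Fix any $p\in[\underline p,\bar p]$. Optimality gives $\mathcal J^{L,N_k}(p^{N_k,*})\ge\mathcal J^{L,N_k}(p)$. Write
\begin{equation*}
\mathcal J^{L,\mathrm{MF}}(p^{\infty,*})=\lim_{k}\mathcal J^{L,\mathrm{MF}}(p^{N_k,*})
=\lim_k\mathcal J^{L,N_k}(p^{N_k,*}) .
\end{equation*}
The first equality uses continuity of $\mathcal J^{L,\mathrm{MF}}$. The second uses $|\mathcal J^{L,N_k}(p^{N_k,*})-\mathcal J^{L,\mathrm{MF}}(p^{N_k,*})|\le\sup_{p'}|\mathcal J^{L,N_k}(p')-\mathcal J^{L,\mathrm{MF}}(p')|\to0$. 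The right-hand side is then at least $\limsup_k\mathcal J^{L,N_k}(p)=\mathcal J^{L,\mathrm{MF}}(p)$, by pointwise convergence. Hence $p^{\infty,*}$ maximizes $\mathcal J^{L,\mathrm{MF}}$ over $[\underline p,\bar p]$, so $p^{\infty,*}\in\mathcal P^{\mathrm{MF},*}$. The step requiring care is evaluating $\mathcal J^{L,N_k}$ at a moving point, and uniform convergence is exactly what handles it. This is why Proposition~\ref{prop:J_uniform_convergence} is stated with a supremum.

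For the equivalent form \eqref{eq:optimal_premium_convergence}, argue by contradiction. Suppose the supremum does not tend to zero. Then there exist $\varepsilon>0$, a strictly increasing sequence $N_k$, and points $p^{N_k,*}\in\mathcal P^{N_k,*}$ with $\mathrm{dist}(p^{N_k,*},\mathcal P^{\mathrm{MF},*})\ge\varepsilon$. The supremum is attained up to an arbitrarily small error, and it is attained exactly because $\mathcal P^{N,*}$ is compact and the distance function is continuous. By compactness of $[\underline p,\bar p]$, extract a further convergent subsequence with limit $p^{\infty,*}$. The first part gives $p^{\infty,*}\in\mathcal P^{\mathrm{MF},*}$. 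But continuity of $\mathrm{dist}(\cdot,\mathcal P^{\mathrm{MF},*})$ forces $\mathrm{dist}(p^{\infty,*},\mathcal P^{\mathrm{MF},*})\ge\varepsilon$, a contradiction.

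The converse implication, from \eqref{eq:optimal_premium_convergence} to the subsequential statement, is immediate: $\mathcal P^{\mathrm{MF},*}$ is closed, and the distance function is continuous. Nonemptiness of both argmax sets, needed for the distances and suprema to make sense, follows from continuity of $\mathcal J^{L,N}$ (Proposition~\ref{prop:premium_response}) and of $\mathcal J^{L,\mathrm{MF}}$ on the compact interval.
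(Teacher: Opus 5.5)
Your proof is correct and follows the paper's argument essentially step for step. For the subsequential statement you use the finite-player optimality inequality, uniform convergence of $\mathcal J^{L,N}$ to control evaluation at the moving point $p^{N_k,*}$, and continuity of $\mathcal J^{L,\mathrm{MF}}$; for the set-distance form you use the same compactness-and-contradiction argument. Your additional remarks on the converse implication (closedness of $\mathcal P^{\mathrm{MF},*}$) and on nonemptiness of the argmax sets are correct, and they make explicit points the paper leaves implicit.
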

\begin{proof}
See \ref{app:4.5}.
\end{proof}

Proposition~\ref{prop:optimal_premium_convergence} converts the uniform convergence of the leader objectives (cf. Proposition~\ref{prop:J_uniform_convergence}) into the convergence of optimal premium sets. Since $\mathcal{P}^{\mathrm{MF},*}$ may have multiple maximizers, the result is stated in set-distance form, with the additional information that every limit point of finite-player optimal premiums is mean field optimal. 

\begin{corollary}
\label{cor:optimal_premium_unique}
If $\mathcal{P}^{\mathrm{MF},*}=\{p^{\mathrm{MF},*}\}$, then
\begin{equation*}
\sup_{p\in\mathcal{P}^{N,*}}|p-p^{\mathrm{MF},*}|\longrightarrow0.
\end{equation*}
In particular, every sequence $(p^{N,*})_{N\ge1}$ with $p^{N,*}\in\mathcal{P}^{N,*}$ converges to $p^{\mathrm{MF},*}$.
\end{corollary}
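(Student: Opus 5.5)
The corollary is a direct specialization of Proposition~\ref{prop:optimal_premium_convergence} to the case where the mean field optimal premium set is a singleton. If $\mathcal{P}^{\mathrm{MF},*}=\{p^{\mathrm{MF},*}\}$, then for every $p$ we have $\mathrm{dist}(p,\mathcal{P}^{\mathrm{MF},*})=|p-p^{\mathrm{MF},*}|$. Substituting this identity into \eqref{eq:optimal_premium_convergence} gives the first display of the corollary immediately.

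For the second assertion, let $(p^{N,*})_{N\ge1}$ satisfy $p^{N,*}\in\mathcal{P}^{N,*}$. Then
\begin{equation*}
|p^{N,*}-p^{\mathrm{MF},*}|\le\sup_{p\in\mathcal{P}^{N,*}}|p-p^{\mathrm{MF},*}|\longrightarrow0,
\end{equation*}
so $p^{N,*}\to p^{\mathrm{MF},*}$.

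As a sanity check, the same conclusion also follows from the limit-point form of Proposition~\ref{prop:optimal_premium_convergence}, without using the set-distance formulation. Every $p^{N,*}$ lies in the compact interval $[\underline p,\bar p]$. Every convergent subsequence $p^{N_k,*}\to p^{\infty,*}$ therefore has $p^{\infty,*}\in\mathcal{P}^{\mathrm{MF},*}=\{p^{\mathrm{MF},*}\}$. A sequence in a compact set all of whose subsequential limits coincide converges to that common limit.

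Extending this to the uniform statement requires one small step. Suppose the supremum did not tend to zero. Then there would be $\varepsilon>0$, indices $N_k\to\infty$, and points $p_k\in\mathcal{P}^{N_k,*}$ with $|p_k-p^{\mathrm{MF},*}|\ge\varepsilon$. Here the supremum over the compact set $\mathcal{P}^{N,*}$ is attained, or is approached within $\varepsilon/2$, which suffices. Extracting a convergent subsequence of $(p_k)$ contradicts Proposition~\ref{prop:optimal_premium_convergence}.

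No step here is a genuine obstacle. The only care needed is to make sure the indexing by strictly increasing $N_k$ matches the hypothesis of Proposition~\ref{prop:optimal_premium_convergence}.
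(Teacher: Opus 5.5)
Your proposal is correct and matches how the paper intends the corollary: it is stated without a separate proof, as a direct consequence of the set-distance form \eqref{eq:optimal_premium_convergence} with $\mathrm{dist}(p,\{p^{\mathrm{MF},*}\})=|p-p^{\mathrm{MF},*}|$. Your redundant compactness/subsequence check mirrors the contradiction argument the paper uses to prove Proposition~\ref{prop:optimal_premium_convergence} itself.
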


We now combine the follower distributional convergence in Proposition~\ref{prop:profile_distribution_convergence} with the optimal premium convergence in Proposition~\ref{prop:optimal_premium_convergence} to obtain the convergence of the constant Stackelberg--Nash equilibria. Because the reduced mean field premium objective may have multiple maximizers, one should not expect every finite-player equilibrium sequence to converge to a single preselected mean field equilibrium. The next result identifies all possible weak limit points.

\begin{theorem}[Convergence of Stackelberg--Nash equilibrium type-control distributions]
\label{thm:empirical_profile_convergence}
Every weak limit point of the empirical type-control distributions of finite-player Stackelberg--Nash equilibria is induced by a Stackelberg mean field equilibrium. 

More precisely, let $(p^{N,*})_{N\ge1}$ be any sequence with $p^{N,*}\in\mathcal{P}^{N,*}$, and define
\begin{equation}
\label{eq:selected_equilibrium_controls}
q^{i,N,*}:=\widehat{q}^{N}(\zeta^{i,N};p^{N,*}),\qquad\pi^{i,N,*}:=\widehat{\pi}^{N}(\zeta^{i,N}),\qquad i=1,\ldots,N.
\end{equation}
Suppose that, along a strictly increasing sequence $(N_{k})_{k\ge1}$,
\begin{equation}
\label{eq:empirical_profile_assumption}
\frac{1}{N_{k}}\sum_{i=1}^{N_{k}}\delta_{(\zeta^{i,N_{k}},q^{i,N_{k},*},\pi^{i,N_{k},*})}\Rightarrow\Lambda
\end{equation}
for some probability distribution $\Lambda$ on $\mathcal{Z}\times[0,1]\times\mathbb{R}$. Then there exists $p^{\infty,*}\in\mathcal{P}^{\mathrm{MF},*}$ such that
\begin{equation}
\label{eq:empirical_weak_limit}
\Lambda=\mathcal{L}\big(\zeta,\widehat{q}(\zeta;p^{\infty,*}),\widehat{\pi}(\zeta)\big).
\end{equation}
\end{theorem}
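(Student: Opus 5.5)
The argument is a compactness step followed by two earlier convergence results and uniqueness of weak limits.

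First, the sequence $(p^{N_k,*})_{k\ge1}$ lies in the compact interval $[\underline p,\bar p]$. We therefore extract a further subsequence $(N_{k_l})_{l\ge1}$ along which $p^{N_{k_l},*}\to p^{\infty,*}$ for some $p^{\infty,*}\in[\underline p,\bar p]$. Proposition~\ref{prop:optimal_premium_convergence}, applied to this strictly increasing subsequence, gives $p^{\infty,*}\in\mathcal P^{\mathrm{MF},*}$.

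Second, we identify the limit along this subsequence. Define a sequence $(\tilde p^N)_{N\ge1}$ in $[\underline p,\bar p]$ by $\tilde p^{N}:=p^{N,*}$ for $N=N_{k_l}$. For all other $N$, let $\tilde p^N$ be the value $p^{N_{k_l},*}$ with $N_{k_l}$ the smallest index along the subsequence that is at least $N$; alternatively set $\tilde p^N:=p^{\infty,*}$ for these $N$. Either way $\tilde p^N\to p^{\infty,*}$. Proposition~\ref{prop:profile_distribution_convergence} only requires a convergent premium sequence indexed by all $N$, so it applies to $(\tilde p^N)$. It yields weak convergence of the full sequence of empirical type-control measures built from $\tilde p^N$ to $\mathcal L(\zeta,\widehat q(\zeta;p^{\infty,*}),\widehat\pi(\zeta))$. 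Restricting to $N=N_{k_l}$, where $\tilde p^N=p^{N,*}$, these measures are exactly the measures in \eqref{eq:empirical_profile_assumption}, with controls given by \eqref{eq:selected_equilibrium_controls}.

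Third, the measures in \eqref{eq:empirical_profile_assumption} converge to $\Lambda$ along $(N_k)$, hence also along the subsequence $(N_{k_l})$. Weak limits on the Polish space $\mathcal Z\times[0,1]\times\mathbb R$ are unique, so $\Lambda=\mathcal L(\zeta,\widehat q(\zeta;p^{\infty,*}),\widehat\pi(\zeta))$, which is \eqref{eq:empirical_weak_limit}.

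Finally, Proposition~\ref{prop:CSMFE_existence} shows that this law is induced by a constant Stackelberg mean field equilibrium, namely $(p^{\infty,*},\pi^{L,\mathrm{MF},*},\widehat q^\zeta(p^{\infty,*}),\widehat\pi^\zeta)$, since $p^{\infty,*}\in\mathcal P^{\mathrm{MF},*}$.

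The only delicate point is the bookkeeping in the second step: making sure Proposition~\ref{prop:profile_distribution_convergence}, stated for sequences over all $N$, is legitimately applied along a subsequence. The interpolation above handles this. If that proposition's proof holds verbatim along subsequences, it can be invoked directly instead.
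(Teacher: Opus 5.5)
Your proof is correct and follows the paper's argument: you extract $p^{N_{k_\ell},*}\to p^{\infty,*}$ by compactness, place $p^{\infty,*}$ in $\mathcal{P}^{\mathrm{MF},*}$ via Proposition~\ref{prop:optimal_premium_convergence}, apply Proposition~\ref{prop:profile_distribution_convergence} along the further subsequence, and conclude by uniqueness of weak limits; your interpolated sequence $\tilde p^N$ only makes explicit the subsequence application that the paper performs directly. One small precision in your last step: Proposition~\ref{prop:CSMFE_existence} is stated for maximizers over $[p^{\mathrm{MF},\min},p^{\mathrm{MF},\max}]$ rather than $[\underline p,\bar p]$, but both maxima equal $\sup_{p\ge0}\mathcal{J}^{L,\mathrm{MF}}(p)$, so any $p^{\infty,*}\in\mathcal{P}^{\mathrm{MF},*}$ is globally optimal and still yields a Stackelberg mean field equilibrium directly from Definition~\ref{def:CSMFE}.
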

\begin{proof}
See \ref{app:4.6}.
\end{proof}

Theorem~\ref{thm:empirical_profile_convergence} identifies every weak limit of empirical type-control distributions of finite-player Stackelberg--Nash equilibria as the type-control distribution induced by a Stackelberg mean field equilibrium. The next two corollaries record the standard subsequential and unique-case consequences.

\begin{corollary}
\label{cor:subsequential_equilibrium}
Let $(p^{N,*})_{N\ge1}$ be any sequence with $p^{N,*}\in\mathcal{P}^{N,*}$, and define $q^{i,N,*}$ and $\pi^{i,N,*}$ as in \eqref{eq:selected_equilibrium_controls}. Then every subsequence admits a further subsequence $(N_{k_{\ell}})_{\ell\ge1}$ and some $p^{\infty,*}\in\mathcal{P}^{\mathrm{MF},*}$ such that
\begin{equation*}
p^{N_{k_{\ell}},*}\to p^{\infty,*},
\end{equation*}
and
\begin{equation*}
\frac{1}{N_{k_{\ell}}}\sum_{i=1}^{N_{k_{\ell}}}\delta_{(\zeta^{i,N_{k_{\ell}}},q^{i,N_{k_{\ell}},*},\pi^{i,N_{k_{\ell}},*})}\Rightarrow\mathcal{L}\big(\zeta,\widehat{q}(\zeta;p^{\infty,*}),\widehat{\pi}(\zeta)\big).
\end{equation*}
\end{corollary}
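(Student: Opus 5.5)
The corollary follows from compactness of the premium interval combined with Proposition~\ref{prop:optimal_premium_convergence} and Proposition~\ref{prop:profile_distribution_convergence}. Fix a sequence $(p^{N,*})_{N\ge1}$ with $p^{N,*}\in\mathcal P^{N,*}\subset[\underline p,\bar p]$, and take an arbitrary subsequence $(N_k)_{k\ge1}$.

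\textbf{Step 1: extract a convergent premium subsequence.} Since $[\underline p,\bar p]$ is compact, Bolzano--Weierstrass gives a further subsequence $(N_{k_\ell})_{\ell\ge1}$ and some $p^{\infty,*}\in[\underline p,\bar p]$ with $p^{N_{k_\ell},*}\to p^{\infty,*}$.

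\textbf{Step 2: identify the limit premium.} The indices $N_{k_\ell}$ are strictly increasing, and each $p^{N_{k_\ell},*}$ lies in $\mathcal P^{N_{k_\ell},*}$. Proposition~\ref{prop:optimal_premium_convergence} therefore applies and gives $p^{\infty,*}\in\mathcal P^{\mathrm{MF},*}$. This yields the first claim.

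\textbf{Step 3: identify the limit distribution.} Define a full-index sequence by $\tilde p^{N}:=p^{N_{k_\ell},*}$ for $N_{k_{\ell-1}}<N\le N_{k_\ell}$, which converges to $p^{\infty,*}$. Proposition~\ref{prop:profile_distribution_convergence} applied to $(\tilde p^{N})$ gives weak convergence of the empirical type-control measures to $\mathcal L(\zeta,\widehat q(\zeta;p^{\infty,*}),\widehat\pi(\zeta))$. Along $N=N_{k_\ell}$ we have $\tilde p^{N}=p^{N,*}$, so the empirical measures coincide with those formed from $q^{i,N,*}$ and $\pi^{i,N,*}$ in \eqref{eq:selected_equilibrium_controls}. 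Restricting the convergence to this subsequence gives the second claim. The padding is needed only because that proposition is stated for sequences indexed by all $N$; its proof is pointwise in $N$ and would carry over verbatim along a subsequence anyway.

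An alternative route is to use tightness. All empirical measures are supported in $\mathcal Z\times[0,1]\times[-M,M]$, since $\widehat\pi^N$ is uniformly bounded by Proposition~\ref{prop:pihat_uniform_convergence}. Prokhorov then extracts a weakly convergent subsequence, Theorem~\ref{thm:empirical_profile_convergence} identifies its limit, and the premium is extracted afterwards along the same subsequence. This route, however, requires a separate check that the $p^{\infty,*}$ from the theorem matches the limit of the premiums. Since $\widehat q(\zeta;\cdot)$ need not be injective in $p$, the direct route above is preferable.

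The only delicate point is this transfer of Proposition~\ref{prop:profile_distribution_convergence} to a subsequence; everything else is a direct citation.
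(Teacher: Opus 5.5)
Your proof is correct and is essentially the paper's argument. The paper gives no separate proof of this corollary, but its proof of Theorem~\ref{thm:empirical_profile_convergence} in \ref{app:4.6} proceeds exactly as you do: it extracts a convergent premium subsequence by compactness of $[\underline p,\bar p]$, identifies the limit via Proposition~\ref{prop:optimal_premium_convergence}, and applies Proposition~\ref{prop:profile_distribution_convergence} along that subsequence. Your padding construction (with the convention $N_{k_0}:=0$) makes rigorous the step the paper describes only as applying Proposition~\ref{prop:profile_distribution_convergence} ``along this further subsequence.''
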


\begin{corollary}
\label{cor:unique_equilibrium}
If $\mathcal{P}^{\mathrm{MF},*}=\{p^{\mathrm{MF},*}\}$, then every sequence $(p^{N,*})_{N\ge1}$ with $p^{N,*}\in\mathcal{P}^{N,*}$ converges to $p^{\mathrm{MF},*}$, and
\begin{equation*}
\frac{1}{N}\sum_{i=1}^{N}\delta_{(\zeta^{i,N},q^{i,N,*},\pi^{i,N,*})}\Rightarrow\mathcal{L}\big(\zeta,\widehat{q}(\zeta;p^{\mathrm{MF},*}),\widehat{\pi}(\zeta)\big),
\end{equation*}
where $q^{i,N,*}$ and $\pi^{i,N,*}$ are defined as in \eqref{eq:selected_equilibrium_controls}.
\end{corollary}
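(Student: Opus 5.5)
The statement to prove is the last corollary, Corollary~\ref{cor:unique_equilibrium}, which concerns the case of a unique mean field optimal premium. I would derive it from Corollary~\ref{cor:optimal_premium_unique} and Proposition~\ref{prop:profile_distribution_convergence}, with no compactness argument needed.

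\textbf{Convergence of premiums.} Assume $\mathcal{P}^{\mathrm{MF},*}=\{p^{\mathrm{MF},*}\}$ and fix any selection $p^{N,*}\in\mathcal{P}^{N,*}$. Corollary~\ref{cor:optimal_premium_unique} gives $\sup_{p\in\mathcal{P}^{N,*}}|p-p^{\mathrm{MF},*}|\to0$, and hence $p^{N,*}\to p^{\mathrm{MF},*}$. For completeness, this follows directly from \eqref{eq:optimal_premium_convergence}: the distance to the singleton $\mathcal{P}^{\mathrm{MF},*}$ is just $|p-p^{\mathrm{MF},*}|$.

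\textbf{Convergence of type-control distributions.} Since $p^{N,*}\in[\underline p,\bar p]$ and $p^{N,*}\to p^{\mathrm{MF},*}$, Proposition~\ref{prop:profile_distribution_convergence} applies with $p^N:=p^{N,*}$ and $p^\infty:=p^{\mathrm{MF},*}$. By the type-form identities, $\widehat q^N(\zeta^{i,N};p^{N,*})=q^{i,N,*}$ and $\widehat\pi^N(\zeta^{i,N})=\pi^{i,N,*}$, which are exactly the controls in \eqref{eq:selected_equilibrium_controls}. Therefore \eqref{eq:profile_distribution_convergence} is precisely the claimed weak convergence to $\mathcal{L}(\zeta,\widehat q(\zeta;p^{\mathrm{MF},*}),\widehat\pi(\zeta))$ along the full sequence.

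\textbf{Alternative route.} One could instead use Corollary~\ref{cor:subsequential_equilibrium}. Every subsequence has a further subsequence whose empirical measures converge to $\mathcal{L}(\zeta,\widehat q(\zeta;p^{\infty,*}),\widehat\pi(\zeta))$ for some $p^{\infty,*}\in\mathcal{P}^{\mathrm{MF},*}=\{p^{\mathrm{MF},*}\}$. The limit is therefore the same for every subsequence. The standard subsequence principle for weak convergence, where the weak topology on probability measures over $\mathcal Z\times[0,1]\times\mathbb R$ is metrizable, e.g.\ by the bounded-Lipschitz metric, then yields convergence of the whole sequence.

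\textbf{Main obstacle.} There is essentially no obstacle. The only point needing care is checking that the selected controls coincide with the type-form responses evaluated along the premium sequence, so that Proposition~\ref{prop:profile_distribution_convergence} applies verbatim.
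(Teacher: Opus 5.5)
Your proof is correct and follows the argument the paper intends. The paper states this corollary without a separate proof, as a standard consequence of the preceding results: premium convergence to the singleton via Corollary~\ref{cor:optimal_premium_unique}, then Proposition~\ref{prop:profile_distribution_convergence} applied with $p^N=p^{N,*}\in[\underline p,\bar p]$, or equivalently Theorem~\ref{thm:empirical_profile_convergence} combined with the subsequence principle, which is your alternative route.
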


Together, Propositions~\ref{prop:qhat_uniform_convergence}--\ref{prop:optimal_premium_convergence} and Theorem~\ref{thm:empirical_profile_convergence} justify the Stackelberg mean field game in Section~\ref{sec:smfg} as a large-population approximation of the finite-player Stackelberg--Nash game in Section~\ref{sec:finite_game}. In particular, the follower equilibrium, the leader's reduced premium objective, the optimal premium set, and Stackelberg equilibria are stable under Assumption~\ref{assump}.

\begin{remark}
A simple sufficient condition for the assumption $\mathcal{P}^{\mathrm{MF},*}=\{p^{\mathrm{MF},*}\}$ in Corollary~\ref{cor:unique_equilibrium} is population homogeneity. Indeed, if the type distribution is a Dirac mass, $\mathfrak{m}=\delta_{z_0}$, for some fixed type $z_0\in\mathcal{Z}$, then the follower retention is affine in $p$ from $0$ to $1$ on the effective premium interval $[p^{\mathrm{MF},\min},p^{\mathrm{MF},\max}]$. Hence, the reduced premium objective $\mathcal{J}^{L,\mathrm{MF}}$ is strictly concave (quadratic) on this interval, and therefore has a unique maximizer.
\end{remark}

\begin{remark}
The leader's investment control does not require a separate convergence argument, because its optimizer is the same in both the finite-player and the mean field formulations:
\begin{equation*}
\pi^{L,*}=\pi^{L,\mathrm{MF},*}=\frac{\gamma^{L}\mu^{L}}{(\sigma^{L,0})^{2}+(\sigma^{L})^{2}}.
\end{equation*}
\end{remark}

\section{Numerical Illustrations} 
\label{sec:numerics}
\subsection{Finite-Player Stackelberg--Nash Equilibria}\label{sec:numerics-finite}
We now illustrate the finite-player analysis developed in Section~\ref{sec:finite_game}. Since the investment controls are explicit (cf. \eqref{eq:pihat} and \eqref{eq:pistar_finite}), we focus on follower retention responses, the leader's premium objective, and constant Stackelberg--Nash premium and retention levels. These quantities depend on the insurers only through $(a^i,\gamma^i,\theta^i,v^{i,0},v^i)$ and on the reinsurer only through $\gamma^L$, so we leave the remaining parameters $x^i,\eta^i,\mu^i,\sigma^{i,0},\sigma^i,x^L,\mu^L,\sigma^{L,0},\sigma^L$ unspecified. Throughout this subsection, we set $\gamma^i=1$, $v^{i,0}=0.7$, $v^i=0.3$, and $\gamma^L=5$. Across the examples, we vary the relative performance parameters $\theta^i$; Figure~\ref{fig:1} also uses heterogeneous claim rates $a^i$, while Figures~\ref{fig:2} and~\ref{fig:3} use homogeneous claim rates.

Figure~\ref{fig:1} considers three insurers with $(a^1,a^2,a^3)=(1,1.30,1.35)$ and $(\theta^1,\theta^2,\theta^3)=(0.5,0,1)$. Consistent with Proposition~\ref{prop:premium_response} and Corollary~\ref{cor:premium_partition}, each $\widehat q^i(p)$ is continuous and monotonically increasing, and is affine on each subinterval of $[p^{\min},p^{\max}]=[1,1.880]$, where the follower retention configuration is fixed. Here, $p^{\min}=1$ and $p^{\max}=1.880$ follow from \eqref{eq:pminmax}, while the insurer-specific thresholds $p^{1,\min}=1$, $p^{2,\min}=1.300$, $p^{3,\min}\approx1.262$, $p^{1,\max}\approx1.413$, $p^{2,\max}=1.880$, $p^{3,\max}\approx1.513$ are obtained from the threshold continuation procedure. The thresholds illustrate the effect of relative performance concerns. Insurer~2, with $\theta^2=0$, starts retaining at $p^{2,\min}=a^2=1.30$. In contrast, insurer~3, with $\theta^3=1$, starts at $p^{3,\min}\approx1.262<a^3=1.35$, where the premium-loss spread $p-a^3$ is still negative.  Thus, insurer~3 starts retaining before insurer~2 despite having the larger expected claim rate. This early retention reflects the common-insurance exposure spillover, which is also visible in the slopes of the retention responses. On $(p^{1,\min},p^{3,\min})$, only insurer~1 retains, and $\widehat q^1$ rises affinely. Once insurer~3 begins retention at $p^{3,\min}$, the spillover through $\widehat V^N(p)$ makes $\widehat q^1$ rise more steeply; when insurer~2 begins partial retention at $p^{2,\min}$, the same spillover steepens the responses of insurers~1 and~3, with a larger effect on insurer~3 because $\theta^3>\theta^1$. By contrast, insurer~2 has $\theta^2=0$, so it is unaffected by this spillover and keeps a constant-slope retention response. For insurers~1 and~3, the full-retention thresholds $p^{1,\max}\approx1.413$ and $p^{3,\max}\approx1.513$ lie below their no-relative-performance full-retention thresholds, $1.58$ and $1.93$, respectively. Since the reinsurer anticipates this entire follower response, it chooses the premium $p^*\approx1.605$ that maximizes its reduced premium objective $\mathcal J^L$, which determines the Stackelberg--Nash equilibrium; at equilibrium, insurers~1 and~3 fully retain, while insurer~2 retains partially.

\begin{figure}[!htb]
    \centering
    \includegraphics[width=0.8\linewidth]{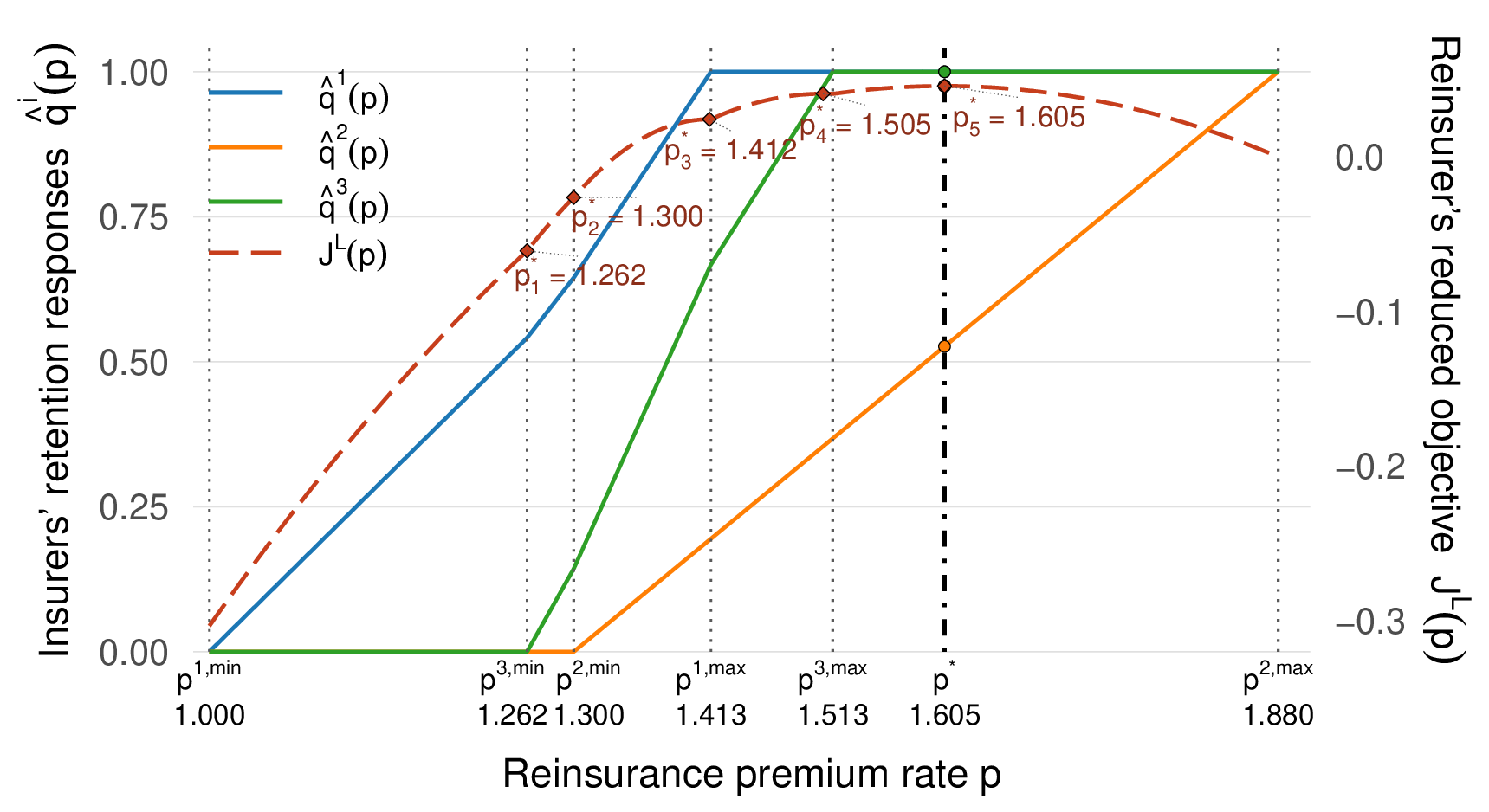}
    \caption{Follower equilibrium retentions $\widehat q^i(p)$, $i=1,2,3$, and the reinsurer's reduced objective $\mathcal J^L(p)$. The dotted vertical lines mark the insurer-specific thresholds, the red diamonds mark the intervalwise premium candidates $p_k^*$, and the black dash-dotted vertical line marks the Stackelberg--Nash equilibrium premium $p^*$. The parameters are $(a^1,a^2,a^3)=(1,1.30,1.35)$, $(\gamma^1,\gamma^2,\gamma^3)=(1,1,1)$, $(\theta^1,\theta^2,\theta^3)=(0.5,0,1)$, $(v^{1,0},v^{2,0},v^{3,0})=(0.7,0.7,0.7)$, $(v^1,v^2,v^3)=(0.3,0.3,0.3)$, and $\gamma^L=5$.}
    \label{fig:1}
\end{figure}

Figure~\ref{fig:2} illustrates the nonuniqueness of constant Stackelberg--Nash equilibria. Here $a^1=a^2=1$ and $\theta^1=0$, while $\theta^2>0$ increases across panels. Since only insurer~2 has a relative performance concern, it reaches full retention before insurer~1. 
The two intervalwise maximizers of $\mathcal J^L$ correspond to two retention configurations: both insurers retain partially, or insurer~2 fully retains while insurer~1 retains partially. The value of $\mathcal J^L$ at each maximizer varies continuously with $\theta^2$. For small $\theta^2$ (panels~(a)--(b)), the both-partial maximizer gives the larger value, whereas at $\theta^2=1$ (panel~(d)) the other maximizer gives the larger value; hence the two values coincide at some intermediate $\theta^2$, which we compute to be $\theta^2\approx0.925391$ (panel~(c)). There, two constant Stackelberg--Nash equilibria coexist at distinct premiums of approximately $1.184$ and $1.304$. Beyond this value, the equilibrium switches discontinuously from the both-partial configuration to the one in which insurer~2 fully retains; at the switch, $p^*$, $q^{1,*}$, and $q^{2,*}$ jump together.

\begin{figure}[!htb]
    \centering
    \begin{subfigure}{0.45\textwidth}
        \centering
        \includegraphics[width=\linewidth]{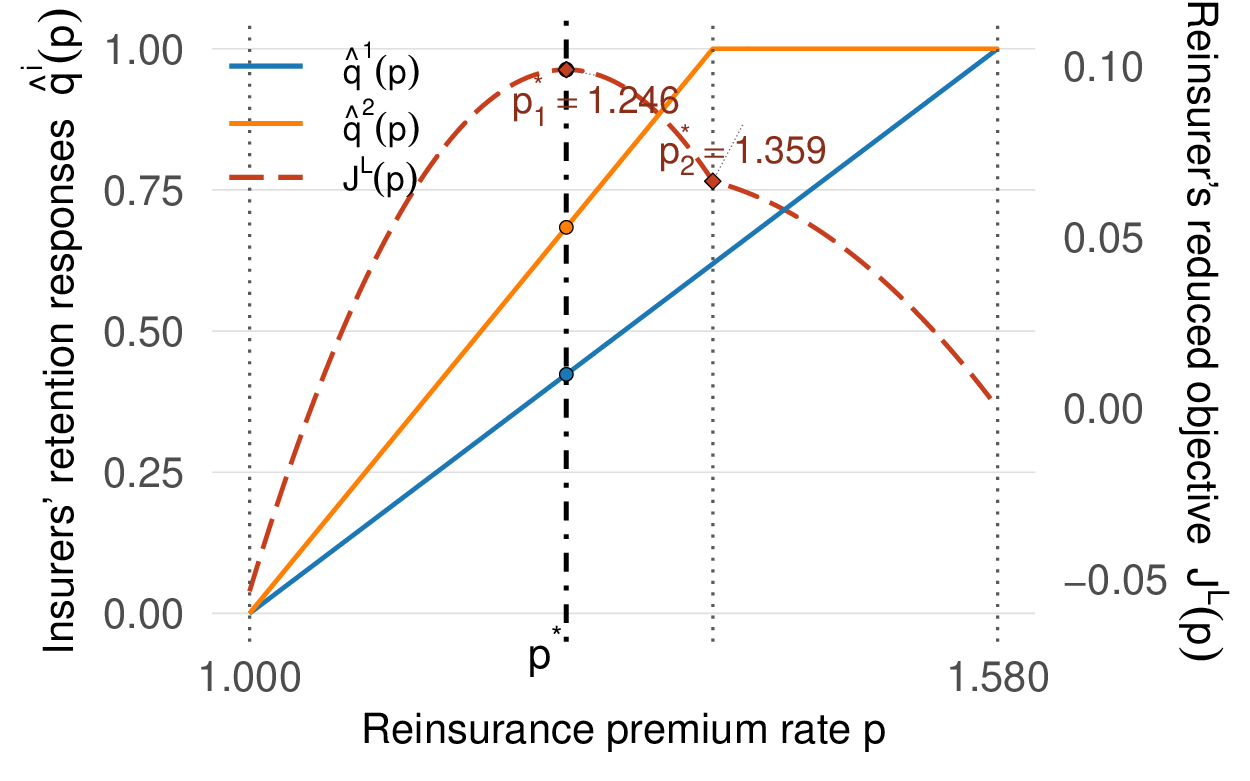}
        \caption{$\theta^2=0.5$}
    \end{subfigure}
    \hfill
    \begin{subfigure}{0.45\textwidth}
        \centering
        \includegraphics[width=\linewidth]{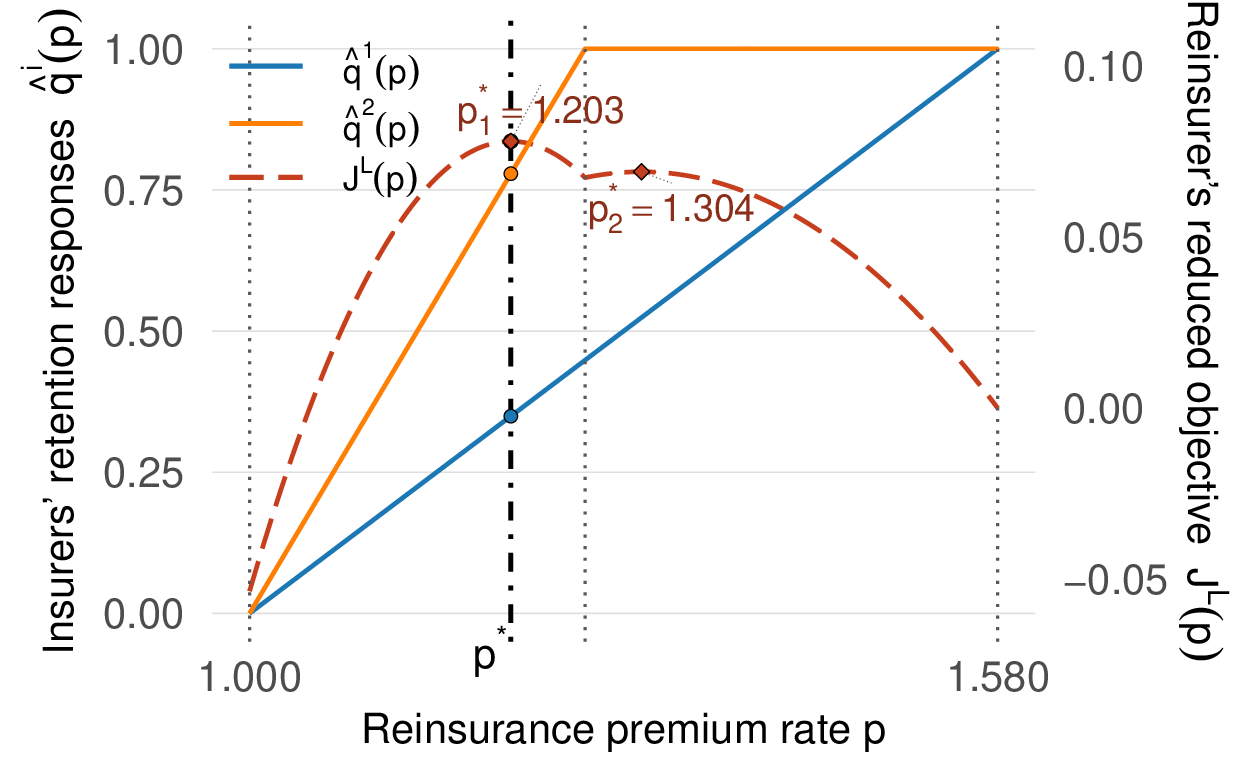}
        \caption{$\theta^2=0.8$}
    \end{subfigure}
    \hfill
    \begin{subfigure}{0.45\textwidth}
        \centering
        \includegraphics[width=\linewidth]{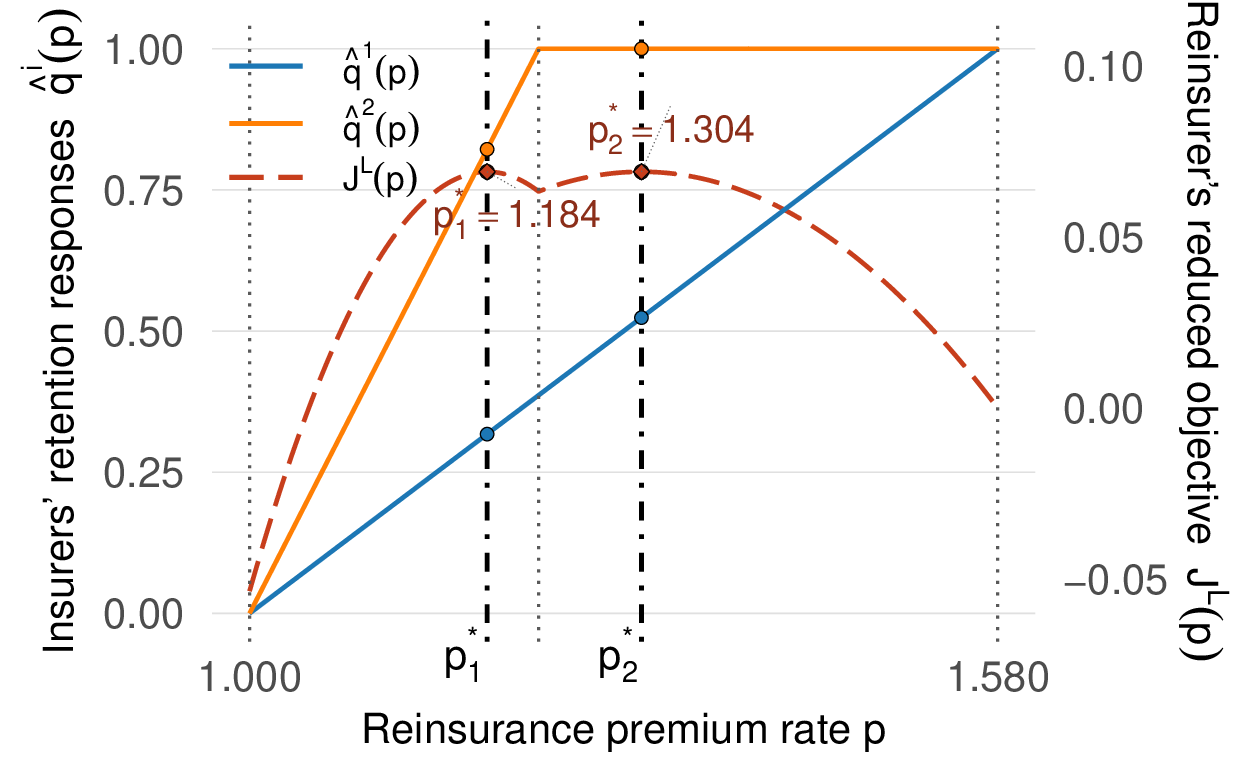}
        \caption{$\theta^2\approx0.925391$}
    \end{subfigure}
    \hfill
    \begin{subfigure}{0.45\textwidth}
        \centering
        \includegraphics[width=\linewidth]{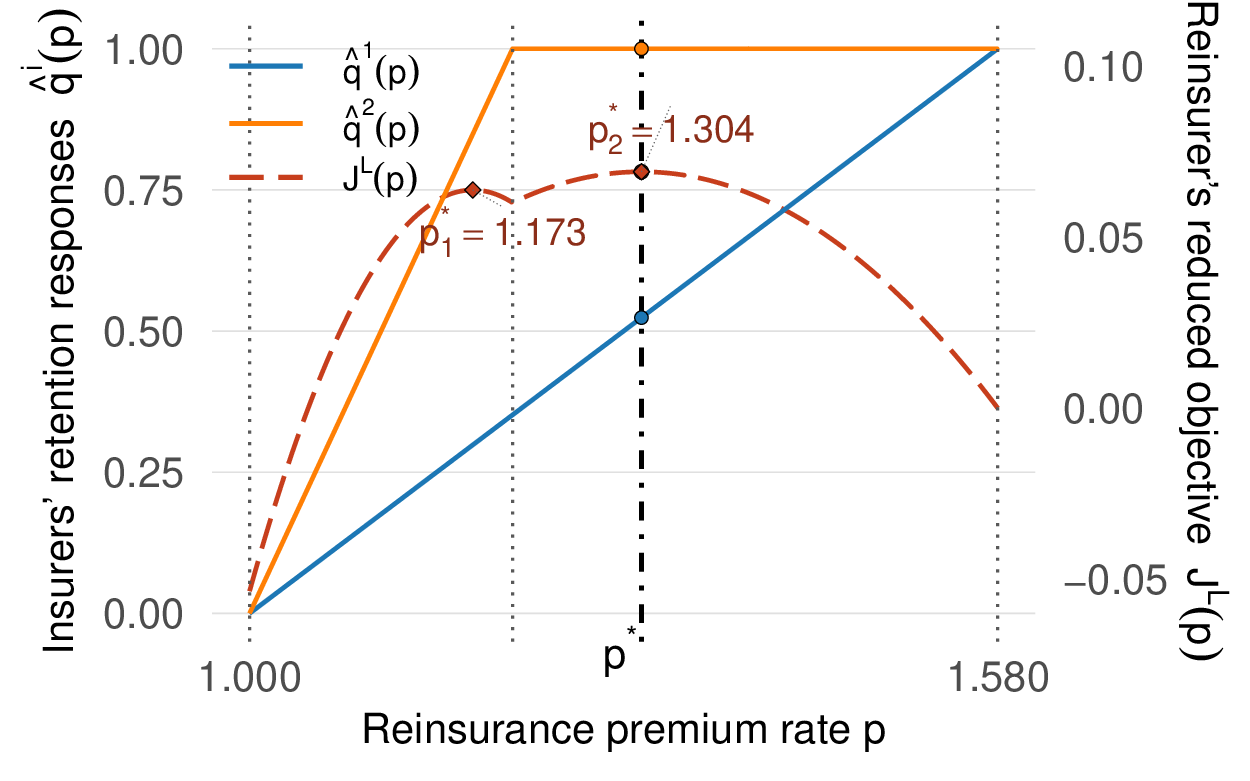}
        \caption{$\theta^2=1$}
    \end{subfigure}
    \caption{Two-insurer examples with $\theta^1=0$ and varying $\theta^2$. Each panel plots the follower equilibrium retentions $\widehat q^i(p)$, $i=1,2$, and the reinsurer's reduced objective $\mathcal J^L(p)$. The graphical conventions are the same as in Figure~\ref{fig:1}. Panel~(c) gives the critical case with multiple Stackelberg--Nash equilibrium premiums. The parameters are $(a^1,a^2)=(1,1)$, $(\gamma^1,\gamma^2)=(1,1)$, $(v^{1,0},v^{2,0})=(0.7,0.7)$, $(v^1,v^2)=(0.3,0.3)$, and $\gamma^L=5$.}
    \label{fig:2}
\end{figure}

Figure~\ref{fig:3} shows how the two-insurer Stackelberg--Nash equilibrium depends on the relative performance parameters $(\theta^1,\theta^2)$ over the full plane $[0,1]^2$; Figure~\ref{fig:2} is the one-parameter slice with $\theta^1=0$. 
Panels~(a) and~(b) report the equilibrium premium $p^*$ and insurer~1's equilibrium retention $q^{1,*}$, respectively. Relabeling the insurers leaves $p^*$ unchanged but interchanges $q^{1,*}$ and $q^{2,*}$; hence Panel~(a) is symmetric about the diagonal $\theta^1=\theta^2$, while Panel~(b) is not, and $q^{2,*}$ is the reflection of $q^{1,*}$ across the diagonal. Except near the upper-left and lower-right regions and their boundaries, the constant Stackelberg--Nash equilibrium varies continuously with $(\theta^1,\theta^2)$: moving from the lower-left region toward the upper-right region, both relative performance parameters increase, the equilibrium retentions $q^{1,*}$ and $q^{2,*}$ rise, and the equilibrium premium $p^*$ falls. In the upper-left and lower-right regions, one relative performance parameter is much larger than the other, and the insurer with the larger parameter fully retains in equilibrium. Along the boundary of each such region, two intervalwise premium candidates attain the same value of $\mathcal J^L$, so multiple constant Stackelberg--Nash equilibria coexist. Crossing the boundary changes the global maximizer from the both-partial candidate to the candidate in which one insurer fully retains, and the other retains partially; therefore, $p^*$, $q^{1,*}$, and $q^{2,*}$ jump. On the slice $\theta^1=0$, this boundary is reached at $\theta^2\approx0.925391$, which is the critical case shown in Figure~\ref{fig:2}.

\begin{figure}[!htb]
    \centering
    \begin{subfigure}{0.45\textwidth}
        \centering
        \includegraphics[width=\linewidth]{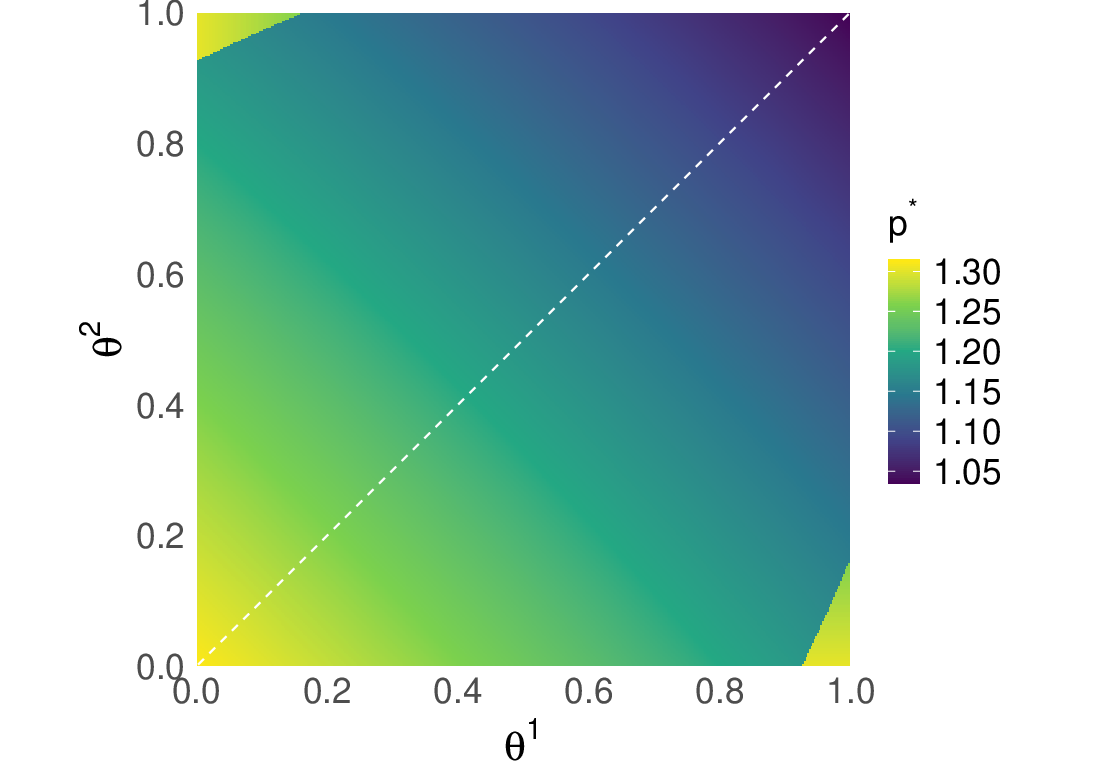}
        \caption{Equilibrium reinsurance premium rate $p^*$}
    \end{subfigure}
    \hfill
    \begin{subfigure}{0.45\textwidth}
        \centering
        \includegraphics[width=\linewidth]{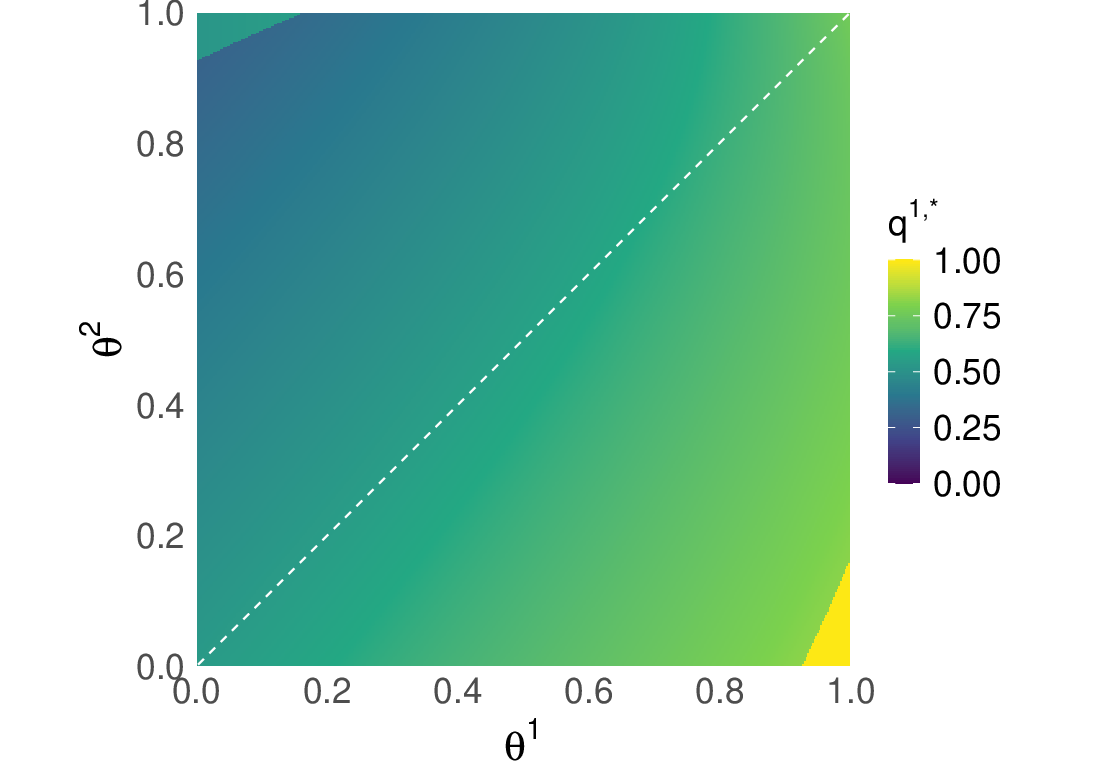}
        \caption{Equilibrium retention level $q^{1,*}$}
    \end{subfigure}
    \caption{Two-insurer Stackelberg--Nash equilibrium maps over the relative performance parameter plane. The white dashed diagonal marks the symmetric case $\theta^1=\theta^2$. The heatmaps are computed over $(\theta^1,\theta^2)\in[0,1]^2$ using a uniform $401\times401$ grid with mesh size $0.0025$. The parameters are $(a^1,a^2)=(1,1)$, $(\gamma^1,\gamma^2)=(1,1)$, $(v^{1,0},v^{2,0})=(0.7,0.7)$, $(v^1,v^2)=(0.3,0.3)$, and $\gamma^L=5$.}
    \label{fig:3}
\end{figure}

\subsection{Stackelberg Mean Field Equilibria and Finite-Player Convergence}
\label{subsec:num_mfg}
We next illustrate the mean field game of Section~\ref{sec:smfg} and the finite-player convergence of Section~\ref{sec:convergence}, using a tractable specification in which heterogeneity enters only through the relative performance parameter $\theta$, with density
\begin{equation}
\label{eq:theta_density_h}
f_{\theta}^{h}(\theta) =
\begin{cases}
h, & 0\le\theta\le0.2,\\
\tfrac{5-2h}{3}, & 0.2<\theta<0.8,\\
h, & 0.8\le\theta\le1,
\end{cases}
\end{equation}
where $0<h<2.5$. The remaining insurer parameters are fixed at $a=1$, $\gamma=1$, $v^0=0.7$, $v=0.3$. The case $h=1$ is uniform on $[0,1]$, while $h>1$ is polarized with more mass at low and high $\theta$. The choice of $\gamma^L$ varies: Figure~\ref{fig:4} uses $\gamma^L=5$ as in Section~\ref{sec:numerics-finite}, while Figure~\ref{fig:5} uses $\gamma^L \approx 1.254893$ chosen to produce multiple mean field equilibria.

Because only $\theta$ is random and its density is piecewise constant, the average common-insurance exposure $\widehat V(p)$, the follower retention response $\widehat q^\zeta(p)$, and the reduced premium objective $\mathcal J^{L,\mathrm{MF}}(p)$ admit an explicit piecewise characterization (see \ref{app:mf_closed_form}). The reduced premium objective $\mathcal J^{L,\mathrm{MF}}$ can then be maximized interval by interval on $[p^{\mathrm{MF},\min},p^{\mathrm{MF},\max}]=[1,1.580]$, rather than by a generic grid search over the entire interval.  We use this characterization to compute the mean field quantities shown in Figures~\ref{fig:4} and~\ref{fig:5}.

To illustrate the convergence results, we construct finite-player games by setting $a^i=1$, $\gamma^i=1$, $v^{i,0}=0.7$, and $v^i=0.3$, and choosing the relative performance parameters $\theta^1,\ldots,\theta^N$ by the midpoint quantile rule associated with $f_\theta^h$. The corresponding empirical distributions of insurers' types $\mathfrak m^N$ then converge weakly to the mean field type law $\mathfrak m$. We compare the finite-player reduced premium objectives, optimal premiums, and equilibrium retention distributions with their mean field counterparts.

Figure~\ref{fig:4} considers the uniform case $h=1$, with $\gamma^L=5$. Panel~(a) shows the reduced premium objectives. The mean field objective $\mathcal J^{L,\mathrm{MF}}$ is single-peaked and has a unique optimal premium $p^{\mathrm{MF},*}\approx1.189$. The finite-player objectives $\mathcal J^{L,N}$ approach $\mathcal J^{L,\mathrm{MF}}$, and their maximizers approach $p^{\mathrm{MF},*}$ as $N$ increases. Panel~(b) shows the corresponding equilibrium retention distributions. At the unique Stackelberg mean field equilibrium, all types retain partially, so the mean field retention distribution has no mass at $0$ or $1$. The empirical retention distributions induced by the finite-player Stackelberg--Nash equilibria converge to this mean field retention distribution. This example illustrates Proposition~\ref{prop:J_uniform_convergence}, Corollary~\ref{cor:optimal_premium_unique}, and Corollary~\ref{cor:unique_equilibrium} in the case with a unique Stackelberg mean field equilibrium.

\begin{figure}[!htb]
    \centering
    \begin{subfigure}{0.45\textwidth}
        \centering
        \includegraphics[width=\linewidth]{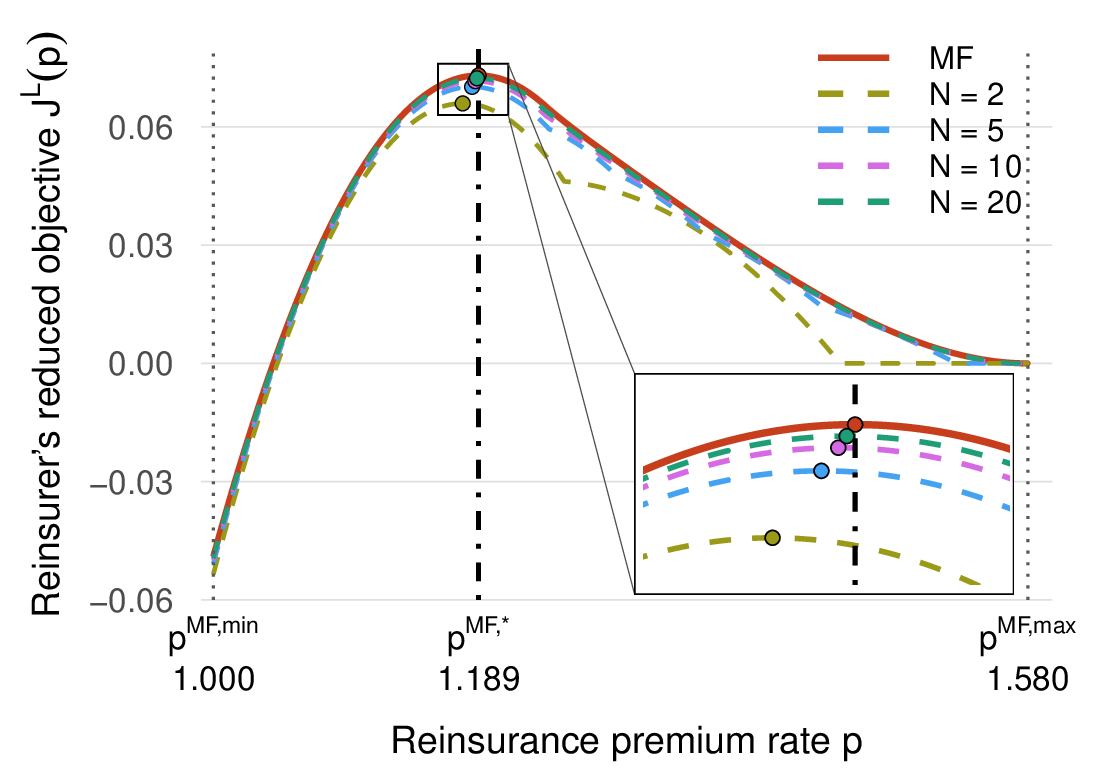}
        \caption{Convergence of reduced premium objectives}
    \end{subfigure}
    \hfill
    \begin{subfigure}{0.45\textwidth}
        \centering
        \includegraphics[width=\linewidth]{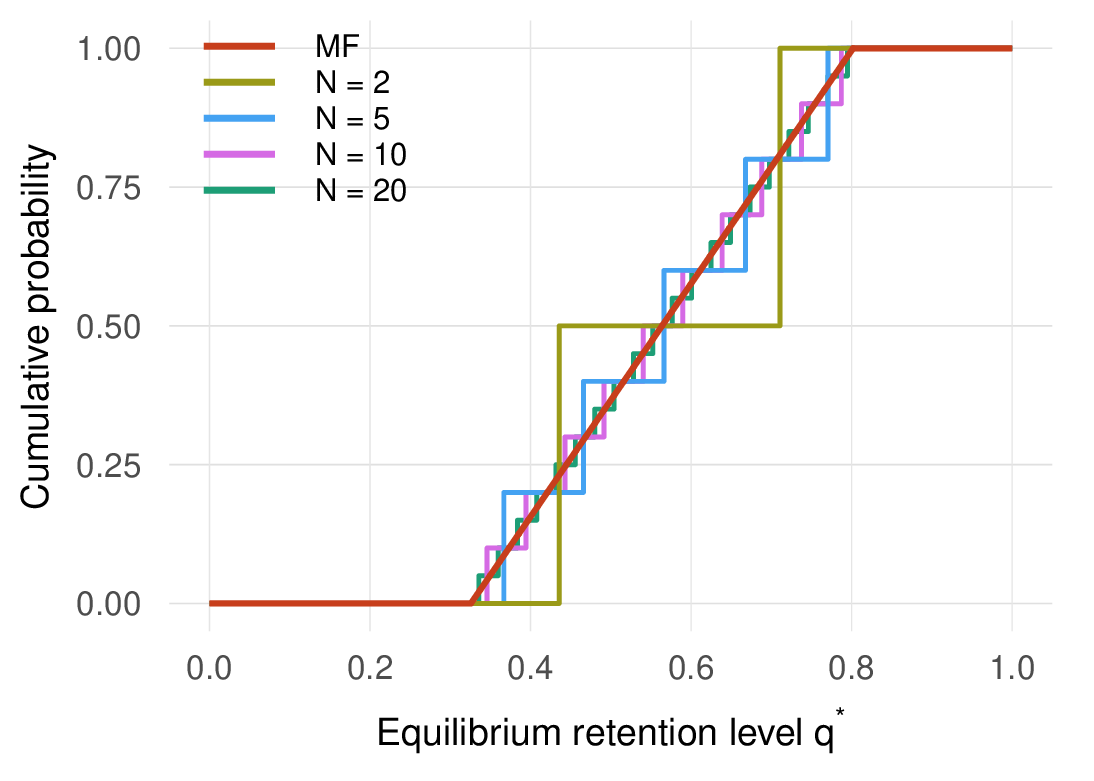}
        \caption{Convergence of equilibrium retention distributions}
    \end{subfigure}
    \caption{Mean field equilibrium and finite-player convergence under $h = 1$. In the mean field model, heterogeneity enters only through the relative performance parameter $\theta\sim\mathrm{Uniform}(0,1)$, and the remaining insurer parameters are fixed at $a=1$, $\gamma=1$, $v^0=0.7$, and $v=0.3$, with $\gamma^L=5$. For the finite-player games, each insurer has $a^i=1$, $\gamma^i=1$, $v^{i,0}=0.7$, and $v^i=0.3$, while $\theta^i=(i-0.5)/N$, $i=1,\ldots,N$ is chosen by the midpoint quantile rule.}
    \label{fig:4}
\end{figure}

Figure~\ref{fig:5} considers the polarized case $h=2.3$, with $\gamma^L\approx1.254893$. In contrast to the uniform case in Figure~\ref{fig:4}, the Stackelberg mean field equilibrium is not unique. Panel~(a) shows the reduced premium objectives. The mean field objective $\mathcal J^{L,\mathrm{MF}}$ has two maximizers, located approximately at $p=1.229$ and $p=1.294$, which induce two distinct Stackelberg mean field equilibria. The finite-player objectives $\mathcal J^{L,N}$ approach $\mathcal J^{L,\mathrm{MF}}$ as $N$ increases, and the optimal premiums of the finite-player Stackelberg--Nash equilibria shown in the figure move toward the higher mean field optimal premium. Panel~(b) compares the retention distributions induced by the two Stackelberg mean field equilibria with the empirical retention distributions induced by the finite-player Stackelberg--Nash equilibria. The two mean field equilibria have different retention configurations: at the lower-premium equilibrium, all types retain partially, whereas at the higher-premium equilibrium, the retention distribution has a jump at full retention. The empirical retention distributions move toward the retention distribution associated with the higher-premium Stackelberg mean field equilibrium. This example is consistent with Proposition~\ref{prop:optimal_premium_convergence} and Theorem~\ref{thm:empirical_profile_convergence} in the case of nonunique Stackelberg mean field equilibria.

\begin{figure}[!htb]
    \centering
    \begin{subfigure}{0.45\textwidth}
        \centering
        \includegraphics[width=\linewidth]{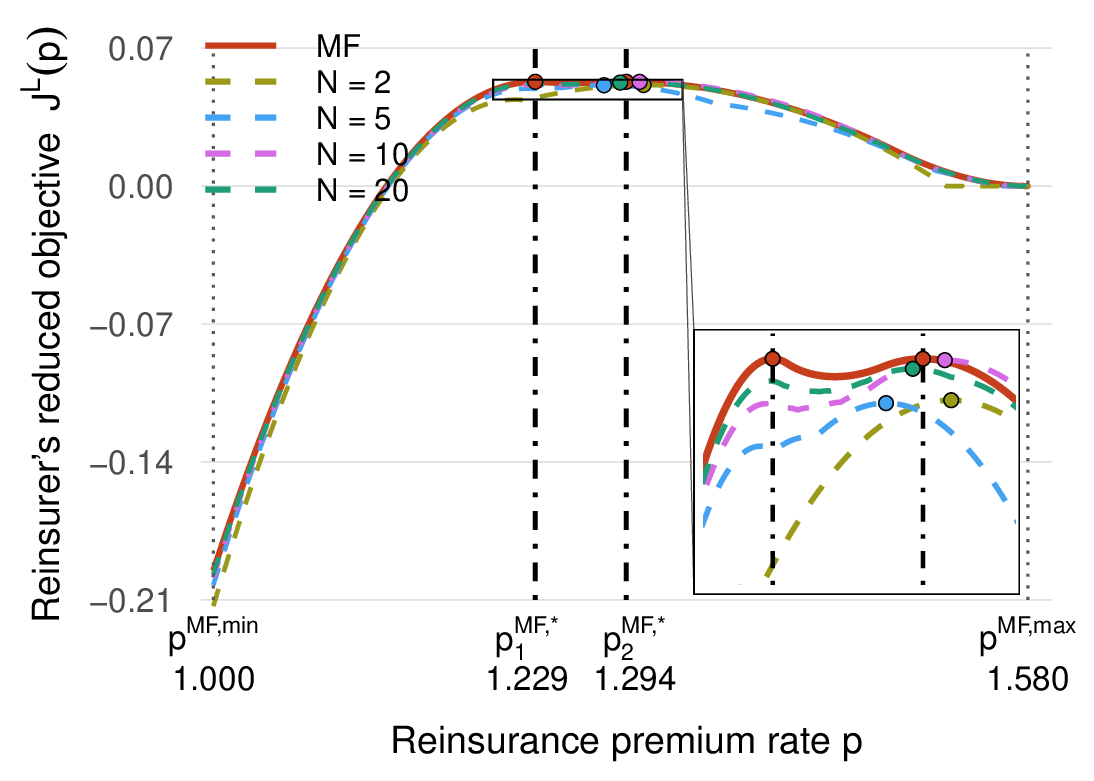}
        \caption{Convergence of reduced premium objectives}
    \end{subfigure}
    \hfill
    \begin{subfigure}{0.45\textwidth}
        \centering
        \includegraphics[width=\linewidth]{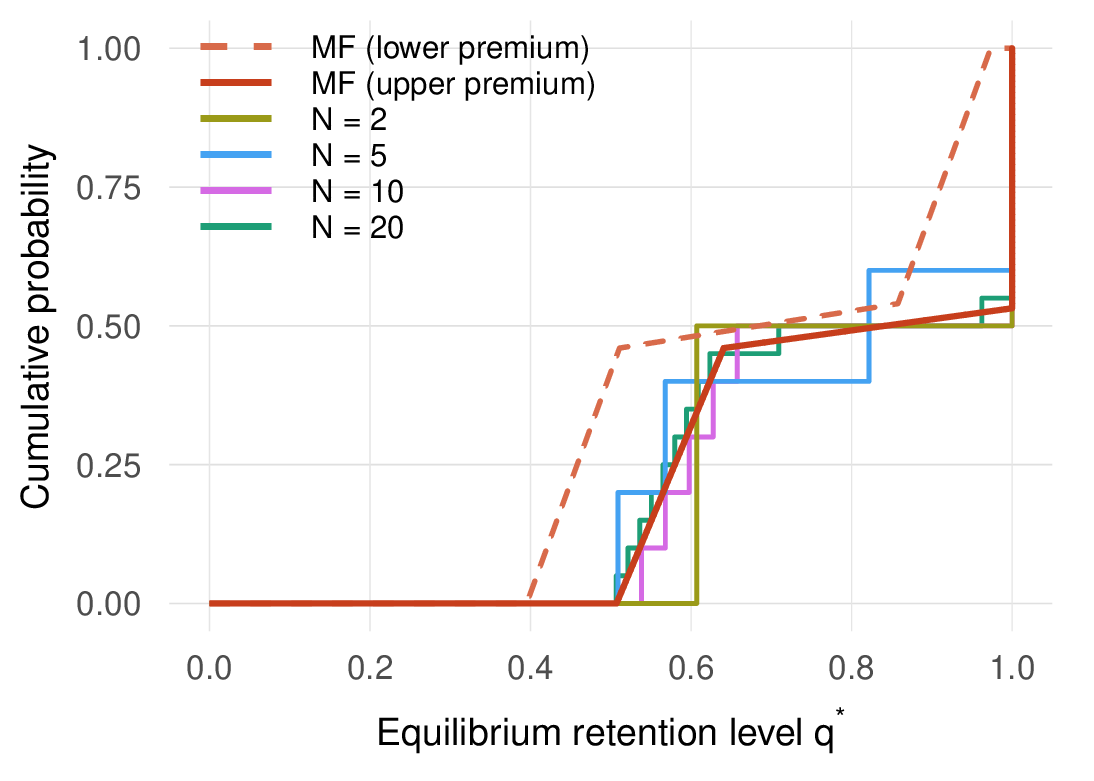}
        \caption{Convergence of equilibrium retention distributions}
    \end{subfigure}
    \caption{Mean field equilibria and finite-player convergence under $h = 2.3$. In the mean field model, heterogeneity enters only through the relative performance parameter $\theta$, whose density is given by \eqref{eq:theta_density_h} with $h=2.3$, and the remaining insurer parameters are fixed at $a=1$, $\gamma=1$, $v^0=0.7$, and $v=0.3$, with $\gamma^L\approx1.254893$. For the finite-player games, each insurer has $a^i=1$, $\gamma^i=1$, $v^{i,0}=0.7$, $v^i=0.3$, and $\theta^i$ is chosen by the midpoint quantile rule associated with the density \eqref{eq:theta_density_h}.}
    \label{fig:5}
\end{figure}

The finite-player optimal premiums shown in Figure~\ref{fig:5}(a) approach the higher of the two mean field optimal premiums, but Proposition~\ref{prop:optimal_premium_convergence} does not imply convergence to this particular value. When the mean field optimal premium is not unique, the proposition only implies that every subsequential limit of finite-player optimal premiums is a mean field optimal premium; it does not imply convergence of the full sequence or determine which of the two mean field optimal premiums is approached. To examine whether this observation depends on how the empirical type distributions are constructed, we first replaced the midpoint quantile rule in Figure~\ref{fig:5}(a) by left- and right-point quantile rules; in both cases, $p^{N,*}$ still moves toward the higher mean field optimal premium for $N=2,\ldots,50$. Figure~\ref{fig:6} then reports two further constructions for the same polarized example, with heterogeneity only through $\theta$ and $\gamma^L\approx1.254893$, again computing $p^{N,*}$ for $N=2,\ldots,50$. In Figure~\ref{fig:6}(a), the midpoint quantile rule is used as in Figure~\ref{fig:5}, but with $h$ replaced by $h_N\equiv2.3$, $h_N=2.3-0.2N^{-1/4}$ (so that $h_N\uparrow2.3$), and $h_N=2.3+0.2N^{-1/4}$ (so that $h_N\downarrow2.3$). Since $h_N\to2.3$ in all three cases, the corresponding empirical type distributions converge weakly to the same mean field type distribution, namely the one corresponding to $h=2.3$. Over the plotted range, however, $p^{N,*}$ stays near the higher mean field optimal premium for $h_N\equiv2.3$ and $h_N\downarrow2.3$, while it moves toward the lower one for $h_N\uparrow2.3$. In Figure~\ref{fig:6}(b), the relative performance parameters are generated from a nested i.i.d. sample from the distribution with the density \eqref{eq:theta_density_h} with $h=2.3$, using the first $N$ samples for each $N$. The four sequences show sizable finite-$N$ fluctuations; depending on the nested sample, $p^{N,*}$ may stay near either mean field optimal premium or move between their neighborhoods. These computations do not identify the finite-player limits. They illustrate that, even when $\mathfrak m^N$ converges weakly to the same mean field type law $\mathfrak m$, the values of $p^{N,*}$ over the plotted range may stay near different mean field optimal premiums, depending on the construction of $\mathfrak m^N$.

\begin{figure}[!htb]
    \centering
    \begin{subfigure}{0.45\textwidth}
        \centering
        \includegraphics[width=\linewidth]{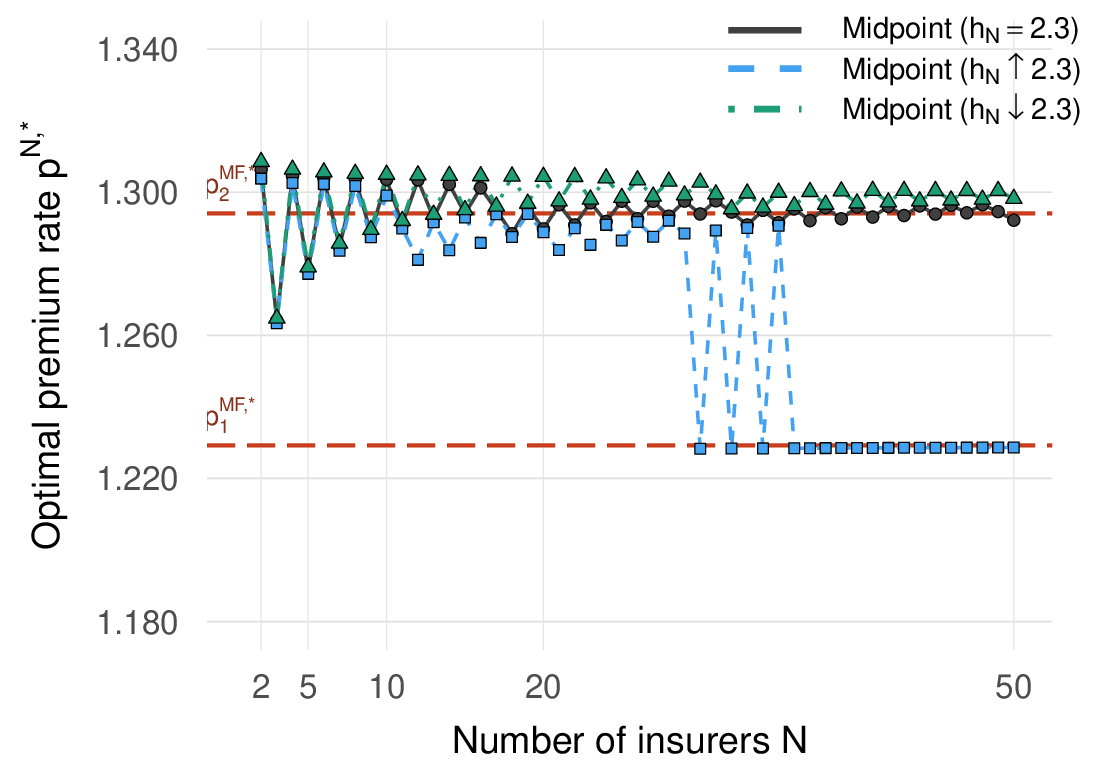}
        \caption{Optimal premiums under $N$-dependent density parameters}
    \end{subfigure}
    \hfill
    \begin{subfigure}{0.45\textwidth}
        \centering
        \includegraphics[width=\linewidth]{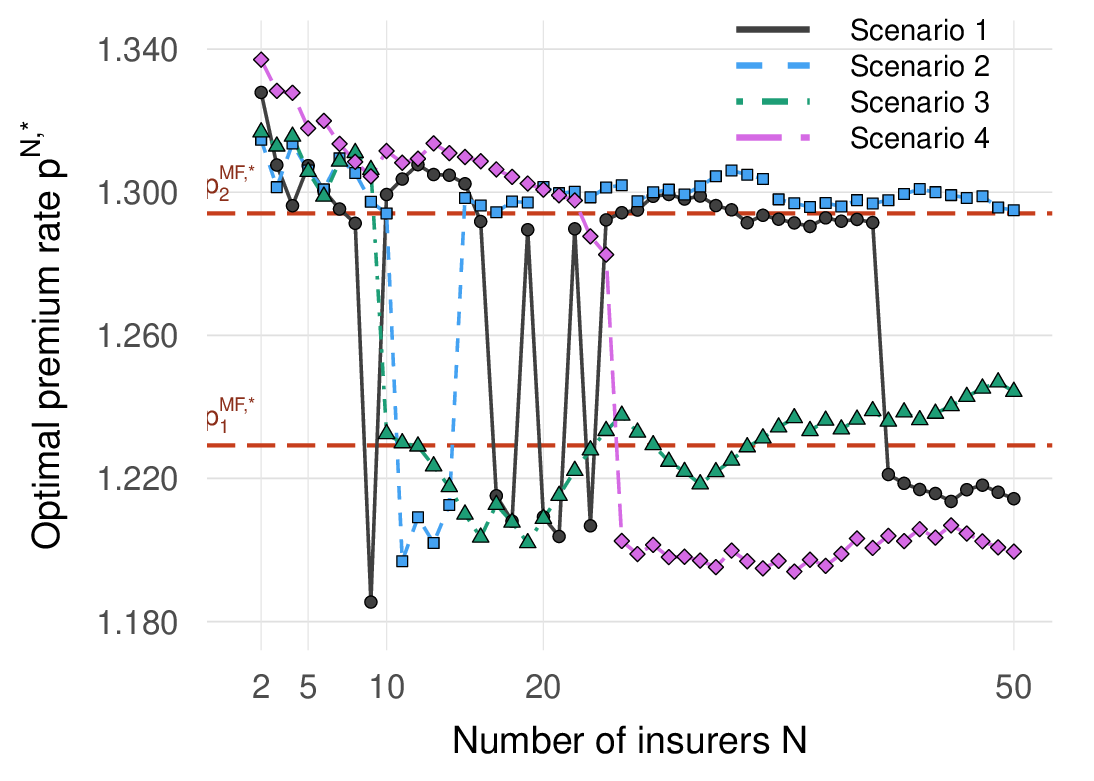}
        \caption{Optimal premiums under nested i.i.d. samples}
    \end{subfigure}
    \caption{Finite-player premium diagnostics for the nonunique mean field example in Figure~\ref{fig:5}. The dashed horizontal lines mark the two mean field optimal premiums. Panel~(a) uses the midpoint quantile rule as in Figure~\ref{fig:5}, with $h_N\equiv2.3$, $h_N=2.3-0.2N^{-1/4}$, and $h_N=2.3+0.2N^{-1/4}$. Panel~(b) uses four nested i.i.d. samples from the distribution with the density \eqref{eq:theta_density_h} with $h=2.3$, using the first $N$ samples for each $N$.}
    \label{fig:6}
\end{figure}

\section{Conclusion} 
\label{sec:conclusion}

We studied endogenous reinsurance pricing in a competitive market with one strategic reinsurer and many heterogeneous insurers, in both finite-player and mean field settings. Under constant controls, the insurers' equilibrium retention response at a fixed premium is characterized by a scalar fixed point for the average common-insurance exposure, and is monotone in the premium with a threshold structure. This structure reduces the reinsurer's problem to a one-dimensional optimization over a compact interval and yields constant Stackelberg equilibria in both settings. In the finite-player case, it also leads to a threshold continuation procedure that computes equilibrium premiums in quadratic time, without enumerating all retention configurations. The characterization further reveals a spillover channel through which relative performance concerns can lead insurers to retain risk even when reinsurance remains actuarially favorable.

We also proved convergence from the finite-player games to the mean field game under weak convergence of the empirical type distributions, without requiring the mean field equilibrium to be unique. The convergence is formulated at the level of optimal premium sets and subsequential empirical type-control distributions: every subsequential limit of finite-player equilibrium type-control distributions is induced by a Stackelberg mean field equilibrium. This provides a forward justification for the mean field approximation in the present Stackelberg reinsurance model.

Several extensions remain open. One direction is to impose solvency or ruin constraints over the contract period, motivated, for instance, by risk-based capital or Solvency~II-type requirements. Such constraints would require controlling not only terminal wealth but also the path of wealth or solvency ratios. Another direction is to relax the common-premium assumption by allowing cedent-specific premiums, which would make the leader's premium choice multidimensional. A further extension is to introduce multiple strategic reinsurers, leading to competition among leaders. Time-dependent controls and more general reinsurance contracts are also natural directions to extend. These extensions would enrich the market structure, but they would also reduce the tractability, economic interpretability, and computational simplicity of the present framework.

Uniqueness and selection of Stackelberg equilibria remain open. The leader's problems in the finite-player and mean field models may have multiple maximizers, and hence Stackelberg equilibria need not be unique. Identifying useful conditions that ensure uniqueness remains open. When the mean field equilibrium is not unique, Theorem~\ref{thm:empirical_profile_convergence} shows that subsequential limits of finite-player equilibrium type-control distributions are induced by Stackelberg mean field equilibria, but it does not determine which equilibrium is approached. Whether different sequences of finite-player type distributions $\mathfrak m^N$ converging weakly to the same mean field type law $\mathfrak m$ can lead to different finite-player limits, or select different Stackelberg mean field equilibria, remains open.

\appendix
\allowdisplaybreaks
\section{Proofs in Section~\ref{sec:finite_game}}
\subsection{Proof of Proposition \ref{prop:follower_response}}
\label{app:2.1}
Fix $p\ge0$. For fixed controls of the other insurers, insurer $i$'s reduced objective \eqref{eq:Jtilde_expanded} separates into two strictly concave quadratic functions, one in $q^i$ and the other in $\pi^i$. Thus, the projected first-order condition for $q^i\in[0,1]$ and the first-order condition for $\pi^i\in\mathbb R$ are necessary and sufficient for insurer $i$'s optimality.

For $i=1,\ldots,N$, the first-order conditions, together with the constraint $q^i\in[0,1]$, give the best responses
\begin{equation}
\label{eq:best_response}
q^{i}=\Pi_{[0,1]}\Big(\frac{(p-a^{i})+\frac{1}{\gamma^{i}}\theta^{i}v^{i,0}V_{-i}^{N}}{\frac{1}{\gamma^{i}}\big(1-\theta^{i}/N\big)\big((v^{i,0})^{2}+(v^{i})^{2}\big)}\Big),\qquad\pi^{i}=\frac{\mu^{i}+\frac{1}{\gamma^{i}}\theta^{i}\sigma^{i,0}\Sigma_{-i}^{N}}{\frac{1}{\gamma^{i}}\big(1-\theta^{i}/N\big)\big((\sigma^{i,0})^{2}+(\sigma^{i})^{2}\big)},
\end{equation}
where $V_{-i}^{N}=\frac{1}{N}\sum_{j\neq i}q^{j}v^{j,0}$, $\Sigma_{-i}^{N}=\frac{1}{N}\sum_{j\neq i}\pi^{j}\sigma^{j,0}$.
At equilibrium, these best responses must be mutually consistent. Writing $(\widehat q^{1},\widehat\pi^{1}),\ldots,(\widehat q^{N},\widehat\pi^{N})$ for an equilibrium response and using \eqref{eq:Vhat_Sigmahat_def}, we have
\begin{equation*}
\frac{1}{N}\sum_{j\neq i}\widehat{q}^{j}v^{j,0}=\widehat{V}^{N}-\frac{1}{N}\widehat{q}^{i}v^{i,0},\qquad\frac{1}{N}\sum_{j\neq i}\widehat{\pi}^{j}\sigma^{j,0}=\widehat{\Sigma}^{N}-\frac{1}{N}\widehat{\pi}^{i}\sigma^{i,0}.
\end{equation*}
Therefore the best-response identities \eqref{eq:best_response} become
\begin{equation}
\label{eq:equilibrium_best_response}
\widehat{q}^{i}=\Pi_{[0,1]}\Big(\frac{(p-a^{i})+\frac{1}{\gamma^{i}}\theta^{i}v^{i,0}\big(\widehat{V}^{N}-\frac{1}{N}\widehat{q}^{i}v^{i,0}\big)}{\frac{1}{\gamma^{i}}\big(1-\theta^{i}/N\big)\big((v^{i,0})^{2}+(v^{i})^{2}\big)}\Big),\qquad\widehat{\pi}^{i}=\frac{\mu^{i}+\frac{1}{\gamma^{i}}\theta^{i}\sigma^{i,0}\big(\widehat{\Sigma}^{N}-\frac{1}{N}\widehat{\pi}^{i}\sigma^{i,0}\big)}{\frac{1}{\gamma^{i}}\big(1-\theta^{i}/N\big)\big((\sigma^{i,0})^{2}+(\sigma^{i})^{2}\big)}.
\end{equation}

We first solve for the equilibrium investment responses. Rearranging the second identity in \eqref{eq:equilibrium_best_response} gives
\begin{equation*}
\widehat{\pi}^{i}=\frac{\mu^{i}+\frac{1}{\gamma^{i}}\theta^{i}\sigma^{i,0}\widehat{\Sigma}^{N}}{\frac{1}{\gamma^{i}}\big[(\sigma^{i,0})^{2}+(\sigma^{i})^{2}(1-\theta^{i}/N)\big]}.
\end{equation*}
Multiplying this identity by $\sigma^{i,0}/N$ and summing over $i=1,\ldots,N$ closes the system in the average common-financial exposure $\widehat\Sigma^N$ defined in \eqref{eq:Vhat_Sigmahat_def}. We obtain
\begin{equation*}
\widehat{\Sigma}^{N}=\frac{1}{N}\sum_{j=1}^{N}\frac{\gamma^{j}\sigma^{j,0}\mu^{j}}{(\sigma^{j,0})^{2}+(\sigma^{j})^{2}(1-\theta^{j}/N)}+\widehat{\Sigma}^{N}\frac{1}{N}\sum_{j=1}^{N}\frac{\theta^{j}(\sigma^{j,0})^{2}}{(\sigma^{j,0})^{2}+(\sigma^{j})^{2}(1-\theta^{j}/N)},
\end{equation*}
so $\widehat\Sigma^N$ is uniquely determined and is given by \eqref{eq:pihat}. Consequently, the investment responses $\widehat\pi^1,\ldots,\widehat\pi^N$ are uniquely determined.

We next solve for the equilibrium retention responses. The first identity in \eqref{eq:equilibrium_best_response} has the form
\begin{equation*}
\widehat{q}^{i}=\Pi_{[0,1]}\Big(\frac{A-B\widehat{q}^{i}}{C}\Big),
\end{equation*}
where $A=(p-a^{i})+\frac{1}{\gamma^{i}}\theta^{i}v^{i,0}\widehat{V}^{N}$, $B=\frac{1}{\gamma^{i}}\frac{\theta^{i}}{N}(v^{i,0})^{2}\ge0$, $C=\frac{1}{\gamma^{i}}\big(1-\theta^{i}/N\big)\big((v^{i,0})^{2}+(v^{i})^{2}\big)>0$.
Applying the elementary identity
$x=\Pi_{[0,1]}\big(\frac{A-Bx}{C}\big)\Longleftrightarrow x=\Pi_{[0,1]}\big(\frac{A}{B+C}\big)$, $ A\in\mathbb{R},\ B\ge0,\ C>0$,
which is verified by considering the three cases $x=0$, $0<x<1$, and $x=1$, we obtain
\begin{equation*}
\widehat{q}^{i}=\Pi_{[0,1]}\Big(\frac{(p-a^{i})+\frac{1}{\gamma^{i}}\theta^{i}v^{i,0}\widehat{V}^{N}}{\frac{1}{\gamma^{i}}\big[(v^{i,0})^{2}+(v^{i})^{2}(1-\theta^{i}/N)\big]}\Big)=\Pi_{[0,1]}\big(I^{i}(p,\widehat{V}^{N})\big).
\end{equation*}
Multiplying this projection formula by $v^{i,0}/N$ and summing over $i=1,\ldots,N$ closes the retention system in the average common-insurance exposure $\widehat V^N$ defined in \eqref{eq:Vhat_Sigmahat_def}. This gives the fixed-point equation \eqref{eq:fixed_point_V}.

It remains to prove that \eqref{eq:fixed_point_V} admits a unique solution. To this end, define
\begin{equation}
\label{eq:fixed_point_map}
F^{N}(p,V):=\frac{1}{N}\sum_{j=1}^{N}v^{j,0}\Pi_{[0,1]}\big(I^{j}(p,V)\big).
\end{equation}
Since $0\le\Pi_{[0,1]}\le1$, the map $V\mapsto F^N(p,V)$ maps $[0,\bar V^N]$ into itself, where $\bar{V}^{N}:=\frac{1}{N}\sum_{j=1}^{N}v^{j,0}$. Moreover, since $\Pi_{[0,1]}$ is $1$-Lipschitz,
\begin{equation*}
|F^{N}(p,V)-F^{N}(p,V')|\le\frac{1}{N}\sum_{j=1}^{N}v^{j,0}|I^{j}(p,V)-I^{j}(p,V')|=\Big(\frac{1}{N}\sum_{j=1}^{N}\frac{\theta^{j}(v^{j,0})^{2}}{(v^{j,0})^{2}+(v^{j})^{2}(1-\theta^{j}/N)}\Big)|V-V'|.
\end{equation*}
The coefficient multiplying $|V-V'|$ is strictly less than one, since each summand is at most one, with equality possible only in the excluded case $N=\theta^1=1$. Hence $F^N(p,\cdot)$ is a contraction on $[0,\bar V^N]$, and by Banach's fixed point theorem, it has a unique fixed point. This proves the existence and uniqueness of $\widehat V^N$, and then \eqref{eq:qhat_projection} uniquely determines $\widehat q^1,\ldots,\widehat q^N$. Restoring the dependence on the fixed premium $p$ gives the notation $\widehat q^i(p)$ in \eqref{eq:qhat_projection}.

\subsection{Proof of Proposition~\ref{prop:premium_response}}
\label{app:2.2}
We first prove that $p\mapsto\widehat V^N(p)$ is continuous and monotonically increasing. Recall the fixed-point map in \eqref{eq:fixed_point_map}. Since $\Pi_{[0,1]}$ is $1$-Lipschitz, for any $p,p'\ge0$ and $V,V'\in[0,\bar V^N]$,
\begin{equation*}
|F^{N}(p,V)-F^{N}(p',V')|\le\frac{1}{N}\sum_{j=1}^{N}v^{j,0}\Big|\frac{\theta^{j}v^{j,0}(V-V')+\gamma^{j}(p-p')}{(v^{j,0})^{2}+(v^{j})^{2}(1-\theta^{j}/N)}\Big|\le K_{1}^{N}|V-V'|+K_{2}^{N}|p-p'|,
\end{equation*}
where $
K_{1}^{N}:=\frac{1}{N}\sum_{j=1}^{N}\frac{\theta^{j}(v^{j,0})^{2}}{(v^{j,0})^{2}+(v^{j})^{2}(1-\theta^{j}/N)}\in[0,1)$, $K_{2}^{N}:=\frac{1}{N}\sum_{j=1}^{N}\frac{\gamma^{j}v^{j,0}}{(v^{j,0})^{2}+(v^{j})^{2}(1-\theta^{j}/N)}>0$.
Using the fixed-point relation $\widehat{V}^{N}(p)=F^{N}(p,\widehat{V}^{N}(p))$, we obtain
\begin{equation*}
|\widehat{V}^{N}(p)-\widehat{V}^{N}(p')|\le K_{1}^{N}|\widehat{V}^{N}(p)-\widehat{V}^{N}(p')|+K_{2}^{N}|p-p'|.
\end{equation*}
Hence $
|\widehat{V}^{N}(p)-\widehat{V}^{N}(p')|\le\frac{K_{2}^{N}}{1-K_{1}^{N}}|p-p'|$, and $p\mapsto\widehat V^N(p)$ is continuous.

To prove monotonicity, let $p_1\le p_2$. Since $F^N$ is monotonically increasing in $p$,
\begin{equation*}
\widehat{V}^{N}(p_{1})=F^{N}(p_{1},\widehat{V}^{N}(p_{1}))\le F^{N}(p_{2},\widehat{V}^{N}(p_{1})).
\end{equation*}
Since $V\mapsto F^N(p_2,V)$ is monotonically increasing, repeated application gives
\begin{equation*}
\widehat{V}^{N}(p_{1})\le F^{N}\big(p_{2},\widehat{V}^{N}(p_{1})\big)\le F^{N}\big(p_{2},F^{N}(p_{2},\widehat{V}^{N}(p_{1}))\big)\le\cdots.
\end{equation*}
The iterates converge to $\widehat V^N(p_2)$, the unique fixed point of $V=F^N(p_2,V)$, by the contraction property of $V\mapsto F^N(p_2,V)$. Hence $
\widehat V^N(p_1)\le\widehat V^N(p_2)$,
and $p\mapsto\widehat V^N(p)$ is monotonically increasing. Since $I^i(p,V)$ and $\Pi_{[0,1]}$ are continuous and monotonically increasing in their arguments, \eqref{eq:qhat_projection} implies that each map $p\mapsto\widehat q^i(p)$ is continuous and monotonically increasing on $[0,\infty)$.

We now compute the population-wide thresholds.
By \eqref{eq:qhat_projection} and the definition of $\widehat V^N(p)$,
\begin{equation*}
\widehat{q}^{i}(p)=0,\quad i=1,\ldots,N\quad\Longleftrightarrow\quad\widehat{V}^{N}(p)=0\quad\Longleftrightarrow\quad I^{i}(p,0)\le0,\quad i=1,\ldots,N.
\end{equation*}
Since $I^i(p,0)\le0$ if and only if $p\le a^i$, we obtain 
\begin{equation*}
p^{\min}:=\max\{p\ge0:\widehat{q}^{i}(p)=0,\ i=1,\dots,N\}=\min_{1\le i\le N}a^{i}.
\end{equation*}
Similarly,
\begin{equation*}
\widehat{q}^{i}(p)=1,\quad i=1,\ldots,N\quad\Longleftrightarrow\quad\widehat{V}^{N}(p)=\bar{V}^{N}\quad\Longleftrightarrow\quad I^{i}(p,\bar{V}^{N})\ge1,\quad i=1,\ldots,N.
\end{equation*}
Solving $I^i(p,\bar V^N)\ge1$ for every $i$ gives
\begin{equation*}
p\ge a^{i}+\frac{1}{\gamma^{i}}\big[(v^{i,0})^{2}+(v^{i})^{2}(1-\theta^{i}/N)\big]-\frac{1}{\gamma^{i}}\theta^{i}v^{i,0}\bar{V}^{N},\qquad i=1,\ldots,N.
\end{equation*}
Therefore
\begin{equation*}
p^{\max}:=\min\{p\ge0:\widehat{q}^{i}(p)=1,\ i=1,\dots,N\}=\max_{1\le i\le N}\Big\{ a^{i}+\frac{1}{\gamma^{i}}\big[(v^{i,0})^{2}+(v^{i})^{2}(1-\theta^{i}/N)\big]-\frac{1}{\gamma^{i}}\theta^{i}v^{i,0}\bar{V}^{N}\Big\}.
\end{equation*}
This proves \eqref{eq:pminmax}.

Since $p^{\min}$ and $p^{\max}$ are well-defined, and each map $p\mapsto\widehat q^i(p)$ is continuous and monotonically increasing, the definitions of $p^{\min}$ and $p^{\max}$ imply $p^{\min}<p^{\max}$, and 
\begin{equation*}
\widehat{q}^{i}(p)=0\quad\text{for }p\le p^{\min},\qquad\widehat{q}^{i}(p)=1\quad\text{for }p\ge p^{\max},\qquad i=1,\ldots,N.
\end{equation*}

For each $i=1,\ldots,N$, since $\widehat q^i(p)=0$ for $p\le p^{\min}$ and $\widehat q^i(p)=1$ for $p\ge p^{\max}$, the sets in \eqref{eq:premium_thresholds} are nonempty and $p^{i,\min},p^{i,\max}\in[p^{\min},p^{\max}]$. By the continuity and monotonicity of $p\mapsto\widehat q^i(p)$, $\widehat q^i(p^{i,\min})=0$ and $\widehat q^i(p^{i,\max})=1$, hence $p^{i,\min}<p^{i,\max}$. The monotonicity of $p\mapsto\widehat q^i(p)$ gives \eqref{eq:qhat_threshold_structure}.

\subsection{Proof of Proposition \ref{prop:leader_equilibrium}}
\label{app:2.3}
By \eqref{eq:Jtilde_leader} and \eqref{eq:leader_premium_objective}, the reduced leader objective separates as
\begin{equation*}
\widetilde{J}^{L}(p,\pi^{L})=x^{L}+T\Big[\mu^{L}\pi^{L}-\frac{1}{2\gamma^{L}}\big((\sigma^{L,0})^{2}+(\sigma^{L})^{2}\big)(\pi^{L})^{2}\Big]+T\mathcal{J}^{L}(p).
\end{equation*}
Thus $\pi^L$ and $p$ can be optimized separately. The part depending on $\pi^L$ is a strictly concave quadratic function, and its unique maximizer is precisely \eqref{eq:pistar_finite}.

It remains to optimize $\mathcal J^L(p)$ over $p\ge0$. Since each equilibrium retention $p\mapsto \widehat{q}^{i}(p)$ is continuous, the reduced premium objective $\mathcal{J}^{L}$ is continuous on $[0,\infty)$. Furthermore, by Proposition~\ref{prop:premium_response}, $\widehat{q}^{i}(p)=0$ for $p\le p^{\min}$, $\widehat{q}^{i}(p)=1$ for $p\ge p^{\max}$,  $i=1,\ldots,N$.
Consequently, on $[0,p^{\min}]$,
\begin{equation*}
\mathcal{J}^{L}(p)=\frac{1}{N}\sum_{i=1}^{N}(p-a^{i})-\frac{1}{2\gamma^{L}}\Big[\Big(\frac{1}{N}\sum_{i=1}^{N}v^{i,0}\Big)^{2}+\frac{1}{N^{2}}\sum_{i=1}^{N}(v^{i})^{2}\Big],
\end{equation*}
so $\mathcal J^L$ is affine and strictly increasing there. On $[p^{\max},\infty)$, $
\mathcal{J}^{L}(p)=0$.
Hence, no premium outside $[p^{\min},p^{\max}]$ can improve $\mathcal{J}^{L}$, and therefore
\begin{equation*}
\sup_{p\ge0}\mathcal{J}^{L}(p)=\sup_{p\in[p^{\min},\,p^{\max}]}\mathcal{J}^{L}(p).
\end{equation*}
Since $\mathcal{J}^{L}$ is continuous and $[p^{\min},\,p^{\max}]$ is compact, the extreme value theorem implies that the argmax set in \eqref{eq:pstar_argmax} is nonempty. Let $p^*$ be any maximizer, and define
\begin{equation*}
q^{i,*}:=\widehat{q}^{i}(p^*),\qquad \pi^{i,*}:=\widehat \pi^{i},\qquad i=1,\dots,N.
\end{equation*}
Then $(q^{1,*},\pi^{1,*}),\dots,(q^{N,*},\pi^{N,*})$ is a constant follower Nash equilibrium induced by $p^*$, while $(p^*,\pi^{L,*})$ is optimal for the reinsurer. Therefore, $(p^{*},\pi^{L,*},q^{1,*},\pi^{1,*},\dots,q^{N,*},\pi^{N,*})$ is a constant Stackelberg--Nash equilibrium.

\section{Proofs in Section~\ref{sec:smfg}}
\subsection{Proof of Proposition~\ref{prop:CMFER}}
\label{app:3.1}
Fix $p\ge0$. For a candidate average 
$\overline{X}_{T}=C+VW_{T}^{0}+\Sigma B_{T}^{0},$
the typewise reduced criterion \eqref{eq:Jtilde_MF_expanded} separates into two strictly concave quadratic functions, one in $q^\zeta$ and the other in $\pi^\zeta$. Hence, for almost every type $\zeta$, the projected first-order condition for $q^\zeta\in[0,1]$ and the first-order condition for $\pi^\zeta\in\mathbb R$ are necessary and sufficient for optimality. Thus, the typewise best response to this candidate average is
\begin{equation}
\label{eq:MF_best_response}
q^{\zeta}=\Pi_{[0,1]}\big(I^{\zeta}(p,V)\big),\qquad\pi^{\zeta}=\frac{\mu+\frac{1}{\gamma}\theta\sigma^{0}\Sigma}{\frac{1}{\gamma}\big((\sigma^{0})^{2}+\sigma^{2}\big)}.
\end{equation}
For this best response to be a follower mean field equilibrium response, the candidate average must coincide with the population average generated by the response. By the consistency condition \eqref{eq:MF_consistency_condition} and the affine representation \eqref{eq:MF_average_affine_form}, this is equivalent to
\begin{equation*}
V=\mathbb{E}[q^{\zeta}v^{0}],\qquad\Sigma=\mathbb{E}[\pi^{\zeta}\sigma^{0}].
\end{equation*}

We first solve the consistency equation for the average common-financial exposure. Substituting the second identity in \eqref{eq:MF_best_response} into $\Sigma=\mathbb E[\pi^\zeta\sigma^0]$ gives
\begin{equation*}
\Sigma=\mathbb{E}\Big[\frac{\gamma\mu\sigma^{0}}{(\sigma^{0})^{2}+\sigma^{2}}\Big]+\Sigma\,\mathbb{E}\Big[\frac{\theta(\sigma^{0})^{2}}{(\sigma^{0})^{2}+\sigma^{2}}\Big].
\end{equation*}
Since $0\le\theta\le1$ and $\sigma>0$ a.s., $
\mathbb{E}\big[\frac{\theta(\sigma^{0})^{2}}{(\sigma^{0})^{2}+\sigma^{2}}\big]<1$.
Therefore, the consistency equation for $\Sigma$ has the unique solution
\begin{equation*}
\widehat{\Sigma}=\frac{\mathbb{E}\big[\gamma\mu\sigma^{0}/((\sigma^{0})^{2}+\sigma^{2})\big]}{1-\mathbb{E}\big[\theta(\sigma^{0})^{2}/((\sigma^{0})^{2}+\sigma^{2})\big]}.
\end{equation*}

We next solve the consistency equation for the average common-insurance exposure. Define
\begin{equation}
\label{eq:MF_fixed_point_map}
F(p,V):=\mathbb{E}\big[v^{0}\Pi_{[0,1]}\big(I^{\zeta}(p,V)\big)\big].
\end{equation}
Then the consistency equation for $V$ is $V=F(p,V)$. Since $0\le\Pi_{[0,1]}\le1$, the map $V\mapsto F(p,V)$ sends $[0,\bar V]$ into itself, where $\bar V:=\mathbb E[v^0]$. Moreover, since $\Pi_{[0,1]}$ is $1$-Lipschitz, for any $V,V'\in[0,\bar V]$,
\begin{equation*}
|F(p,V)-F(p,V')|\le\mathbb{E}\big[v^{0}|I^{\zeta}(p,V)-I^{\zeta}(p,V')|\big]=\mathbb{E}\Big[\frac{\theta(v^{0})^{2}}{(v^{0})^{2}+v^{2}}\Big]|V-V'|.
\end{equation*}
The coefficient multiplying $|V-V'|$ is strictly less than one, because $0\le\theta\le1$ and $v>0$ a.s. Therefore, $V\mapsto F(p,V)$ is a contraction on $[0,\bar V]$ and has a unique fixed point. Denote this fixed point by $\widehat V$.

Substituting the unique solutions $\widehat V$ and $\widehat\Sigma$ into \eqref{eq:MF_best_response} gives
\begin{equation*}
\widehat{q}^{\zeta}=\Pi_{[0,1]}\big(I^{\zeta}(p,\widehat{V})\big),\qquad\widehat{\pi}^{\zeta}=\frac{\mu+\frac{1}{\gamma}\theta\sigma^{0}\widehat{\Sigma}}{\frac{1}{\gamma}\big((\sigma^{0})^{2}+\sigma^{2}\big)},
\end{equation*}
and by the consistency equations, these quantities satisfy
\begin{equation*}
\widehat{V}=\mathbb{E}[\widehat{q}^{\zeta}v^{0}],\qquad\widehat{\Sigma}=\mathbb{E}[\widehat{\pi}^{\zeta}\sigma^{0}].
\end{equation*}
Restoring the dependence on the fixed premium $p$ gives the notation $\widehat V(p)$ and $\widehat q^\zeta(p)$ in \eqref{eq:Vhat_Sigmahat_MF}--\eqref{eq:qhat_MF}, and the fixed-point equation $V=F(p,V)$ above is precisely \eqref{eq:fixed_point_V_MF}.

By compactness of $\mathcal Z$, the pair $(\widehat q^\zeta,\widehat\pi^\zeta)$ belongs to $\mathcal A^{\mathrm{MF}}$. By construction, it satisfies Definition~\ref{def:CSMFE}, and hence is a constant follower mean field equilibrium response. Uniqueness follows from the uniqueness of the typewise best response \eqref{eq:MF_best_response} for fixed $V$ and $\Sigma$, together with the uniqueness of the solutions to the consistency equations for $V$ and $\Sigma$.

\subsection{Proof of Proposition~\ref{prop:premium_response_CMFER}}
\label{app:3.2}
We first prove that $p\mapsto\widehat V(p)$ is continuous and monotonically increasing. Recall the fixed-point map in \eqref{eq:MF_fixed_point_map}. Since $\Pi_{[0,1]}$ is $1$-Lipschitz, for any $p,p'\ge0$ and $V,V'\in[0,\bar V]$,
\begin{equation*}
|F(p,V)-F(p',V')|\le\mathbb{E}\Big[v^{0}\Big|\frac{\theta v^{0}(V-V')+\gamma(p-p')}{(v^{0})^{2}+v^{2}}\Big|\Big]\le K_{1}|V-V'|+K_{2}|p-p'|,
\end{equation*}
where $K_{1}:=\mathbb{E}\big[\frac{\theta(v^{0})^{2}}{(v^{0})^{2}+v^{2}}\big]\in[0,1)$,  $K_{2}:=\mathbb{E}\big[\frac{\gamma v^{0}}{(v^{0})^{2}+v^{2}}\big]>0$.
Using the fixed-point relation $\widehat V(p)=F(p,\widehat V(p))$, we obtain
\begin{equation*}
|\widehat{V}(p)-\widehat{V}(p')|\le K_{1}|\widehat{V}(p)-\widehat{V}(p')|+K_{2}|p-p'|.
\end{equation*}
Hence $
|\widehat{V}(p)-\widehat{V}(p')|\le\frac{K_{2}}{1-K_{1}}|p-p'|$,
and $p\mapsto\widehat V(p)$ is continuous.

To prove monotonicity, let $p_1\le p_2$. Since $F$ is monotonically increasing in $p$,
\begin{equation*}
\widehat{V}(p_{1})=F(p_{1},\widehat{V}(p_{1}))\le F(p_{2},\widehat{V}(p_{1})).
\end{equation*}
Since $V\mapsto F(p_2,V)$ is monotonically increasing, repeated application gives
\begin{equation*}
\widehat{V}(p_{1})\le F(p_{2},\widehat{V}(p_{1}))\le F\big(p_{2},F(p_{2},\widehat{V}(p_{1}))\big)\le\cdots.
\end{equation*}
The iterates converge to $\widehat V(p_2)$, the unique fixed point of $V=F(p_2,V)$, by the contraction property of $V\mapsto F(p_2,V)$. Hence $
\widehat{V}(p_{1})\le\widehat{V}(p_{2})$,
and $p\mapsto\widehat V(p)$ is monotonically increasing. Since $I^\zeta(p,V)$ and $\Pi_{[0,1]}$ are continuous and monotonically increasing in their arguments, \eqref{eq:qhat_MF} implies that, for almost every type $\zeta$, the map $p\mapsto\widehat q^\zeta(p)$ is continuous and monotonically increasing on $[0,\infty)$.

We now compute the population-wide thresholds. Since $v^0>0$ a.s. and $\widehat q^\zeta(p)\in[0,1]$, by \eqref{eq:qhat_MF} and the definition of $\widehat V(p)$,
\begin{equation*}
\widehat{q}^{\zeta}(p)=0\ \text{for a.e. }\zeta\quad\Longleftrightarrow\quad\widehat{V}(p)=0\quad\Longleftrightarrow\quad I^{\zeta}(p,0)\le0\ \text{for a.e. }\zeta.
\end{equation*}
Since $I^\zeta(p,0)\le0$ if and only if $p\le a$, we obtain
\begin{equation*}
p^{\mathrm{MF},\min}:=\sup\{p\ge0:\widehat{q}^{\zeta}(p)=0\text{ for a.e. }\zeta\}=\mathrm{ess\,inf}\,a.
\end{equation*}
Similarly,
\begin{equation*}
\widehat{q}^{\zeta}(p)=1\ \text{for a.e. }\zeta\quad\Longleftrightarrow\quad\widehat{V}(p)=\bar{V}\quad\Longleftrightarrow\quad I^{\zeta}(p,\bar{V})\ge1\ \text{for a.e. }\zeta.
\end{equation*}
Solving $I^\zeta(p,\bar V)\ge1$ gives $p\ge a+\frac{1}{\gamma}((v^{0})^{2}+v^{2})-\frac{1}{\gamma}\theta v^{0}\bar{V}$, a.s.. Therefore
\begin{equation*}
p^{\mathrm{MF},\max}:=\inf\{p\ge0:\widehat{q}^{\zeta}(p)=1\text{ for a.e. }\zeta\}=\mathrm{ess\,sup}\Big(a+\frac{1}{\gamma}((v^{0})^{2}+v^{2})-\frac{1}{\gamma}\theta v^{0}\bar{V}\Big).
\end{equation*}
This proves \eqref{eq:pminmax_MF}.

Since $p^{\mathrm{MF},\min}$ and $p^{\mathrm{MF},\max}$ are well-defined, and  $p\mapsto\widehat q^\zeta(p)$ is continuous and monotonically increasing for almost every type $\zeta$, the definitions of $p^{\mathrm{MF},\min}$ and $p^{\mathrm{MF},\max}$ imply $p^{\mathrm{MF},\min}<p^{\mathrm{MF},\max}$,
\begin{equation*}
\widehat{q}^{\zeta}(p)=0\quad\text{for }p\le p^{\mathrm{MF},\min}\text{ and a.e. }\zeta,\qquad\widehat{q}^{\zeta}(p)=1\quad\text{for }p\ge p^{\mathrm{MF},\max}\text{ and a.e. }\zeta.
\end{equation*}

It remains to prove the type-dependent threshold structure. For almost every type $\zeta$, define
\begin{equation*}
p^{\zeta,\min}:=\sup\{p\ge0:\widehat{q}^{\zeta}(p)=0\},\qquad p^{\zeta,\max}:=\inf\{p\ge0:\widehat{q}^{\zeta}(p)=1\}.
\end{equation*}
Since $p\mapsto\widehat q^\zeta(p)$ is continuous and $\sigma(\zeta)$-measurable for each $p$, the thresholds $p^{\zeta,\min}$ and $p^{\zeta,\max}$ are $\sigma(\zeta)$-measurable random variables. Moreover, the population-wide regimes above imply
\begin{equation*}
p^{\zeta,\min},p^{\zeta,\max}\in[p^{\mathrm{MF},\min},p^{\mathrm{MF},\max}]\qquad\text{a.s.}
\end{equation*}
By the continuity of $p\mapsto\widehat q^\zeta(p)$,
\begin{equation*}
\widehat{q}^{\zeta}(p^{\zeta,\min})=0,\qquad\widehat{q}^{\zeta}(p^{\zeta,\max})=1
\end{equation*}
for almost every type $\zeta$, hence $p^{\zeta,\min}<p^{\zeta,\max}$ a.s. The monotonicity of $p\mapsto\widehat q^\zeta(p)$ then gives, for almost every type $\zeta$,
\begin{equation*}
\widehat{q}^{\zeta}(p)=0\quad\text{for }p\le p^{\zeta,\min},\qquad0<\widehat{q}^{\zeta}(p)<1\quad\text{for }p^{\zeta,\min}<p<p^{\zeta,\max},\qquad\widehat{q}^{\zeta}(p)=1\quad\text{for }p\ge p^{\zeta,\max}.
\end{equation*}

\subsection{Proof of Proposition~\ref{prop:CSMFE_existence}}
\label{app:3.3}
By \eqref{eq:Jtilde_leader_MF} and \eqref{eq:leader_premium_objective_MF}, the reduced leader objective separates as
\begin{equation*}
\widetilde{J}^{L,\mathrm{MF}}(p,\pi^{L})=x^{L}+T\Big[\mu^{L}\pi^{L}-\frac{1}{2\gamma^{L}}\big((\sigma^{L,0})^{2}+(\sigma^{L})^{2}\big)(\pi^{L})^{2}\Big]+T\mathcal{J}^{L,\mathrm{MF}}(p).
\end{equation*}
Thus $\pi^L$ and $p$ can be optimized separately. The part depending on $\pi^L$ is a strictly concave quadratic function, and its unique maximizer is precisely \eqref{eq:pistar_mf}.

It remains to optimize $\mathcal J^{L,\mathrm{MF}}(p)$ over $p\ge0$. We first note that $\mathcal J^{L,\mathrm{MF}}$ is continuous. Indeed, if $p_n\to p$, then Proposition~\ref{prop:premium_response_CMFER} gives $\widehat q^\zeta(p_n)\to\widehat q^\zeta(p)$ for almost every type $\zeta$. Since $0\le\widehat q^\zeta\le1$ and $\mathcal Z$ is compact, the bounded convergence theorem gives
\begin{equation*}
\mathbb{E}[(1-\widehat{q}^{\zeta}(p_{n}))(p_{n}-a)]\longrightarrow\mathbb{E}[(1-\widehat{q}^{\zeta}(p))(p-a)],\qquad\mathbb{E}[(1-\widehat{q}^{\zeta}(p_{n}))v^{0}]\longrightarrow\mathbb{E}[(1-\widehat{q}^{\zeta}(p))v^{0}].
\end{equation*}
Therefore $\mathcal J^{L,\mathrm{MF}}$ is continuous.

Furthermore, by Proposition~\ref{prop:premium_response_CMFER}, $
\widehat{q}^{\zeta}(p)=0$ for $p\le p^{\mathrm{MF},\min}$ and a.e. $\zeta$, $\widehat{q}^{\zeta}(p)=1$ for $p\ge p^{\mathrm{MF},\max}$ and a.e. $\zeta$.
Consequently, on $[0,p^{\mathrm{MF},\min}]$,
\begin{equation*}
\mathcal{J}^{L,\mathrm{MF}}(p)=\mathbb{E}[p-a]-\frac{1}{2\gamma^{L}}\big(\mathbb{E}[v^{0}]\big)^{2},
\end{equation*}
so $\mathcal J^{L,\mathrm{MF}}$ is affine and strictly increasing there. On $[p^{\mathrm{MF},\max},\infty)$, $\mathcal{J}^{L,\mathrm{MF}}(p)=0$.
Hence, no premium outside $[p^{\mathrm{MF},\min},p^{\mathrm{MF},\max}]$ can improve $\mathcal J^{L,\mathrm{MF}}$, and therefore
\begin{equation*}
\sup_{p\ge0}\mathcal{J}^{L,\mathrm{MF}}(p)=\sup_{p\in[p^{\mathrm{MF},\min},\,p^{\mathrm{MF},\max}]}\mathcal{J}^{L,\mathrm{MF}}(p).
\end{equation*}
Since $\mathcal J^{L,\mathrm{MF}}$ is continuous and $[p^{\mathrm{MF},\min},p^{\mathrm{MF},\max}]$ is compact, the extreme value theorem implies that the argmax set in \eqref{eq:pstar_argmax_MF} is nonempty.

Let $p^{\mathrm{MF},*}$ be any maximizer in \eqref{eq:pstar_argmax_MF}, and define
\begin{equation*}
q^{\zeta,*}:=\widehat{q}^{\zeta}(p^{\mathrm{MF},*}),\qquad\pi^{\zeta,*}:=\widehat{\pi}^{\zeta}.
\end{equation*}
Then $(q^{\zeta,*},\pi^{\zeta,*})$ is the constant follower mean field equilibrium response induced by $p^{\mathrm{MF},*}$, while $(p^{\mathrm{MF},*},\pi^{L,\mathrm{MF},*})$ is optimal for the reinsurer. Therefore, $(p^{\mathrm{MF},*},\pi^{L,\mathrm{MF},*},q^{\zeta,*},\pi^{\zeta,*})$ is a constant Stackelberg mean field equilibrium.

\section{Proofs in Section~\ref{sec:convergence}}
\setcounter{theorem}{0}
\renewcommand{\thetheorem}{C}
The following lemma will be used repeatedly to upgrade weak convergence of the empirical type distributions to uniform convergence of type averages.
\begin{lemma}
\label{lem:uniform_empirical_integral}
Let $\mathcal{Y}$ be a compact metric space, and let $H:\mathcal{Z}\times\mathcal{Y}\to\mathbb{R}$ be continuous. Then
\begin{equation}
\label{eq:uniform_empirical_integral}
\sup_{y\in\mathcal{Y}}\Big|\int_{\mathcal{Z}}H(z,y)\,\mathfrak{m}^{N}(dz)-\int_{\mathcal{Z}}H(z,y)\,\mathfrak{m}(dz)\Big|\longrightarrow0.
\end{equation}
\end{lemma}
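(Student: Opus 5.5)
The plan is a standard compactness argument: use uniform continuity of $H$ to cover $\mathcal Y$ by finitely many small balls, and then apply weak convergence at each finite center. The key point is that $\mathcal Z\times\mathcal Y$ is compact, which is where Assumption~\ref{assump} (compactness of $\mathcal Z$) enters. Hence $H$ is bounded and uniformly continuous on $\mathcal Z\times\mathcal Y$.

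Fix $\varepsilon>0$. By uniform continuity there is $\delta>0$ such that $d_{\mathcal Y}(y,y')<\delta$ implies $|H(z,y)-H(z,y')|<\varepsilon$ for all $z\in\mathcal Z$. This modulus must hold uniformly in $z$, and that is exactly what joint compactness provides. By compactness of $\mathcal Y$, choose finitely many points $y_1,\dots,y_M$ whose $\delta$-balls cover $\mathcal Y$.

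For each fixed $y_k$, the function $z\mapsto H(z,y_k)$ is bounded and continuous on $\mathcal Z$. Since $\mathfrak m^N$ and $\mathfrak m$ are supported on $\mathcal Z$ by Assumption~\ref{assump}(2) and the definition of $\mathfrak m$, weak convergence \eqref{eq:empirical_type_convergence} gives
\[
\int_{\mathcal Z} H(z,y_k)\,\mathfrak m^N(dz)\to\int_{\mathcal Z} H(z,y_k)\,\mathfrak m(dz).
\]
Because there are only finitely many $k$, there is $N_0$ such that this difference is below $\varepsilon$ for all $k$ simultaneously when $N\ge N_0$.

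For arbitrary $y\in\mathcal Y$, pick $k$ with $d_{\mathcal Y}(y,y_k)<\delta$ and apply the triangle inequality:
\begin{align*}
\Big|\int H(\cdot,y)\,d\mathfrak m^N-\int H(\cdot,y)\,d\mathfrak m\Big|
&\le \int |H(\cdot,y)-H(\cdot,y_k)|\,d\mathfrak m^N\\
&\quad+\Big|\int H(\cdot,y_k)\,d\mathfrak m^N-\int H(\cdot,y_k)\,d\mathfrak m\Big|\\
&\quad+\int |H(\cdot,y_k)-H(\cdot,y)|\,d\mathfrak m\\
&<3\varepsilon .
\end{align*}
This bound is uniform in $y$ for $N\ge N_0$, which gives \eqref{eq:uniform_empirical_integral}.

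There is no real obstacle. The only point needing care is that uniform continuity on the product space yields a $\delta$ independent of $z$. This is what lets the two outer terms be controlled simultaneously for all measures $\mathfrak m^N$ and $\mathfrak m$.
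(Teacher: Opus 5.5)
Your proof is correct and follows essentially the same route as the paper: uniform continuity of $H$ on the compact product $\mathcal Z\times\mathcal Y$ gives a finite net $y_1,\dots,y_m$ with $\sup_{z}|H(z,y)-H(z,y_k)|$ small, and weak convergence at the finitely many net points finishes the argument. The only cosmetic difference is that you fix an explicit $N_0$ and get a $3\varepsilon$ bound, whereas the paper bounds by $2\varepsilon$ plus a vanishing maximum over the net and passes to the $\limsup$.
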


\begin{proof}
Since $\mathcal{Z}\times\mathcal{Y}$ is compact and $H$ is continuous, $H$ is uniformly continuous. Hence, for every $\varepsilon>0$, there exist finitely many points $y_1,\ldots,y_m\in\mathcal{Y}$ such that, for every $y\in\mathcal{Y}$, there is $k\in\{1,\ldots,m\}$ satisfying $\sup_{z\in\mathcal{Z}}|H(z,y)-H(z,y_{k})|\le\varepsilon.$ Thus, for such $y$ and $k$,
\begin{equation*}
\Big|\int_{\mathcal{Z}}H(z,y)\,\mathfrak{m}^{N}(dz)-\int_{\mathcal{Z}}H(z,y)\,\mathfrak{m}(dz)\Big|\le2\varepsilon+\Big|\int_{\mathcal{Z}}H(z,y_{k})\,\mathfrak{m}^{N}(dz)-\int_{\mathcal{Z}}H(z,y_{k})\,\mathfrak{m}(dz)\Big|.
\end{equation*}
Taking the supremum over $y\in\mathcal{Y}$ gives
\begin{equation*}
\sup_{y\in\mathcal{Y}}\Big|\int_{\mathcal{Z}}H(z,y)\,\mathfrak{m}^{N}(dz)-\int_{\mathcal{Z}}H(z,y)\,\mathfrak{m}(dz)\Big|\le2\varepsilon+\max_{1\le k\le m}\Big|\int_{\mathcal{Z}}H(z,y_{k})\,\mathfrak{m}^{N}(dz)-\int_{\mathcal{Z}}H(z,y_{k})\,\mathfrak{m}(dz)\Big|.
\end{equation*}
For each fixed $k$, the map $z\mapsto H(z,y_k)$ is continuous on $\mathcal{Z}$. Hence, by $\mathfrak m^N\Rightarrow\mathfrak m$,
\begin{equation*}
\int_{\mathcal{Z}}H(z,y_{k})\,\mathfrak{m}^{N}(dz)\longrightarrow\int_{\mathcal{Z}}H(z,y_{k})\,\mathfrak{m}(dz).
\end{equation*}
Since the maximum is taken over finitely many $k$, the maximum term converges to zero. Therefore,
\begin{equation*}
\limsup_{N\to\infty}\sup_{y\in\mathcal{Y}}\Big|\int_{\mathcal{Z}}H(z,y)\,\mathfrak{m}^{N}(dz)-\int_{\mathcal{Z}}H(z,y)\,\mathfrak{m}(dz)\Big|\le2\varepsilon.
\end{equation*}
Letting $\varepsilon\downarrow0$ proves \eqref{eq:uniform_empirical_integral}.
\end{proof}

\subsection{Proof of Proposition~\ref{prop:qhat_uniform_convergence}}
\label{app:4.1}
We first prove 
\begin{equation}
\label{eq:FN_to_F_uniform}
\sup_{(p,V)\in[\underline{p},\bar{p}]\times[0,\bar{v^{0}}]}\big|F^{N}(p,V)-F(p,V)\big|\longrightarrow0.
\end{equation}
For $(p,V)\in[\underline{p},\bar{p}]\times[0,\bar{v^{0}}]$,
\begin{align}
\label{eq:FN_F_decomposition}
\big|F^{N}(p,V)-F(p,V)\big| \le &\ \int_{\mathcal{Z}}v_{z}^{0}\big|\Pi_{[0,1]}\big(I^{N}(z;p,V)\big)-\Pi_{[0,1]}\big(I(z;p,V)\big)\big|\,\mathfrak{m}^{N}(dz)
\nonumber\\
& + \Big|\int_{\mathcal{Z}}v_{z}^{0}\,\Pi_{[0,1]}\big(I(z;p,V)\big)\,(\mathfrak{m}^{N}-\mathfrak{m})(dz)\Big|.
\end{align}
Since $\Pi_{[0,1]}$ is $1$-Lipschitz,
\begin{equation*}
\int_{\mathcal{Z}}v_{z}^{0}\big|\Pi_{[0,1]}\big(I^{N}(z;p,V)\big)-\Pi_{[0,1]}\big(I(z;p,V)\big)\big|\,\mathfrak{m}^{N}(dz)\ \le\ \bar{v^{0}}\sup_{(z,p,V)\in\mathcal{Z}\times[\underline{p},\bar{p}]\times[0,\bar{v^{0}}]}\big|I^{N}(z;p,V)-I(z;p,V)\big|.
\end{equation*}
By the bounds in Assumption~\ref{assump}, the common numerator $(p-a_{z})+\frac{1}{\gamma_{z}}\theta_{z}v_{z}^{0}V$ of $I^{N}(z;p,V)$ and $I(z;p,V)$ is uniformly bounded on $\mathcal{Z}\times[\underline p,\bar p]\times[0,\bar v^{0}]$, and their denominators are uniformly bounded away from zero. Since $\theta_z\in[0,1]$, we have $\sup_{z\in\mathcal Z}|\theta_z/N|\le 1/N$, and hence $1-\theta_z/N\to1$ uniformly on $\mathcal Z$. 
Since,
\begin{equation*}
I^{N}(z;p,V)-I(z;p,V)=\big((p-a_{z})+\frac{1}{\gamma_{z}}\theta_{z}v_{z}^{0}V\big)\Big(\frac{1}{\frac{1}{\gamma_{z}}\big((v_{z}^{0})^{2}+(v_{z})^{2}(1-\theta_{z}/N)\big)}-\frac{1}{\frac{1}{\gamma_{z}}\big((v_{z}^{0})^{2}+(v_{z})^{2}\big)}\Big),
\end{equation*}
we have
\begin{equation}
\label{eq:IN_to_I_uniform}
\sup_{(z,p,V)\in\mathcal{Z}\times[\underline{p},\bar{p}]\times[0,\bar{v^{0}}]}\big|I^{N}(z;p,V)-I(z;p,V)\big|\longrightarrow0.
\end{equation}
Thus, the first term on the right-hand side of \eqref{eq:FN_F_decomposition} converges to zero uniformly in $(p,V)$.

For the second term on the right-hand side of \eqref{eq:FN_F_decomposition}, apply Lemma~\ref{lem:uniform_empirical_integral} with $\mathcal{Y}=[\underline{p},\bar{p}]\times[0,\bar{v^{0}}]$ and $H(z,(p,V))=v_{z}^{0}\,\Pi_{[0,1]}\big(I(z;p,V)\big)$. The map $H$ is continuous on $\mathcal{Z}\times[\underline p,\bar p]\times[0,\bar{v^{0}}]$. Hence
\begin{equation*}
\sup_{(p,V)\in[\underline{p},\bar{p}]\times[0,\bar{v^{0}}]}\Big|\int_{\mathcal{Z}}v_{z}^{0}\Pi_{[0,1]}\big(I(z;p,V)\big)\,(\mathfrak{m}^{N}-\mathfrak{m})(dz)\Big|\longrightarrow0.
\end{equation*}
This proves \eqref{eq:FN_to_F_uniform}.

We next prove the uniform contraction estimate used below. Since $\Pi_{[0,1]}$ is $1$-Lipschitz,
\begin{align*}
\big|F^{N}(p,V)-F^{N}(p,V')\big| & \le \int_{\mathcal{Z}}v_{z}^{0}\big|I^{N}(z;p,V)-I^{N}(z;p,V')\big|\,\mathfrak{m}^{N}(dz)
\\
& = \int_{\mathcal{Z}}\frac{\theta_{z}(v_{z}^{0})^{2}}{(v_{z}^{0})^{2}+(v_{z})^{2}(1-\theta_{z}/N)}\,\mathfrak{m}^{N}(dz)\,|V-V'|.
\end{align*}
Since $N\ge2$ and $\theta_z\in[0,1]$, we have $1-\theta_{z}/N\ge1/2$. Hence, by the bounds in Assumption~\ref{assump}, for every $z\in\mathcal Z$,
\begin{equation*}
\frac{\theta_{z}(v_{z}^{0})^{2}}{(v_{z}^{0})^{2}+(v_{z})^{2}(1-\theta_{z}/N)}\le\frac{(v_{z}^{0})^{2}}{(v_{z}^{0})^{2}+\frac{1}{2}(\underline{v})^{2}}\le\frac{(\bar{v^{0}})^{2}}{(\bar{v^{0}})^{2}+\frac{1}{2}(\underline{v})^{2}}=:\rho_{v}<1.
\end{equation*}
Therefore,
\begin{equation}
\label{eq:FN_uniform_contraction}
\big|F^{N}(p,V)-F^{N}(p,V')\big|\le\rho_{v}\,|V-V'|
\end{equation}
holds for every $N\ge2$.

We now prove \eqref{eq:VhatN_to_Vhat_uniform}. Since $0\le \Pi_{[0,1]}\le1$ and $0<v_z^0\le \bar{v^{0}}$ on $\mathcal{Z}$, both $\widehat V^N(p)$ and $\widehat V(p)$ satisfy
\begin{equation*}
0\le\widehat{V}^{N}(p)\le\bar{v^{0}},\qquad0\le\widehat{V}(p)\le\bar{v^{0}},\qquad p\in[\underline{p},\bar{p}].
\end{equation*}
Fix $p\in[\underline{p},\bar{p}]$. By the fixed-point identities,
\begin{align*}
\big|\widehat{V}^{N}(p)-\widehat{V}(p)\big| & = \big|F^{N}(p,\widehat{V}^{N}(p))-F(p,\widehat{V}(p))\big|
\\
& \le \big|F^{N}(p,\widehat{V}^{N}(p))-F^{N}(p,\widehat{V}(p))\big|+\big|F^{N}(p,\widehat{V}(p))-F(p,\widehat{V}(p))\big|.
\end{align*}
By \eqref{eq:FN_uniform_contraction}, $\big|F^{N}(p,\widehat{V}^{N}(p))-F^{N}(p,\widehat{V}(p))\big|\le\rho_{v}\big|\widehat{V}^{N}(p)-\widehat{V}(p)\big|,$ and hence $\big|\widehat{V}^{N}(p)-\widehat{V}(p)\big|\le\frac{1}{1-\rho_{v}}\big|F^{N}(p,\widehat{V}(p))-F(p,\widehat{V}(p))\big|.$ Taking the supremum over $p\in[\underline{p},\bar{p}]$ gives
\begin{align*}
\sup_{p\in[\underline{p},\bar{p}]}\big|\widehat{V}^{N}(p)-\widehat{V}(p)\big| & \le \frac{1}{1-\rho_{v}}\sup_{p\in[\underline{p},\bar{p}]}\big|F^{N}(p,\widehat{V}(p))-F(p,\widehat{V}(p))\big|
\\
& \le \frac{1}{1-\rho_{v}}\sup_{(p,V)\in[\underline{p},\bar{p}]\times[0,\bar{v^{0}}]}\big|F^{N}(p,V)-F(p,V)\big|.
\end{align*}
The right-hand side converges to zero by \eqref{eq:FN_to_F_uniform}. This proves \eqref{eq:VhatN_to_Vhat_uniform}.

Finally, we prove \eqref{eq:qhat_uniform_convergence}. Since $\Pi_{[0,1]}$ is $1$-Lipschitz,
\begin{align*}
\big|\widehat{q}^{N}(z;p)-\widehat{q}(z;p)\big| & \le \big|I^{N}(z;p,\widehat{V}^{N}(p))-I(z;p,\widehat{V}(p))\big|
\\
& \le \big|I^{N}(z;p,\widehat{V}^{N}(p))-I(z;p,\widehat{V}^{N}(p))\big|+\big|I(z;p,\widehat{V}^{N}(p))-I(z;p,\widehat{V}(p))\big|.
\end{align*}
The first term converges to zero uniformly on $\mathcal{Z}\times[\underline{p},\bar{p}]$ by \eqref{eq:IN_to_I_uniform}. For the second term, the bounds in Assumption~\ref{assump} give a constant $C>0$ such that
\begin{equation*}
\big|I(z;p,V)-I(z;p,V')\big|\le C|V-V'|
\end{equation*}
for all $z\in\mathcal{Z}$, $p\in[\underline{p},\bar{p}]$, and $V,V'\in[0,\bar{v^{0}}]$. Hence \eqref{eq:VhatN_to_Vhat_uniform} implies \eqref{eq:qhat_uniform_convergence}.

\subsection{Proof of Proposition~\ref{prop:pihat_uniform_convergence}}
\label{app:4.2}
Since $N\ge2$ and $\theta_z\in[0,1]$, we have $1-\theta_{z}/N\ge1/2$. Hence, by the bounds in Assumption~\ref{assump}, for every $z\in\mathcal Z$,
\begin{equation*}
\frac{\theta_{z}(\sigma_{z}^{0})^{2}}{(\sigma_{z}^{0})^{2}+(\sigma_{z})^{2}(1-\theta_{z}/N)}\le\frac{(\sigma_{z}^{0})^{2}}{(\sigma_{z}^{0})^{2}+\frac{1}{2}(\underline{\sigma})^{2}}\le\frac{(\bar{\sigma^{0}})^{2}}{(\bar{\sigma^{0}})^{2}+\frac{1}{2}(\underline{\sigma})^{2}}=:\rho_{\sigma}<1. 
\end{equation*}
Hence, the denominator of $\widehat\Sigma^N$ is uniformly bounded away from zero. 

The numerator and denominator integrands of $\widehat\Sigma^N$ converge uniformly on $\mathcal Z$ to those of $\widehat\Sigma$, because $\sup_{z\in\mathcal{Z}}|\theta_{z}/N|\le1/N\to0$ and Assumption~\ref{assump} keeps denominators uniformly away from zero. Together with $\mathfrak m^N\Rightarrow\mathfrak m$, this implies that the numerator and denominator of $\widehat\Sigma^N$ converge to those of $\widehat\Sigma$. Therefore, \eqref{eq:SigmahatN_to_Sigmahat} holds. 

Finally, the uniform convergence \eqref{eq:pihat_uniform_convergence} follows from \eqref{eq:pihat_type_form}, \eqref{eq:SigmahatN_to_Sigmahat}, the bound $\sup_{z\in\mathcal Z}|\theta_z/N|\le1/N$, and the denominator bounds implied by Assumption~\ref{assump}.

\subsection{Proof of Proposition~\ref{prop:profile_distribution_convergence}}
\label{app:4.3}
We first prove the uniform convergence of the retention responses along $(p^N)_{N\ge1}$. By \eqref{eq:qhat_uniform_convergence},
\begin{equation*}
\sup_{z\in\mathcal{Z}}\big|\widehat{q}^{N}(z;p^{N})-\widehat{q}(z;p^{N})\big|\longrightarrow0.
\end{equation*}
As shown at the beginning of \ref{app:3.2}, the map $p\mapsto\widehat V(p)$ is continuous. Together with the explicit formula for $\widehat q(z;p)$ and the continuity of $I$, this implies that $(z,p)\mapsto\widehat q(z;p)$ is continuous on the compact set $\mathcal Z\times[\underline p,\bar p]$, and hence uniformly continuous. Since $p^N\to p^\infty$, $\sup_{z\in\mathcal{Z}}\big|\widehat{q}(z;p^{N})-\widehat{q}(z;p^{\infty})\big|\longrightarrow0.$ Therefore,
\begin{equation}
\label{eq:qhat_uniform_convergent}
\sup_{z\in\mathcal{Z}}\big|\widehat{q}^{N}(z;p^{N})-\widehat{q}(z;p^{\infty})\big|\le\sup_{z\in\mathcal{Z}}\big|\widehat{q}^{N}(z;p^{N})-\widehat{q}(z;p^{N})\big|+\sup_{z\in\mathcal{Z}}\big|\widehat{q}(z;p^{N})-\widehat{q}(z;p^{\infty})\big|\longrightarrow0.
\end{equation}

Now we prove the weak convergence \eqref{eq:profile_distribution_convergence}. Let $\Phi \in C_b(\mathcal Z \times [0,1] \times \mathbb R)$ be an arbitrary bounded continuous test function. By \eqref{eq:pihat_type_form}, \eqref{eq:SigmahatN_to_Sigmahat}, and the bounds in Assumption~\ref{assump}, there exists $M>0$ such that
\begin{equation*}
\sup_{z\in\mathcal{Z}}|\widehat{\pi}^{N}(z)|\le M
\quad\text{for all sufficiently large }N,\qquad
\sup_{z\in\mathcal{Z}}|\widehat{\pi}(z)|\le M.
\end{equation*}
Hence, for all sufficiently large $N$, the empirical distributions and the limiting law in \eqref{eq:profile_distribution_convergence} are supported on the compact set $\mathcal Z\times[0,1]\times[-M,M]$. Thus, their integrals against $\Phi$ depend only on the restriction $\Psi:=\Phi|_{\mathcal Z\times[0,1]\times[-M,M]}$. Since $\mathcal Z\times[0,1]\times[-M,M]$ is compact, $\Psi$ is uniformly continuous. Using \eqref{eq:qhat_uniform_convergent} and \eqref{eq:pihat_uniform_convergence}, we obtain
\begin{equation}
\label{eq:test_function_convergence}
\sup_{z\in\mathcal{Z}}\big|\Psi\big(z,\widehat{q}^{N}(z;p^{N}),\widehat{\pi}^{N}(z)\big)-\Psi\big(z,\widehat{q}(z;p^{\infty}),\widehat{\pi}(z)\big)\big|\longrightarrow0.
\end{equation}
Therefore,
\begin{align*}
& \Big|\int\Psi\,d\Big(\frac{1}{N}\sum_{i=1}^{N}\delta_{(\zeta^{i,N},\widehat{q}^{N}(\zeta^{i,N};p^{N}),\widehat{\pi}^{N}(\zeta^{i,N}))}\Big)-\int\Psi\,d\mathcal{L}\big(\zeta,\widehat{q}(\zeta;p^{\infty}),\widehat{\pi}(\zeta)\big)\Big|
\\
& \le \int_{\mathcal{Z}}\Big|\Psi\big(z,\widehat{q}^{N}(z;p^{N}),\widehat{\pi}^{N}(z)\big)-\Psi\big(z,\widehat{q}(z;p^{\infty}),\widehat{\pi}(z)\big)\Big|\,\mathfrak{m}^{N}(dz)
\\
& \quad + \Big|\int_{\mathcal{Z}}\Psi\big(z,\widehat{q}(z;p^{\infty}),\widehat{\pi}(z)\big)\,(\mathfrak{m}^{N}-\mathfrak{m})(dz)\Big|.
\end{align*}
The first term converges to zero by \eqref{eq:test_function_convergence}. The second term converges to zero by \eqref{eq:empirical_type_convergence}, because the map $z\mapsto\Psi\big(z,\widehat q(z;p^{\infty}),\widehat\pi(z)\big)$ is continuous on $\mathcal Z$. Since this holds for every $\Phi\in C_b(\mathcal Z\times[0,1]\times\mathbb R)$, this proves \eqref{eq:profile_distribution_convergence}.

\subsection{Proof of Proposition~\ref{prop:J_uniform_convergence}}
\label{app:4.4}
For any continuous function $b:\mathcal Z\to\mathbb R$, we first prove 
\begin{equation}
\label{eq:ceded_integral_convergence}
\sup_{p\in[\underline{p},\bar{p}]}\Big|\int_{\mathcal{Z}}b(z)(1-\widehat{q}^{N}(z;p))\,\mathfrak{m}^{N}(dz)-\int_{\mathcal{Z}}b(z)(1-\widehat{q}(z;p))\,\mathfrak{m}(dz)\Big|\longrightarrow0.
\end{equation}
The difference in \eqref{eq:ceded_integral_convergence} is bounded by
\begin{equation*}
\int_{\mathcal{Z}}|b(z)|\,|\widehat{q}^{N}(z;p)-\widehat{q}(z;p)|\,\mathfrak{m}^{N}(dz)+\Big|\int_{\mathcal{Z}}b(z)(1-\widehat{q}(z;p))\,(\mathfrak{m}^{N}-\mathfrak{m})(dz)\Big|.
\end{equation*}
The first term converges to zero uniformly in $p$ by \eqref{eq:qhat_uniform_convergence} and the boundedness of $b$ on $\mathcal Z$ (since $b$ is continuous on the compact set $\mathcal Z$). The second term converges to zero uniformly in $p$ by Lemma~\ref{lem:uniform_empirical_integral}, applied to $\mathcal{Y}=[\underline{p},\bar{p}]$ and $H(z,p)=b(z)(1-\widehat q(z;p))$, which is continuous on $\mathcal Z\times[\underline p,\bar p]$. This proves \eqref{eq:ceded_integral_convergence}.

Taking $b(z)=1$ and $b(z)=a_z$ in \eqref{eq:ceded_integral_convergence}, and using that $p$ is bounded on $[\underline p,\bar p]$, gives
\begin{equation}
\label{eq:ceded_surplus_convergence}
\sup_{p\in[\underline{p},\bar{p}]}\Big|\int_{\mathcal{Z}}(1-\widehat{q}^{N}(z;p))(p-a_{z})\,\mathfrak{m}^{N}(dz)-\int_{\mathcal{Z}}(1-\widehat{q}(z;p))(p-a_{z})\,\mathfrak{m}(dz)\Big|\longrightarrow0.
\end{equation}
Taking $b(z)=v_z^0$ in \eqref{eq:ceded_integral_convergence} gives
\begin{equation*}
\sup_{p\in[\underline{p},\bar{p}]}\Big|\int_{\mathcal{Z}}(1-\widehat{q}^{N}(z;p))v_{z}^{0}\,\mathfrak{m}^{N}(dz)-\int_{\mathcal{Z}}(1-\widehat{q}(z;p))v_{z}^{0}\,\mathfrak{m}(dz)\Big|\longrightarrow0.
\end{equation*}
This implies
\begin{align}
\label{eq:ceded_exposure_convergence}
& \sup_{p\in[\underline{p},\bar{p}]}\Big|\Big(\int_{\mathcal{Z}}(1-\widehat{q}^{N}(z;p))v_{z}^{0}\,\mathfrak{m}^{N}(dz)\Big)^{2}-\Big(\int_{\mathcal{Z}}(1-\widehat{q}(z;p))v_{z}^{0}\,\mathfrak{m}(dz)\Big)^{2}\Big|
\nonumber\\
& \le 2\bar{v^{0}}\sup_{p\in[\underline{p},\bar{p}]}\Big|\int_{\mathcal{Z}}(1-\widehat{q}^{N}(z;p))v_{z}^{0}\,\mathfrak{m}^{N}(dz)-\int_{\mathcal{Z}}(1-\widehat{q}(z;p))v_{z}^{0}\,\mathfrak{m}(dz)\Big|\longrightarrow0.
\end{align}

Furthermore, since $0\le1-\widehat q^N(z;p)\le1$ and $v_z\le\bar v$ on $\mathcal Z$,
\begin{equation*}
\sup_{p\in[\underline{p},\bar{p}]}\frac{1}{N}\int_{\mathcal{Z}}(1-\widehat{q}^{N}(z;p))^{2}v_{z}^{2}\,\mathfrak{m}^{N}(dz)\le\frac{\bar{v}^{2}}{N}\longrightarrow0.
\end{equation*}
Combining this with \eqref{eq:ceded_surplus_convergence} and \eqref{eq:ceded_exposure_convergence} proves \eqref{eq:J_uniform_convergence}.

\subsection{Proof of Proposition~\ref{prop:optimal_premium_convergence}}
\label{app:4.5}
We first prove the subsequential characterization. Let $(N_k)_{k\ge1}$ be a strictly increasing sequence, choose $p^{N_k,*}\in\mathcal P^{N_k,*}$, and suppose that $p^{N_k,*}\to p^{\infty,*}$. For any $p\in[\underline p,\bar p]$, the optimality of $p^{N_k,*}$ gives
\begin{equation}
\label{eq:finite_premium_optimality}
\mathcal{J}^{L,N_{k}}(p^{N_{k},*})\ge\mathcal{J}^{L,N_{k}}(p),
\end{equation}
and the result \eqref{eq:J_uniform_convergence} gives, $\mathcal{J}^{L,N_{k}}(p)\to\mathcal{J}^{L,\mathrm{MF}}(p).$ Moreover, by the triangle inequality, 
\begin{equation*}
\big|\mathcal{J}^{L,N_{k}}(p^{N_{k},*})-\mathcal{J}^{L,\mathrm{MF}}(p^{\infty,*})\big|\le\big|\mathcal{J}^{L,N_{k}}(p^{N_{k},*})-\mathcal{J}^{L,\mathrm{MF}}(p^{N_{k},*})\big|+\big|\mathcal{J}^{L,\mathrm{MF}}(p^{N_{k},*})-\mathcal{J}^{L,\mathrm{MF}}(p^{\infty,*})\big|,
\end{equation*}
in which the first term converges to zero by \eqref{eq:J_uniform_convergence}, and the second term converges to zero by the continuity of $\mathcal{J}^{L,\mathrm{MF}}$ and $p^{N_k,*}\to p^{\infty,*}$. Hence
\begin{equation*}
\mathcal{J}^{L,N_{k}}(p^{N_{k},*})\to\mathcal{J}^{L,\mathrm{MF}}(p^{\infty,*}).
\end{equation*}
Passing to the limit in \eqref{eq:finite_premium_optimality} yields
\begin{equation*}
\mathcal{J}^{L,\mathrm{MF}}(p^{\infty,*})\ge\mathcal{J}^{L,\mathrm{MF}}(p).
\end{equation*}
Since $p\in[\underline{p},\bar{p}]$ was arbitrary, $p^{\infty,*}$ maximizes $\mathcal{J}^{L,\mathrm{MF}}$ over $[\underline{p},\bar{p}]$. Thus, $p^{\infty,*}\in\mathcal{P}^{\mathrm{MF},*}$.

It remains to prove \eqref{eq:optimal_premium_convergence}. Suppose, to the contrary, that \eqref{eq:optimal_premium_convergence} fails. Then there exist $\varepsilon>0$, a strictly increasing sequence $(N_k)_{k\ge1}$, and points $p^{N_k,*}\in\mathcal P^{N_k,*}$ such that
\begin{equation}
\label{eq:distance_contradiction}
\mathrm{dist}\big(p^{N_{k},*},\mathcal{P}^{\mathrm{MF},*}\big)\ge\varepsilon,\qquad k\ge1.
\end{equation}
Since $p^{N_k,*}\in[\underline{p},\bar{p}]$ and $[\underline{p},\bar{p}]$ is compact, there exists a further subsequence $(N_{k_\ell})_{\ell\ge1}$ such that $p^{N_{k_\ell},*}\to p^{\infty,*}$ for some $p^{\infty,*}\in[\underline{p},\bar{p}]$. By the subsequential characterization proved above, $p^{\infty,*}\in\mathcal{P}^{\mathrm{MF},*}$. Therefore,
\begin{equation*}
\mathrm{dist}\big(p^{N_{k_{\ell}},*},\mathcal{P}^{\mathrm{MF},*}\big)\le|p^{N_{k_{\ell}},*}-p^{\infty,*}|\to0,
\end{equation*}
which contradicts \eqref{eq:distance_contradiction}. Thus, \eqref{eq:optimal_premium_convergence} holds.

\subsection{Proof of Theorem~\ref{thm:empirical_profile_convergence}}
\label{app:4.6}
Suppose that \eqref{eq:empirical_profile_assumption} holds along a subsequence $(N_k)_{k\ge1}$. Since $p^{N_k,*}\in[\underline p,\bar p]$ and $[\underline p,\bar p]$ is compact, there exist a further subsequence $(N_{k_\ell})_{\ell\ge1}$ and some $p^{\infty,*}\in[\underline p,\bar p]$ such that $p^{N_{k_\ell},*}\to p^{\infty,*}$. By Proposition~\ref{prop:optimal_premium_convergence}, $p^{\infty,*}\in\mathcal P^{\mathrm{MF},*}$. Moreover, \eqref{eq:empirical_type_convergence} gives $\mathfrak{m}^{N_{k_\ell}}\Rightarrow\mathfrak{m}$. Therefore, applying Proposition~\ref{prop:profile_distribution_convergence} along this further subsequence gives 
\begin{equation*}
\frac{1}{N_{k_{\ell}}}\sum_{i=1}^{N_{k_{\ell}}}\delta_{(\zeta^{i,N_{k_{\ell}}},q^{i,N_{k_{\ell}},*},\pi^{i,N_{k_{\ell}},*})}\Rightarrow\mathcal{L}\big(\zeta,\widehat{q}(\zeta;p^{\infty,*}),\widehat{\pi}(\zeta)\big).
\end{equation*}
On the other hand, the original subsequence in \eqref{eq:empirical_profile_assumption} converges weakly to $\Lambda$, so every further subsequence has the same weak limit. Therefore, $\Lambda=\mathcal{L}\big(\zeta,\widehat{q}(\zeta;p^{\infty,*}),\widehat{\pi}(\zeta)\big),$ which proves \eqref{eq:empirical_weak_limit}.

\section{Explicit mean field equilibrium under $\theta$-only heterogeneity}
\label{app:mf_closed_form}
We explicitly characterize the mean field follower retention response $\widehat q^\zeta(p)$ and the reduced premium objective $\mathcal J^{L,\mathrm{MF}}(p)$ under the specification of Section~\ref{subsec:num_mfg}, where heterogeneity enters only through $\theta$, whose density is $f_\theta^h$ in \eqref{eq:theta_density_h}, while $a,\gamma,v^0,v$ are constant. This characterization is used to compute the mean field reduced premium objective, optimal premiums, and retention distributions in Figures~\ref{fig:4}--\ref{fig:5}. Since $\widehat q^\zeta(p)$ and $\mathcal J^{L,\mathrm{MF}}(p)$ are determined by $\widehat V(p)$, it suffices to characterize $\widehat V(p)$ by solving the fixed-point equation \eqref{eq:fixed_point_V_MF}.

By Proposition~\ref{prop:premium_response_CMFER}, $p^{\mathrm{MF},\min}=a$ and $p^{\mathrm{MF},\max}=a+\frac1\gamma((v^0)^2+v^2)$. Within $[p^{\mathrm{MF},\min},p^{\mathrm{MF},\max}]$, we first consider the premium range in which all types retain partially. More precisely, there exists a value $p^{\mathrm{MF},c}$ such that, for $p\in(p^{\mathrm{MF},\min},p^{\mathrm{MF},c})$, $I^\zeta(p,\widehat V(p))\in(0,1)$ for almost every $\theta$. In this range, solving \eqref{eq:fixed_point_V_MF}, together with $\mathbb E[\theta]=0.5$, gives
\begin{equation*}
\widehat{V}(p)=\widehat{V}_{1}(p):=\frac{v^{0}(p-a)}{\frac{1}{\gamma}\big(0.5\,(v^{0})^{2}+v^{2}\big)}.
\end{equation*}
Under this expression, $I^\zeta(p,\widehat V_1(p))$ is increasing in both $p$ and $\theta$. Hence the endpoint $p^{\mathrm{MF},c}$ is determined by the type with $\theta=1$ first reaching full retention, or equivalently,
$I^\zeta(p^{\mathrm{MF},c},\widehat V_1(p^{\mathrm{MF},c}))\big|_{\theta=1}=1$. Solving this equation gives
\begin{equation}
\label{eq:pMFc_def}
p^{\mathrm{MF},c}=a+\frac{1}{\gamma}\big((v^{0})^{2}+v^{2}\big)\frac{0.5\,(v^{0})^{2}+v^{2}}{1.5\,(v^{0})^{2}+v^{2}}.
\end{equation}
Thus, $\widehat V(p)=\widehat V_1(p)$ for $p^{\mathrm{MF},\min}<p<p^{\mathrm{MF},c}$. For the parameters $a=1$, $\gamma=1$, $v^0=0.7$, and $v=0.3$ used in Figures~\ref{fig:4}--\ref{fig:5}, this gives $p^{\mathrm{MF},c}\approx1.236$.

We next consider the remaining range $p^{\mathrm{MF},c}\le p\le p^{\mathrm{MF},\max}$, and write the solution of \eqref{eq:fixed_point_V_MF} on this range as $\widehat V_2(p)$. Since $I^\zeta(p,V)$ is increasing in $\theta$ when $V>0$, the projection in \eqref{eq:fixed_point_V_MF} reaches its upper bound for sufficiently large values of $\theta$. Hence, the partially retaining and fully retaining types are separated by a cutoff. Let $\theta^c(p)$ denote this cutoff, defined by $I^\zeta(p,\widehat V_2(p))\big|_{\theta=\theta^c(p)}=1$. Solving this equation gives
\begin{equation}
\label{eq:theta_cutoff}
\theta^{c}(p)=\frac{(v^{0})^{2}+v^{2}-\gamma(p-a)}{v^{0}\widehat{V}_{2}(p)}.
\end{equation}
Using \eqref{eq:theta_cutoff}, the incentive can be rewritten as $I^\zeta(p,\widehat V_2(p))=1-\frac{v^0\widehat V_2(p)}{(v^0)^2+v^2}(\theta^c(p)-\theta)$. Hence the projection term in \eqref{eq:fixed_point_V_MF} is
\begin{equation*}
\Pi_{[0,1]}\big(I^{\zeta}(p,\widehat{V}_{2}(p))\big)=1-\frac{v^{0}\widehat{V}_{2}(p)}{(v^{0})^{2}+v^{2}}\big(\theta^{c}(p)-\theta\big)_{+},\qquad(y)_{+}:=\max\{y,0\}.
\end{equation*}
Taking the expectation in \eqref{eq:fixed_point_V_MF} then involves the term $\mathbb E[(\theta^c(p)-\theta)_+]$. For $x\in[0,1]$, define
\begin{equation*}
G_{h}(x):=\mathbb{E}[(x-\theta)_{+}]=\int_{0}^{x}(x-\theta)f_{\theta}^{h}(\theta)\,d\theta=\frac{h}{2}x^{2}+\frac{5}{6}(1-h)[(x-0.2)_{+}]^{2}-\frac{5}{6}(1-h)[(x-0.8)_{+}]^{2}.
\end{equation*}
Substituting the projection term into \eqref{eq:fixed_point_V_MF} and rearranging gives
\begin{equation}
\label{eq:V2}
\widehat{V}_{2}(p)=\frac{v^{0}\big((v^{0})^{2}+v^{2}\big)}{(v^{0})^{2}+v^{2}+(v^{0})^{2}G_{h}(\theta^{c}(p))}.
\end{equation}
Equations \eqref{eq:theta_cutoff} and \eqref{eq:V2} characterize $\widehat V_2(p)$. Since $G_h$ is quadratic on $[0,0.2]$, $[0.2,0.8]$, and $[0.8,1]$, $\widehat V_2(p)$ is obtained piecewise by solving the resulting quadratic equations.

Therefore, on $[p^{\mathrm{MF},\min},p^{\mathrm{MF},\max}]$,
\begin{equation*}
\widehat{V}(p)=\begin{cases}
\widehat{V}_{1}(p), & p^{\mathrm{MF},\min}<p<p^{\mathrm{MF},c},\\
\widehat{V}_{2}(p), & p^{\mathrm{MF},c}\le p\le p^{\mathrm{MF},\max}.
\end{cases}
\end{equation*}
This characterization of $\widehat V(p)$ determines the follower retention response $\widehat q^\zeta(p)=\Pi_{[0,1]}(I^\zeta(p,\widehat V(p)))$. Since $a$ and $v^0$ are constant in the present specification, $\widehat V(p)=\mathbb E[\widehat q^\zeta(p)v^0]$ implies $\mathbb E[1-\widehat q^\zeta(p)]=1-\widehat V(p)/v^0$, hence \eqref{eq:leader_premium_objective_MF} gives
\begin{equation*}
\mathcal{J}^{L,\mathrm{MF}}(p)=(p-a)\big(1-\widehat{V}(p)/v^{0}\big)-\frac{1}{2\gamma^{L}}\big(v^{0}-\widehat{V}(p)\big)^{2}.
\end{equation*}
Thus, $\widehat q^\zeta(p)$ and $\mathcal J^{L,\mathrm{MF}}(p)$ are explicitly determined once $\widehat V(p)$ is known. The reduced premium objective is piecewise algebraic, so its maximizers are obtained by comparing the maximizers on each piece.

For the uniform case used in Figure~\ref{fig:4}, $h=1$, so $f_\theta^h$ is uniform on $[0,1]$ and $G_h(x)=G_1(x)=x^2/2$. In this case, \eqref{eq:theta_cutoff} and \eqref{eq:V2} yield
\begin{equation*}
\widehat{V}_{2}(p)=\frac{1}{2}v^{0}+\frac{1}{2}\sqrt{(v^{0})^{2}-\frac{2\big((v^{0})^{2}+v^{2}-\gamma(p-a)\big)^{2}}{(v^{0})^{2}+v^{2}}}.
\end{equation*}
The resulting reduced premium objective is single-peaked. For the polarized case in Figure~\ref{fig:5}, with $h=2.3$ and $\gamma^L\approx1.254893$, the piecewise form of $G_h$ leads to a piecewise algebraic reduced premium objective with two maximizers.

\bibliographystyle{plain}
\bibliography{refs}

@article{LasryLions2006a,
  author  = {Jean-Michel Lasry and Pierre-Louis Lions},
  title   = {Jeux {\`a} champ moyen. {I} -- {L}e cas stationnaire},
  journal = {Comptes Rendus Math{\'e}matique},
  volume  = {343},
  number  = {9},
  pages   = {619--625},
  year    = {2006},
  doi     = {10.1016/j.crma.2006.09.019},
}

@article{LasryLions2006b,
  author  = {Jean-Michel Lasry and Pierre-Louis Lions},
  title   = {Jeux {\`a} champ moyen. {II} -- {H}orizon fini et contr{\^o}le optimal},
  journal = {Comptes Rendus Math{\'e}matique},
  volume  = {343},
  number  = {10},
  pages   = {679--684},
  year    = {2006},
  doi     = {10.1016/j.crma.2006.09.018},
}

@article{LasryLions2007,
  author  = {Lasry, Jean-Michel and Lions, Pierre-Louis},
  title   = {Mean field games},
  journal = {Japanese Journal of Mathematics},
  year    = {2007},
  volume  = {2},
  number  = {1},
  pages   = {229--260},
  doi     = {10.1007/s11537-007-0657-8}
}

@article{HuangMalhameCaines2006,
  author  = {Huang, Minyi and Malham{\'e}, Roland P. and Caines, Peter E.},
  title   = {Large population stochastic dynamic games: Closed-loop {M}c{K}ean--{V}lasov systems and the {N}ash certainty equivalence principle},
  journal = {Communications in Information and Systems},
  year    = {2006},
  volume  = {6},
  number  = {3},
  pages   = {221--252},
  doi     = {10.4310/CIS.2006.v6.n3.a5}
}

@article{HuangCainesMalhame2007,
  author  = {Minyi Huang and Peter E. Caines and Roland P. Malham{\'e}},
  title   = {Large-Population Cost-Coupled {LQG} Problems with Nonuniform Agents: Individual-Mass Behavior and Decentralized $\varepsilon$-Nash Equilibria},
  journal = {IEEE Transactions on Automatic Control},
  volume  = {52},
  number  = {9},
  pages   = {1560--1571},
  year    = {2007},
  doi     = {10.1109/TAC.2007.904450},
}

@misc{ZhangEtAl2024ReinsuranceMFG,
  author = {Zhang, Shuhua and Qian, Shenghua and Wang, Xinyu and Kang, Xinyi},
  title  = {A {S}tackelberg Mean Field Game Approach to Optimal Reinsurance and Investment: One Reinsurer and Competitive Insurers},
  year   = {2024},
  note   = {{S}SRN Working Paper},
  doi    = {10.2139/ssrn.4944970}
}

@article{ChenShen2019,
  author  = {Chen, Lv and Shen, Yang},
  title   = {Stochastic {S}tackelberg differential reinsurance games under time-inconsistent mean-variance framework},
  journal = {Insurance: Mathematics and Economics},
  year    = {2019},
  volume  = {88},
  pages   = {120--137},
  doi     = {10.1016/j.insmatheco.2019.06.006}
}

@article{LiYoung2022,
  author  = {Li, Danping and Young, Virginia R.},
  title   = {Stackelberg differential game for reinsurance: Mean-variance framework and random horizon},
  journal = {Insurance: Mathematics and Economics},
  year    = {2022},
  volume  = {102},
  pages   = {42--55},
  doi     = {10.1016/j.insmatheco.2021.11.006}
}

@article{BoWangZhou2024,
  author  = {Bo, Lijun and Wang, Shihua and Zhou, Chao},
  title   = {A mean field game approach to optimal investment and risk control for competitive insurers},
  journal = {Insurance: Mathematics and Economics},
  year    = {2024},
  volume  = {116},
  pages   = {202--217},
  doi     = {10.1016/j.insmatheco.2024.03.002}
}

@article{GuanHu2022,
  author  = {Guan, Guohui and Hu, Xiang},
  title   = {Time-Consistent Investment and Reinsurance Strategies for Mean-Variance Insurers in ${N}$-Agent and Mean-Field Games},
  journal = {North American Actuarial Journal},
  year    = {2022},
  volume  = {26},
  number  = {4},
  pages   = {537--569},
  doi     = {10.1080/10920277.2021.2014891}
}

@article{HeEtAl2023Robust,
  author  = {He, Yong and He, Lin and Chen, Dengsheng and Liu, Zhezhi},
  title   = {Mean field and $n$-insurers games for robust optimal reinsurance-investment in correlated markets},
  journal = {Journal of Industrial and Management Optimization},
  year    = {2023},
  volume  = {19},
  number  = {9},
  pages   = {6806--6825},
  doi     = {10.3934/jimo.2022240}
}

@article{BaiEtAl2022Hybrid,
  author  = {Bai, Yanfei and Zhou, Zhongbao and Xiao, Helu and Gao, Rui and Zhong, Feimin},
  title   = {A hybrid stochastic differential reinsurance and investment game with bounded memory},
  journal = {European Journal of Operational Research},
  year    = {2022},
  volume  = {296},
  pages   = {717--737},
  doi     = {10.1016/j.ejor.2021.04.046}
}

@article{LiangXiaZou2024,
  author  = {Liang, Zongxia and Xia, Yi and Zou, Bin},
  title   = {A two-layer stochastic game approach to reinsurance contracting and competition},
  journal = {Insurance: Mathematics and Economics},
  year    = {2024},
  volume  = {119},
  pages   = {226--237},
  doi     = {10.1016/j.insmatheco.2024.09.002}
}

@article{CaoEtAl2025,
  author  = {Cao, Jingyi and Li, Dongchen and Young, Virginia R. and Zou, Bin},
  title   = {Co-opetition in reinsurance markets: When {P}areto meets {S}tackelberg and {N}ash},
  journal = {Insurance: Mathematics and Economics},
  year    = {2025},
  volume  = {125},
  pages   = {103133},
  doi     = {10.1016/j.insmatheco.2025.103133}
}

@article{Iglehart1969,
  author  = {Iglehart, Donald L.},
  title   = {Diffusion approximations in collective risk theory},
  journal = {Journal of Applied Probability},
  year    = {1969},
  volume  = {6},
  number  = {2},
  pages   = {285--292},
  doi     = {10.2307/3211999}
}

@article{Browne1995,
  author  = {Browne, Sid},
  title   = {Optimal Investment Policies for a Firm with a Random Risk Process: Exponential Utility and Minimizing the Probability of Ruin},
  journal = {Mathematics of Operations Research},
  year    = {1995},
  volume  = {20},
  number  = {4},
  pages   = {937--958},
  doi     = {10.1287/moor.20.4.937}
}

@article{AsmussenTaksar1997,
  author  = {Asmussen, S{\o}ren and Taksar, Michael},
  title   = {Controlled diffusion models for optimal dividend pay-out},
  journal = {Insurance: Mathematics and Economics},
  year    = {1997},
  volume  = {20},
  number  = {1},
  pages   = {1--15},
  doi     = {10.1016/S0167-6687(96)00017-0}
}

@article{EspinosaTouzi2015,
  author  = {Espinosa, Gilles-Edouard and Touzi, Nizar},
  title   = {Optimal investment under relative performance concerns},
  journal = {Mathematical Finance},
  year    = {2015},
  volume  = {25},
  number  = {2},
  pages   = {221--257},
  doi     = {10.1111/mafi.12034}
}

@article{LackerZariphopoulos2019,
  author  = {Lacker, Daniel and Zariphopoulou, Thaleia},
  title   = {Mean field and $n$-agent games for optimal investment under relative performance criteria},
  journal = {Mathematical Finance},
  year    = {2019},
  volume  = {29},
  number  = {4},
  pages   = {1003--1038},
  doi     = {10.1111/mafi.12206}
}

@article{GerberPafumi1998Utility,
  title   = {Utility functions: From risk theory to finance},
  author  = {Gerber, Hans U. and Pafumi, G{\'e}rard},
  journal = {North American Actuarial Journal},
  volume  = {2},
  number  = {3},
  pages   = {74--91},
  year    = {1998},
  doi     = {10.1080/10920277.1998.10595671}
}

\end{document}